\title{SymmPI: Predictive Inference for Data \\ 
with Group Symmetries}
\author{Edgar Dobriban and Mengxin Yu\footnote{
The authors are with the
Department of Statistics and Data Science,  
University of Pennsylvania,
E-mail addresses: \texttt{dobriban@wharton.upenn.edu},
\texttt{mengxiny@wharton.upenn.edu}. The authors have contributed equally.
}}
\begin{document}
\maketitle
\begin{abstract}
Quantifying the uncertainty of predictions is a core problem in modern statistics.
Methods for predictive inference have been developed under a variety of assumptions, often---for instance, in standard conformal prediction---relying on the invariance of the distribution of the data under special groups of transformations such as permutation groups.
Moreover, many existing methods for predictive inference aim to predict unobserved outcomes in sequences of feature-outcome observations.
Meanwhile, there is interest in
predictive inference under more general observation models (e.g., for partially observed features) and for data satisfying more general distributional symmetries (e.g., rotationally invariant observations in physics).
Here we propose  SymmPI, a methodology for predictive inference 
when data distributions have general group symmetries in arbitrary observation models.
Our methods leverage the novel notion of \emph{distributionally equivariant} transformations, which process the data while preserving their distributional invariances.
We show that SymmPI has valid coverage under distributional invariance and characterize its performance under distribution shift, recovering recent results as special cases.
We apply SymmPI to predict unobserved values associated to vertices in a network, where the distribution is unchanged under relabelings that keep the network structure unchanged.
In several simulations in a two-layer hierarchical model, and in an empirical data analysis example, SymmPI performs favorably compared to existing methods.
\end{abstract}

\tableofcontents

\section{Introduction}

Prediction is one of the most important problems in modern statistical learning.
Since unobserved data cannot always be predicted with certainty, 
quantifying the uncertainty of predictions is a crucial statistical problem, studied in the areas of 
\emph{predictive inference} and \emph{conformal prediction} \citep[e.g.,][]{geisser2017predictive,vovk2005algorithmic}.
Numerous predictive inference methods have been developed under both parametric and nonparametric conditions \citep[e.g.,][etc]{Wilks1941,Wald1943,vovk1999machine,vovk2005algorithmic,lei2013distribution,lei2014classification,Chernozhukov2018,romano2019conformalized}; see the related work section for more examples.

Among these, conformal prediction (or inference) has been gaining increasing attention recently because it can lead to
prediction sets with
finite-sample coverage guarantees
under reasonable conditions on the data, such as the exchangeability of the datapoints.
Moreover, this exchangeability condition is preserved under natural permutation-equivariant maps \citep[see e.g.,][]{dean1990linear,kuchibhotla2020exchangeability}. 
This implies that residuals constructed from statistical learning methods that are invariant with respect to the data---such as $M$-estimators---remain exchangeable, and can be used for conformal inference. 
Conformal prediction has been applied and extended
to a wide range of statistical machine learning problems, including 
non-parametric density estimation and regression 
\citep{lei2013distribution,lei2014distribution,lei2018distribution},
quantile regression \citep{romano2019conformalized},
survival analysis \citep{candes2023conformalized,gui2022conformalized}, etc.

At the same time, predictive inference methods have been developed under assumptions different from exchangeable independent and identically distributed data, 
including for datapoints $Z_n$ in
sequential observation models called
\emph{online compression models} 
leading to data $Z_1,\ldots, Z_{n-1} \in \mZ_0$ for some space $\mZ_0$,
and for non-sequential models called \emph{one-off-structures} \citep{vovk2005algorithmic}. 
These are closely related to the classical statistical notion of \emph{conditional ancillarity}
\citep{cox2006principles,dobriban2023joint}.
Methods have also been developed under more concrete assumptions such as 
hierarchical exchangeability \citep{lee2023distribution}, 
exchangeable network data \citep{lunde2023conformal},
and invariance under a finite subgroup of the permutations of the datapoints \citep{Chernozhukov2018}.

However, at the moment, there 
are no {unified} predictive inference methods 
(1) for \emph{arbitrary unobserved functions} of \emph{arbitrary data} {$Z\in \mathcal{Z}$} {beyond exchangeability} (e.g., a network, a function, etc, {where $\mathcal{Z}$ is a measurable space}.) whose distributional symmetries are characterized by an  \emph{arbitrary---possibly infinite and continuous---group} (e.g., a rotation group that arises for rotationally invariant data); 
and 
(2) that enable \emph{processing the data} in flexible ways to keep the distributional symmetries, similar to what is possible in the special case of conformal prediction.
In this paper, we develop such methods. More specifically:
\begin{enumerate}
\item We consider datasets whose distributions are invariant under general---compact Hausdorff topological---groups {$\mathcal{G}$}. We argue that invariance under such groups  {(i.e., for any $g\in \mathcal{G},$ it holds that $\rho(g)Z=_d Z$, where $Z\in \mathcal{Z}$ and $\rho(g)$ is an action of $\mathcal{G}$ on $\mathcal{Z}$)}, 
is of broad interest
and includes in particular invariance under all finite groups (e.g., exchangeability, hierarchical exchangeability, cyclic shifts). 
It also includes invariance under continuous groups such as rotations (and combinations such as rotations and translations), which are of broad interest in the physical sciences. 
For example many quantities are rotationally invariant in physics \citep{gross1996role,robinson2011symmetry,schwichtenberg2018physics}.
Continuous (e.g., rotational) invariances are also common for image data.
The datasets we consider are not restricted in any other way, and in particular we are not limited to sequential observations $Z_1,\ldots, Z_n$.

\item {In \Cref{sec:ex}, we introduce four examples that we will study through the paper:
(a) Exchangeable data; 
(b) Network-structured data;
(c) Rotationally invariant data; and
(d) Two-layer hierarchical data.
The first one is included to show how our methods recover existing ones from conformal prediction.
For (b), we study prediction sets on networks, where random variables associated with a network are assumed to have a distribution invariant under any transformation that keeps the network structure unchanged (i.e., to the automorphism group of the graph).
For (c), we focus on rotationally invariant observations that are made in a particular coordinate system, such as for a variety of physical quantities. 
We return to (d) {in point 5} below.}

\item For our general theory, we introduce the key notion of \emph{distributional equivariance} of transformations  $V:\mathcal{Z}\rightarrow \tilde{\mathcal{Z}}$ 
of the data
for a measurable space  $\tilde{\mathcal{Z}}$.
This means that 
$V(\rho(G)V(Z))=_d \tilde{\rho}(G)V(Z)$, where
$G$ follows the uniform distribution $U$ over $\mathcal{G}$
and $\tilde{\rho}$ is an action of $\mathcal{G}$ over $\tilde{\mathcal{Z}}$.
We show that it is enough to preserve \emph{distributional invariance}. We explain how this allows us to process data to extract meaningful features that enable constructing accurate prediction sets, and---for instance---adapting to data heterogeneity in a 
two-layer hierarchical model
example.
We also allow \emph{arbitrary group actions} on the input and output spaces, not limited, e.g., to permutation actions.
We allow the observed component of the data be 
determined by an arbitrary function  that we call the \emph{observation function}.
For instance, this can include any part of the features in a supervised learning setting.
In particular, we are not limited to predicting an outcome $Y_n$ after observing 
feature-label pairs 
$(X_1,Y_1), \ldots, (X_{n-1},Y_{n-1})$ and features $X_n$.

\begin{algorithm}[h]
\caption{SymmPI: Predictive Inference for Data with Distributional Symmetry}\label{alg:symm_pi}
\begin{algorithmic}[1]
\Require Data $z$ satisfying distributional invariance under group $\mG$.
Distributionally equivariant map $V: \mZ \to \tilde\mZ$. 
Test function $\psi: \tilde\mZ \to \R$.
Observation function $\o: \mZ \to \mO$.
Coverage level $1-\alpha \in [0,1]$.
Observed data $\zo = \o(z)$.

\For{$z \in \mZ$}
\State Compute $\tilde z = V(z)$
    \State Compute $t_{\tz}
 =
Q_{1-\alpha}
\left(\psi(\trG  \tz),\, G\sim U \right)$ \Comment{{$Q_{1-\alpha}$ denotes the $1-\alpha$th quantile}}
\EndFor

\State \textbf{return} $T^{\mathrm{SymmPI}}(\zo) = \{z\in \mZ: \psi(V(z))\le t_{V(z)},\, \o(z)= \zo \}$
\end{algorithmic}
\end{algorithm}
\item 
We propose \OurMethod, a method for predictive inference for data with distributional symmetries in the above setting {with arbitrary test function $\psi$ and observation function $\o$}. {
High level pseudo-code is presented in \Cref{alg:symm_pi}. To implement the algorithm and construct the prediction set, we primarily need to calculate $t_{V(z)}$, which represents the $1-\alpha$ quantile of $\psi(\trG V(z))$, by sampling $G\sim U$, for $z\in\mZ$. 
This calculation assumes that the data $Z$ exhibits distributional invariance, and $V$ is distributionally equivariant under the group $\mathcal{G}$. 
Prediction sets for (a) exchangeable data\footnote{When $\psi(\tilde{z})=\tilde{z}_{n+1}$ and $\o(z)=(z_1,\cdots,z_n)$, by letting $V$ be a non-conformity score applied coordinatewisely, we recover the standard conformal prediction, see section \ref{cpex} for more details.}, (b) network data, and (c) rotationally invariant data mentioned in point 2 above are can be developed
by implementing \Cref{alg:symm_pi} according to their distributionally invariant structure,
presented in detail in \Cref{sec:ex} and \Cref{sec:ex_re}. 
As an illustration, we will present a detailed discussion and algorithm for example (d) in point 5 below.} 

{On the theory side}, we show that {the prediction set $T^{\mathrm{SymmPI}}(\zo)$ returned by} \OurMethod~has coverage greater than or equal to the nominal level, and not much more than that, under distributional invariance.
We bound the over-coverage in terms of a group-theoretical quantity (the number of orbits of the action of the group).
We further bound the impact 
of distribution shift, i.e., the lack of distributional invariance, 
on the coverage.
Finally, we introduce a \emph{non-symmetric version of \OurMethod} for the 
distribution shift case where the processing algorithm is not distributionally equivariant, and provide associated coverage guarantees.
In the special case of conformal prediction, we recover recent results of \cite{barber2023conformal}.
\begin{algorithm}[h]
\caption{Two-layer hierarchical model example:
When observing data from multiple potentially heterogeneous sub-populations, \OurMethod~can adapt to heterogeneity across sub-populations, unlike conformal prediction (which treats all datapoints symmetrically). As shown experimentally in \Cref{sec:twolayer-detail}, this improves precision. 
}\label{alg:symm_pi_unsup}
\begin{algorithmic}[1]
\Require 
Observed data $(z_i^{(k)})_{(i,k)\in [M]\times [K]\setminus {(M,K)}}$ from $K$ sub-populations/tasks/branches. {Let $\psi(\tilde{z})=\tilde{z}_M^{(K)}.$}
Constant $c \ge 0$; default $c=2$.
Coverage level: $1-\alpha \in [0,1]$. \\ {\Comment{Below, we construct $V$}}
\For{$k \in [K]$} 
\State Compute $\bar{z}_k = \frac{1}{M}\sum_{j=1}^M z_{j}^{(k)}$ \Comment{Within-branch means}
\State Let $\hat\sigma_k^2 = 1$ if $M = 1$, else $\hat\sigma_k^2 = \frac{1}{M-1}\sum_{j=1}^M(z_j^{(k)}-\bar{z}_k )^2$  \Comment{Within-branch variances}
\EndFor
\State Compute $\bar{z} = \frac{1}{K}\sum_{k=1}^K \bar z_{k}$
\For{$k \in [K]$} \Comment{Grand mean across branches}
\State Define $A_{k,c} = \{|\bar{z}_k-\bar{z}|\le c\hat\sigma_k/\sqrt{M}\}$ \Comment{Check closeness of branch means to grand mean}
\State Find $\tilde{z}_{i}^{(k)} = |z_{i}^{(k)}-[\bar{z}I(A_{k,c})+\bar{z}_kI(A_{k,c}^{\complement})]|/\hat\sigma_k$, $i \in [M]$ \Comment{Center at branch/grand mean}
\EndFor
\State \textbf{return} Prediction set for $z_M^{(K)}$:  the values of $z_M^{(K)}$ satisfying $\tilde{z}_M^{(K)}\le Q_{1-\alpha}((\tilde{z}_{i}^{(k)})_{k=1,i=1}^{K,M})$, where $Q_{1-\alpha}$ denotes the $1-\alpha$-th quantile, {and $\tilde{z}=V(z)$}.
\end{algorithmic}
\end{algorithm}
\item { Continuing with Example (d) introduced in \Cref{sec:ex}, 
we study in detail hierarchical two-layer models with several sub-populations \citep{dunn2022distribution,sesia2023conformal,wang2024conformal},
also known as meta-learning with several tasks \citep{park2022pac}. 
We design a 
data processing architecture for this setting.

A simple instance of our method is shown in \Cref{alg:symm_pi_unsup}, 
where we consider scalar-valued data
$Z_{i}^{(k)}, i\in [M]$, $k\in [K]$ 
with a hierarchical structure specified as follows:
There are $K$ distributions $P_k$, $k\in [K]$ and independent samples 
$Z_{i}^{(k)}\sim P_k, i\in [M]$ of size $M$ are generated from each.
We observe all these except $Z_{M}^{(K)}$, and aim to predict the unobserved value.
Further,  
$P_k,k\in [K]$
are drawn independently from a common distribution $\mathcal{P}$.
Unlike conformal prediction---which treats all datapoints symmetrically---\OurMethod~can adaptively leverage all $K$
datasets $(Z_{i}^{(k)})_{i\in [M]}$, for $k=1,\ldots,K$, 
despite their potential heterogeneity.
\OurMethod~achieves this by adaptively centering the observations either at their sub-population mean (given a high heterogeneity) or at the grand mean (given a low heterogeneity).
We show that \OurMethod~performs favorably compared to prior methods, including standard conformal prediction and the algorithm from \cite{dunn2022distribution}.} 
\end{enumerate}

Our paper is structured as follows: In \Cref{sec:prelim}, we introduce preliminaries from group theory used in our work, and notions of distributional equivariance and invariance. In \Cref{sec:related-work}, we provide a detailed review of previous research relevant to our study. In \Cref{sec:SymmPI}, we introduce our novel approach, referred to as \OurMethod, and discuss its underlying theoretical principles and guarantees. Additionally, in \Cref{sec:twolayer}, we illustrate the practical application of \OurMethod~through a two-layer hierarchical model and substantiate its effectiveness through numerical experiments.
A software implementation of the methods used in this paper, along with the code necessary to reproduce our numerical results,
is available at \url{https://github.com/MaxineYu/Codes_SymmPI}.

{\bf Notation.} For a positive integer $m\ge 1$, the $m$-dimensional all-ones vector is denoted as $1_m = (1,1,\ldots,1)^\top \in \R^m$
and the all-zeros vector is 
$0_m = (0,\ldots,0)^\top \in \R^m$. 
We denote $[m]:=\{1,2,\ldots,m\}$, and for $j\in [m]$, the $j$-th standard basis vector by $e_j = (0,\ldots, 1, \ldots, 0)^\top$, where only the $j$-th entry equals unity, and all other entries equal zero. 
For two random objects $X,Y$, we denote by $X=_dY$ that they have the same distribution.
For a probability distribution $\Gamma$ and a random variable $X\sim \Gamma$, we may write the probability that $X$ belongs to a measurable set $A$ as $P(X\in A)$, $P_X(A)$, $P_{\Gamma}(X\in A)$, $P_{X\sim \Gamma}(X\in A)$, $\Gamma(X\in A)$, or $\Gamma(A)$. 
For a cumulative distribution function (c.d.f.)~$F$ on $\R$, and $\alpha\in [0,1]$, the $1-\alpha$-th population quantile is
$q_{1-\alpha}(F) = F^{-1}({1-\alpha}) = \inf\{x : F(x) \ge {1-\alpha}\}$, with $q_{1-\alpha}(F) = \infty$ if the set is empty.
The $1-\alpha$-quantile of the random variable $X$, for $\alpha\in[0,1]$, is $Q_{1-\alpha}(X)$.
For  $c\in \R^k$,
let $\delta_c$ be the point mass at $c$.
For a vector $v\in \R^m$, 
$Q_{1-\alpha}(v)=Q_{1-\alpha}(v_1,\ldots,v_m)$
denotes
$Q_{1-\alpha}(\sum_{i=1}^m \delta_{v_i}/m)$.

\section{Preliminaries}\label{sec:prelim}

We introduce our predictive inference method based on the notions of \emph{distributional equivariance} and \emph{invariance}.

\subsection{Review of Group Theoretic Background}

First we provide a self-contained review of some basic material from group theory that is required in our work.
We refer to e.g., \cite{giri1996group,diaconis1988group,nachbin1976haar,folland2016course,diestel2014joys,tornier2020haar} for additional details.
Readers may skip ahead to 
\Cref{defs-disteq}
 and refer back to this section as needed.

A \emph{group} $\mG$
is a set endowed with a binary operation ``$\cdot$" which\footnote{The sign ``$\cdot$" is dropped for brevity when no confusion arises, and we write 
for
$g,g'\in \mG$, 
$g g' = g\cdot g'$.} 
is associative, in the sense that for all
$g,g',g'' \in \mG$, 
$(g\cdot g') \cdot g'' = g\cdot (g' \cdot g'')$.
Further, a group has an identity element (or unit, or neutral element) denoted as $1_\mG$ or  $e_\mG$, 
such that 
for all
$g \in \mG$, 
$e_\mG g = g e_\mG = g$.
The subscript of the identity is dropped if no confusion can arise.
Finally, each group element $g$ has an inverse $g^{-1}$ such that $g\cdot g^{-1} = 1_\mG$.

A key example is the \emph{symmetric group}
$\S_n$ of permutations of $n\ge 1$ elements
$\S_n=\{\pi:[n]\to [n]\mid \pi\, \mathrm{ permutation}\}$, 
where the multiplication $\pi\cdot\pi'$ corresponds to the composition $\pi\circ\pi'$ of the permutations.
Moreover the identity element $1_{\mG}$ is the identity map with 
$1_{\mG}(x) = x$ for all $x\in[n]$, 
and the group inverse of any permutation $\pi$ is 
its functional inverse $\pi^{-1}$.
Other important groups are the group $\O(n)$ of orthogonal rotations and reflections of $\R^n$
and the special orthogonal group $\SO(n)$ of rotations of $\R^n$.

For a group $\mG$,
the map $\rho: \mG \times \mZ \to \mZ$ 
is an \emph{action} of $\mG$ on 
$\mZ$, 
if 
for all $g,g'\in \mG$ and $z\in \mZ$,
$\rho(gg',z)=\rho(g,\rho(g',z))$; 
and if 
for all $z\in \mZ$,
$\rho(e,z)=z$.
We denote $\rho(g,z):=\rho(g)z$ for both non-linear and linear actions $\rho$. 
This notation takes special meaning when $\rho$ acts linearly, in which case $\rho$ is called a \emph{representation} and we think of $\rho(g):\mZ\to \mZ$ as a linear map.
\begin{example}[Permutation action]\label{perma}
For any space $\mZ_0$,
the symmetric group $\S_n$ acts on $\mZ_0^n$ by the \emph{permutation action} $\rho$, which 
permutes the coordinates of the input, 
such that for all $g\in \S_n$
and $z\in \mZ_0^n$,
$g \cdot x  :=\rho(g) x  := (z_{g^{-1}(1)},\ldots,z_{g^{-1}(n)})^\top$.    
\end{example}

For a general group, the \emph{orbit} of $z$ under $\rho$ is the set $O_z= \{\rho(g)z, g\in \mG\}$, 
and includes the subset of $\mZ$ that can be reached by the action of $\mG$ on $z$.
For instance the orbit of $(1,2)^\top$ under $\S_2$ is $\{(1,2)^\top,(2,1)^\top\}$, while that of $(1,1)^\top$ is $\{(1,1)^\top\}$.

Certain groups are also topological spaces \citep{munkres2019topology}, with associated open sets. In that case,
one can construct the Borel sigma algebra generated by the open sets.
For certain groups---Specifically, for compact Hausdorff topological groups---there is a \emph{``uniform" (Haar) probability measure} $U$ 
over the group endowed with the Borel sigma algebra
\citep[see e.g.,][]{diestel2014joys}. 
For a finite group such as $\S_n$, this is the discrete uniform measure. 
In general, the Haar probability measure satisfies that for any $g\in \mG$, and $G\sim U$, 
we have $gG \sim U$.
We will only consider groups that have a Haar probability measure.

\subsection{Distributional Equivariance and Invariance}
\label{defs-disteq}

We consider a dataset $Z$, 
belonging to a measurable space\footnote{All spaces, sets, and functions will be measurable with respect to appropriate sigma-algebras, which will be kept implicit for simplicity.} $\mZ$, such as an Euclidean space. 
This will also be referred to as the \emph{complete dataset}, 
because we observe only part of $Z$; as explained later in \Cref{sec:predict_region}.
This dataset is completely general:
as special cases, it can 
represent
labeled observations
in a supervised learning setting, i.e., 
$Z = ((X_1,Y_1), \ldots, (X_n,Y_n))^\top$, 
or unlabeled observations in unsupervised learning, i.e.,
$Z = (X_1 \ldots, X_n)^\top$.

The complete data $Z$ has
an unknown distribution $P$ belonging to a set $\mP$ of probability distributions.
Consider a measurable map $V:\mZ\to\tmZ$
for some measurable space $\tmZ$, which we can think of as a transformation of the data.
This transformation can either be designed by hand, or learned in appropriate ways.
For instance, in a supervised learning setting where
$Z = ((X_1,Y_1), \ldots, (X_n,Y_n))^\top \in (\mX\times \mY)^n$, 
and for a predictor $\hmu:\mX\to \mY$ learned based on $Z$, 
we may have $V(Z) = (|Y_1-\hmu(X_1)|, \ldots, |Y_n-\hmu(X_n)|)^\top$.
Further examples and discussion will be provided in our illustrations in \Cref{sec:twolayer}.

We consider a known group $\mG$ that acts on the complete data space $\mZ$ by the action $\rho$ of $\mG$, 
and on the transformed data space $\tmZ$ 
by the action $\tir$ of $\mG$.
When
$Z = ((X_1,Y_1), \ldots, (X_n,Y_n))^\top$, these can be the standard permutation action of the symmetric group $\S_n$ from Example \ref{perma}, such that 
$\rho(g) Z = ((X_{g^{-1}(1)},Y_{g^{-1}(1)}), \ldots, (X_{g^{-1}(n)},Y_{g^{-1}(n)}))^\top$, and
$\rho(g) V(Z) = (|Y_{g^{-1}(1)}-\hmu(X_{g^{-1}(1)})|, \ldots, |Y_{g^{-1}(n)}-\hmu(X_{g^{-1}(n)})|)^\top$.

A key property of the map $V$ 
that we will use to construct prediction regions
is that $V$ respects the symmetry of the group, in a distributional sense.
This is formalized in our definition of \emph{distributional equivariance} given below.
For two random objects $X,Y$, recall that we denote by $X=_dY$ that they have the same distribution.


\begin{definition}[Distributional equivariance]\label{dedef}
We say that the map $V:\mZ\to \tmZ$ is $\mG$-\emph{distributionally equivariant} (with respect to the  actions $\rho,\tir$ on $\mZ,\tmZ$ respectively, 
and over the class $\mP$ of probabilities), 
when for 
all $P\in \mP$, 
for $Z\sim P$ and for  an 
independently drawn group element 
$G\sim U$ from the uniform probability distribution $U$ over $\mG$,
we have the equality in distribution
\beq\label{d}
V(\rG Z)=_d \trG V(Z).
\eeq
\end{definition}

This means that the distribution of a randomly chosen action of $\mG$ on $Z$, transformed by $V$, is equal to 
the distribution found by first  transforming $Z$ by $V$, 
and then randomly acting on it by $\mG$.
In a distributional sense, the random action of $\mG$ and the deterministic transform $V$ ``commute". 
This definition generalizes the classical notion of \emph{deterministic} $\mG$-equivariance, which requires that 
for \emph{all} $z\in\mZ$ and \emph{all} $g\in \mG$,
\beq\label{e}
V(\rg z)=  \trg V(z).
\eeq

Deterministic equivariance is widely studied in
the mathematical area of
representation theory \citep[e.g.,][]{fulton2013representation}.
Distributional equivariance only requires the equality of the distributions of 
$V(\rG Z)$ and $\trG V(Z)$ for \emph{random} $Z,G$, 
whereas 
deterministic equivariance requires \eqref{e} to hold for
for \emph{all} $z\in\mZ$ and \emph{all} $g\in \mG$.
Deterministic equivariance clearly implies the distributional version. 
We characterize these conditions in \Cref{disp}, 
showing 
that distributional equivariance is a strictly more general condition than the deterministic version. {Moreover, 
consider the following example to illustrate the difference between distributionally 
and deterministically equivariant maps.
For some positive integer $M$,
we consider the action of the group of cyclic shifts $z\mapsto z+a$ modulo $M$ on itself.
We show in \Cref{disp} that deterministic equivariance requires affine maps $z\mapsto z+a$  modulo $M$, for any $a$, while distributional equivariance is satisfied by \emph{all maps}}.

A key example of distributional equivariance is \emph{distributional invariance}, namely $V(\rG Z)$ $=_d$ $V(Z)$.
This condition states that after applying the map $V$, the distributions of the original data $Z$  and the data $\rG Z$ acted upon by the group are equal.
This is a special case of Definition \ref{dedef} 
{for an identity output representation $\tir(g)$} for all $g\in \mG$. 
For the identity map $V$,
taking any $g' \in \mG$, we can deduce from it that
$\rgp Z =_d \rgp\rG Z =_d \rG Z =_d Z$. 
Hence for the identity map $V$, distributional equivariance implies that $Z=_d \rgp Z$ for \emph{all} $g' \in \mG$. 
This latter condition has been widely studied; for instance in 
analyzing randomization tests \citep{dobriban2021consistency} and
data augmentation \citep{chen2020group,chatzipantazis2021learning}.
We analyze this condition further in \Cref{dip}.

Consider a fixed $z\in \mZ$ and let $O_z = \{\rg z: g\in \mG\} $ be the orbit of $z$ under $\mG$.
The distribution of $\rG z$ when $G\sim U$
can be viewed as 
a \emph{uniform distribution over the orbit} $O_z$ with the sigma-algebra generated by the intersection of $O_z$
with the sigma-algebra over $\mZ$.
Since this distribution is the same regardless of the distribution $P$ of $Z$, the data $Z$ is conditionally ancillary given the orbits, see also \cite{chen2020group,dobriban2023joint}. 
This shows that distributional invariance as in   
$\rG Z$ $=_d$ $Z$ is a form of conditional ancillarity. 
Conditional ancillarity is one of the most general conditions under which finite-sample valid predictive inference methods have been designed (see \Cref{sec:related-work}).

\subsection{Examples}
\label{sec:ex}
{In this paper, we will mainly discuss four examples. 
Since we will later consider only part of  $Z$ as observed, {to present the core concept without involving excessive technical complexity}, we will here let $Z$ contain $n+1$ observations, 
where the $(n+1)$st will later not be fully observed. 
\begin{itemize}
    \item {\bf Setting (a): Exchangeable data}.
Let $Z=(Z_1,\ldots,Z_{n+1})^\top$ have exchangeable coordinates, 
where each $Z_i$ belongs to some space $\mZ_0$.
This is captured by the permutation group $\S_{n+1}$
permuting the coordinates as 
$\rho(g)Z = (Z_{g^{-1}(1)},$ $\ldots,Z_{g^{-1}(n+1)})^\top$ for all $g\in \S_{n+1}$.
\item {\bf Setting (b): Network-structured data.}
Consider an undirected---possibly weighted---graph with vertex set $[n+1]$ and a symmetric adjacency matrix $A \in [0,\infty)^{(n+1)\times(n+1)}$. 
To each vertex $i\in[n+1]$,
we associate the random variable $Z_i$, and assume that 
\emph{the distribution of the random vector $Z$ is unchanged after relabeling the vertices subject to keeping its structure---as captured by $A$---unchanged}.
This corresponds to invariance under relabeling the vertices by permutations in the graph's automorphism group $\mG = \mathrm{Aut}(A) \subset \S_{n+1}$,
leaving the graph structure unchanged.  
\item {\bf Setting (c): Rotationally invariant data.}
Consider a dataset $Z_1, \ldots, Z_{n+1} \in \R^p$,
for some positive integer $p$,
of observations that are exchangeable and have a jointly rotation-invariant distribution.
 This can occur for rotationally invariant observations that are made in a particular coordinate system, such as for a variety of physical quantities. 
 Many of the fundamental laws of physics can derived from the principle that those laws are independent of coordinate systems, see e.g., \cite{gross1996role,robinson2011symmetry,schwichtenberg2018physics}.
Specifically, we consider $(Z_1^\top, \ldots$, $  Z_{n+1}^\top)^\top $ $=_d (Z_{\pi^{-1}(1)}^\top, \ldots, Z_{\pi^{-1}(n+1)}^\top)^\top$ for any permutation $\pi \in \S_{n+1}$, 
and moreover
 $(Z_1^\top, \ldots$, $Z_{n+1}^\top)^\top $ $=_d (OZ_1^\top, \ldots$, $OZ_{n+1}^\top)^\top$
 for all orthogonal matrices $O \in \O(p)$. 
Thus, the data distribution is invariant under the direct product $\S_{n+1}\times \O(p)$.
\item {\bf Setting (d): Two-layer hierarchical data.}
We consider data
$Z_{i}^{(k)}, i\in [M]$, $k\in [K]$ 
with a hierarchical structure specified as follows:
There are $K$ distributions $P_k$, $k\in [K]$ and independent samples 
$Z_{i}^{(k)}\sim P_k, i\in [M]$ of size $M$  are generated from each.
Further,  
$P_k,k\in [K]$
are drawn {independently} from a common distribution $\mathcal{P}$.
This setting can be useful in meta-learning, sketching, and for clustered data; we will revisit it in \Cref{sec:twolayer}.
The data distribution is invariant under permutations that map each sample $D_k = \{Z_{i}^{(k)}, i\in [M]\}$ into another sample $D_{k'}$, for $k,k'\in[K]$.
\end{itemize}
}

\section{Related Works}\label{sec:related-work}

There is a great deal of related work, and we can only review the most closely related ones.
The idea of prediction sets dates back at least to the pioneering works of \protect\cite{Wilks1941}, \protect\cite{Wald1943}, \protect\cite{scheffe1945non}, and \protect\cite{tukey1947non,tukey1948nonparametric}.
More recently conformal prediction has emerged as a prominent methodology for constructing prediction sets \protect\citep[see, e.g.,][]{saunders1999transduction,vovk1999machine,papadopoulos2002inductive,vovk2005algorithmic,Vovk2013, Chernozhukov2018,dunn2022distribution,lei2013distribution,lei2014distribution,lei2015conformal,lei2018distribution,angelopoulos2021gentle,guan2022prediction,guan2023localized, guan2023conformal,romano2020classification,bates2023testing,einbinder2022training,liang2022integrative,liang2023conformal}. 
Predictive inference methods
\citep[e.g.,][etc]{geisser2017predictive}  
have been developed under various assumptions
\protect\citep[see, e.g.,][]{Sadinle2019,bates2021distribution,Park2020pac,park2021pac,park2022pac,sesia2022conformal,qiu2022distribution,li2022pac,kaur2022idecode,si2023pac}. 
{In particular, recent works go beyond simple coverage, allowing conformal risk control \cite{angelopoulos2022conformal}, by viewing permutations as
acting on exchangeable random functions.} 

There are many works on predictive inference going beyond exchangeability.
Some of these involve invariance under specific permutation groups \citep[e.g.,][etc]{dunn2022distribution,sesia2022conformal,lee2023distribution}, and some are designed to work under various forms of distribution shift \citep{tibshirani2019conformal,park2021pac,park2022pac,qiu2022distribution,si2023pac}.

Online compression models \citep{vovk2005algorithmic} 
are a weaker condition than  exchangeability, 
and enable a generalization of conformal prediction. 
In online compression models, a sequence of observations $\sigma_1,\ldots,\sigma_n,\ldots$ is made of datapoints
$Z_1,\ldots,Z_n,\ldots$, where 
  for some space $\mZ_0$ 
and for all $n$,
$Z_n\in \mZ_0$.
It is assumed that for all $n$,
the conditional distribution of $(\sigma_{n-1},Z_n)$ 
given $\sigma_n$ is known.
A one-off structure 
\citep{vovk2022algorithmic}
is the special case of this in a non-sequential setting, and is closely related to the statistical concept of conditional ancillarity. 
Compared to this, our work can be viewed as the special case of 
one-off 1-structures with a trivial object space $X$, 
when we further have distributional invariance under a group, 
for which the summary statistics are the orbits of the group action.
As we discuss below and in \Cref{dip}, 
distributional invariance has the crucial advantage that there is a broad class of maps---distributionally equivariant ones, including equivariant neural nets---that preserve it; which enables processing the data in a flexible way. 
This does not generally hold under conditional ancillarity or one-off structures. 
Moreover, 
we give bounds on the coverage of our methods under distribution shift, and develop flexible non-symmetric versions of our method.

\cite{Chernozhukov2018} develop predictive inference methods assuming invariance under subgroups of permutation groups.
Compared to this, our work handles the broader class of compact topological groups, which are both technically more challenging, and are of interest in a broader class of applications.
Moreover, we have a more general observation model, focus on the notion of distributional equivariance to enable flexible data processing, and provide methods and guarantees under distribution shift.

Joint coverage regions \citep{dobriban2023joint} are a methodology aiming to unify prediction sets and confidence regions. They have been developed
for general observation models
under general 
conditional ancillarity.
Our focus here differs, as we introduce the notion of distributional equivariance to enable flexible data processing, as well as methods and guarantees applicable to distribution shift.

In a different line of work, invariance and equivariance have been widely studied in other aspects of statistics and
machine learning.
In statistics, this dates back at least to permutation tests \citep{eden1933validity,fisher1935design,pitman1937significance,pesarin2001multivariate,ernst2004permutation,pesarin2010permutation,pesarin2012review,good2006permutation,anderson2001permutation,kennedy1995randomization,hemerik2018exact}.
{In particular, recent work shows how to construct permutation tests using arbitrary distributions of permutations \citep{ramdas2023permutation}.}
Other key early work with general groups includes \cite{lehmann1949theory,hoeffding1952large}.
For more general discussions of invariance in statistics see \cite{eaton1989group,wijsman1990invariant,giri1996group}.
In machine learning, work with invariances
dates back at least to \cite{fukushima1980neocognitron,lecun1989backpropagation} with the development of convolutional neural nets (CNNs), which build translation equivariant layers via convolutions. 
These have been extended to discrete 
and continuous rotation invariance 
\citep{cohen2016group,WeilerHS18}
and to more general Lie groups \citep{pmlr-v119-finzi20a}. 
Alternative approaches include those based on invariant theory \citep{villar2021scalars,villar2022dimensionless,blum2022equivariant} and data augmentation  \citep{lyle2016analysis,chen2020group}.

\section{\OurMethod: Predictive Inference with Group Symmetries}
\label{sec:SymmPI}

\subsection{Constructing Prediction Regions}\label{sec:predict_region}
Here we introduce our \OurMethod~method for predictive inference
when the data has distributional symmetry or invariance.
Our key principle in constructing prediction regions is to leverage the interactions between distributional invariance and equivariance.
Specifically, 
if the full data satisfies the distributional invariance property $V(Z)=_d V(\rG Z)$ when $G\sim U$ and 
if $V$ is distributionally equivariant with respect to $\rho,\tir$ as per Definition \ref{dedef}, 
we have
$$V(Z)=_d V(\rG Z)=_d \trG V(Z).$$
Thus, 
$V(Z)$ is also distributionally invariant, and so
\emph{distributional invariance is preserved by distributionally equivariant maps}.
In the special case of permutation symmetry,
and for the special case of deterministic equivariance,
this simple and key observation has often been used
 in conformal prediction\footnote{For the special case of the symmetric group where $\mG = \S_n$, and for the permutation actions $\rho, \tir$, \cite{dean1990linear} have provided a sufficient condition for a transform $V$ to preserve exchangeability; distributional equivariance is equivalent to their condition in this special case, see Section \ref{deqs}.}
 \citep{vovk2005algorithmic}.
 Here, we aim to vastly extend its reach  
in order to be able to construct prediction sets for data with invariance under
arbitrary compact Hausdorff topological groups; motivated by the examples described above.

A bit more generally, 
\emph{distributional equivariance is preserved by composition}.
Suppose
that for some space $\bar \mZ$ and action $\bar\rho$ on $\bar \mZ$, a map
$h:\tilde \mZ \to \bar \mZ$ is distributionally equivariant with respect to input and output actions $\tir,\bar\rho$. 
This implies that when $G\sim U$, 
and for the random variable $\tilde Z = V(Z)$ over $\tmZ$,
we have
    $h(\trG  \tilde Z)=_d \bar\rho(G)  h(\tilde Z).$
 Hence, we find
 $h(V(\rG Z))=_d h(\trG V(Z)) =_d  \bar\rho(G)  h(V(Z))$,
 and thus $h\circ V$ is $\mG$-distributionally equivariant.
 It follows that we can compose arbitrary $\mG$-distributionally equivariant maps and preserve distributional invariance.\footnote{This is the key reason for which we focus on distributional invariance, as opposed to other forms of conditional ancillarity, to construct prediction sets. 
 For more general conditional ancillarity, this property does not need to hold, and this limits the types of data processing maps $V$ we can use; see \Cref{dip}.}
This property enables us to construct prediction sets based on processing the data in several equivariant steps, for instance
via equivariant neural nets.
We will argue that compositionality helps with expressivity, 
and will leverage this to 
design predictive inference methods that can adapt to heterogeneity, see Section \ref{sec:twolayer}.

Thus, we let $Z$ satisfy distributional invariance, and let $V$ be distributionally equivariant.
We do not observe $z$, but instead 
observe some function $\o(z)$ of $z$, where $\o:\mZ\to\mO$ is an \emph{observation function} for a space $\mO$.
For instance, when $z = (z_1, \ldots,z_n, z_{n+1})^\top$ consists of $n+1$ datapoints, 
in an unsupervised case,
for any $j\in[n]$ we 
can take the observation function 
$\o(z) = (z_1, \ldots,z_{j})^\top$
to be the first $j$ observations.
In a supervised case where  $z = ((x_1,y_1), \ldots, (x_{n+1},y_{n+1}))^\top$,  
we can take
for any $j\in[n]$ 
the observation function 
$\o(z) = ((x_1,y_1), \ldots, (x_{j},y_{j}), x_{j+1}, \ldots, x_{n+1})^\top$
to be the first $j$ labeled observations and the remaining features.
We are interested to predict the unobserved part of $Z$. 
Since the observed part does not necessarily uniquely determine the unobserved part, we aim to predict a \emph{set of possible values}.

We consider a test function $\psi:\tmZ\to \R$, such that we want to include small values of $\psi(V(Z))$ in our prediction set. 
This map generalizes the standard idea of a non-conformity score from conformal prediction \citep{vovk2005algorithmic}. 
For instance, in a supervised learning setting where
$V(Z) = (|Y_1-\hmu(X_1)|, \ldots, |Y_{n+1}-\hmu(X_{n+1})|)^\top$ and $\hmu:\mX\to \mY$ is a predictor, 
we can take 
$\psi(V(Z)) = |Y_{n+1}-\hmu(X_{n+1})|$ aiming to predict unobserved outcomes $Y_{n+1}$ that are close to the values $\hmu(X_{n+1})$ predicted via $\hmu$.
If $\hmu$ is an accurate predictor and $Y_{n+1}$ is tightly centered around $\hmu(X_{n+1})$, this may lead to informative prediction sets.

Given some coverage target $1-\alpha\in[0,1]$,
intuitively, we may want to choose a fixed threshold---or, critical value---$t$ such that
we have the \emph{coverage bound}
$P(\psi(V(Z)) \le t) \ge 1-\alpha$, and then set $\{z:\psi(V(z)) \le t\}$ as our prediction set.
However, it is not generally clear how to find a fixed threshold $t$.
Instead, we use the distributional equivariance of $V(Z)$, which implies that
 for any function $\psi:\tmZ\to \R$,
and any deterministically $\mG$-invariant $t:\tmZ\to \R$, for which $t_{\trg \tz} = t_{\tz}$ for all $g\in \mG$ and $\tz\in \tmZ$,
\begin{equation*}
    P_Z(\psi(V(Z))\le t_{V(Z)}) 
    = P_{G,Z}(\psi(\trG V(Z))\le t_{\trG V(Z)} )
    = P_{G,Z}(\psi(\trG V(Z))\le t_{V(Z)} ).
\end{equation*}
Motivated by this observation, 
for all $\tz \in \tmZ$,
we set $t_{\tz}$ as the $1-\alpha$-quantile of the random variable $\psi(\trG  \tz)$, where $G\sim U $:
\beq\label{t}
t_{\tz}
 =
Q_{1-\alpha}
\left(\psi(\trG  \tz),\, G\sim U \right). 
\eeq
Again, this generalizes the standard approach from conformal prediction, where the quantile is computed for the uniform distribution over the permutation group \citep{gammerman1998learning,vovk2005algorithmic}.
By definition, 
$P_{G}(\psi(\trG  \tz)\le t_{\tz}) \ge 1-\alpha$ holds for any $\tz$. 
\begin{figure}[H]
\centering
\includegraphics[height=1.9in,width=4.6in]{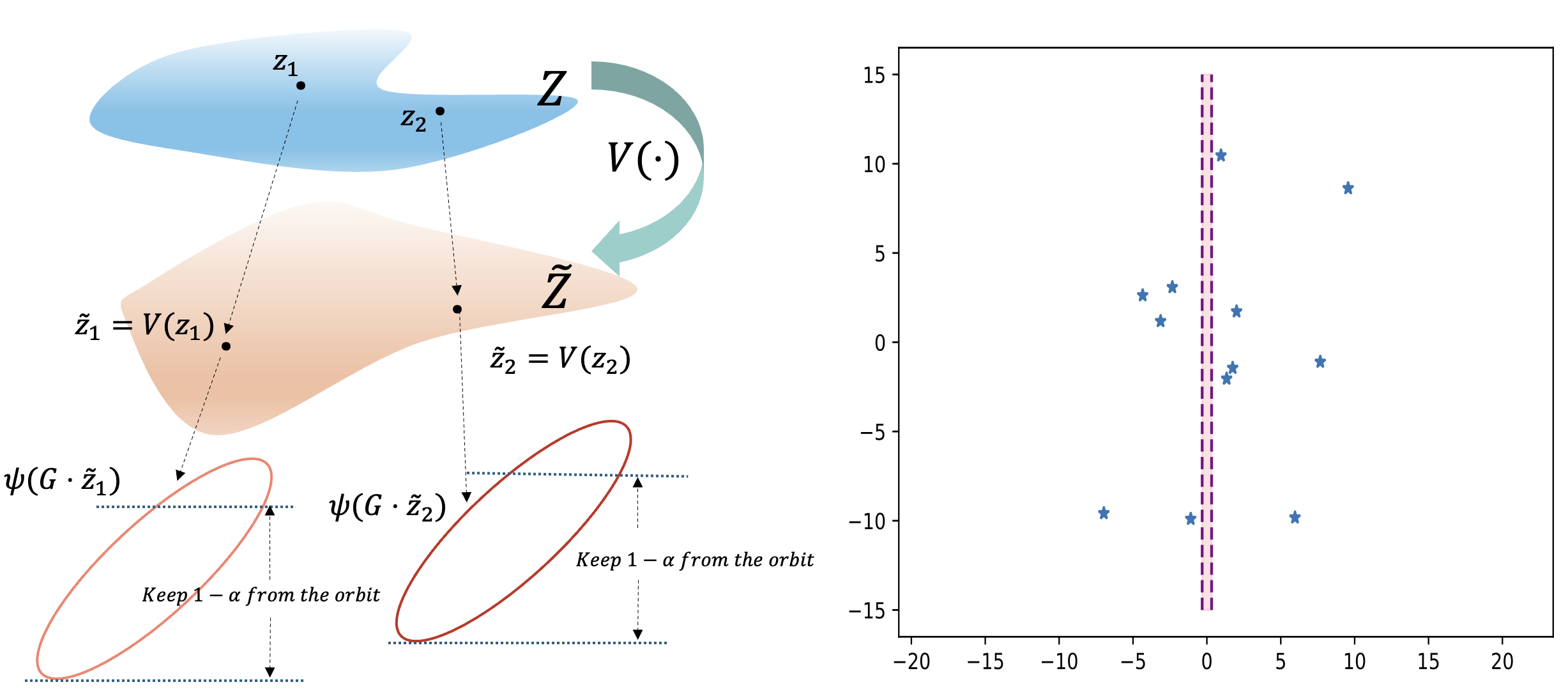}
	\caption{Left: A $1-\alpha$ prediction region extracted from the orbit of $\tilde{z}$, where $\tilde{z}=V(z)$.
 Right: A $95\%$-prediction region $T^{\mathrm\OurMethod}(z_{1:12})$ for $z_{13}$ defined in \eqref{rotation_example}. Here we let $z_i^\top, i\in [13]$ be i.i.d.~random vectors generated from $\N(0,30\cdot I_2)$, where $I_2$ is the $2\times 2$ identity matrix. The blue stars represent 12 observed values, while the white region is the $95\%$-prediction region, as defined in Equation \eqref{rotation_example}. 
 Thus, the pink region depicts the complement of the prediction region, which is  ``safe", in the sense of not having blue stars with 95\% probability. In contrast, the 95\% prediction region constructed via conformal prediction is the whole two-dimensional space. Consequently, using conformal prediction does not yield any ``safe" area; and is thus not informative here. 
 }
	\label{fig:pred_method_ri}
\end{figure}

To take into account the 
observation function $\o:\mZ\to \mO$,
we can simply intersect the prediction region with the 
set of valid observations $\{z: \o(z)= \zo\}$, defining the prediction set
\beq\label{T}
T^{\mathrm\OurMethod}(\zo) = \{z\in \mZ: \psi(V(z))\le t_{V(z)},\, \o(z)= \zo \} .
\eeq
This predictive inference method is applicable when the data has distributional invariance or symmetry, thus we call it \emph{\OurMethod}. 
See \Cref{fig:pred_method_ri} for an illustration, and \Cref{alg:symm_pi} for pseudocode.
This method predicts a set of plausible values for the \emph{full data} $z$. However, we are of course interested in a prediction set for the \emph{unobserved} component of $z$. 
Usually, we can write this unobserved component
as some function $m(z)$ of the data $z$, 
where $m:\mZ\to\mU$ for some space $\mU$; 
and moreover such that 
$z$ is in a one-to-one correspondence with $(\zo,m(z))$.
In that case,  
$T^{\mathrm\OurMethod}$ is equivalent to a prediction set for the unobserved component $m(z)$ of $z$.
For instance, when $z = (z_1, \ldots,z_n, z_{n+1})^\top$ consists of $n+1$ datapoints, 
and the observation function takes values
$\o(z) = (z_1, \ldots,z_{n})^\top$, then we can take $m(z) = z_{n+1}$.

 \subsection{Theoretical Properties}

In this section we study the theoretical properties of our method.
\subsubsection{Coverage Guarantee}
We aim to control the coverage probability $P(Z\in T^{\mathrm\OurMethod}(\Zo ))$, ensuring it is at least $1-\alpha$.
In order to achieve exact coverage $1-\alpha$, 
it is well-known that one may in general need to add a bit of randomization for discrete-valued data \citep{vovk2005algorithmic}.
We now show how this idea can be generalized to our setting.

\begin{definition}[Randomized \OurMethod~Prediction Set]\label{rps}
    For $\tz \in \tmZ$, 
let $F_{\tz}$ be the cumulative distribution function (c.d.f.)~of the random variable $\psi(\trG  \tz),\, G\sim U$, 
and $F'_{\tz}$ be the probability it places on individual points, i.e., for $x\in \R$, 
$F'_{\tz}(x) = F_{\tz}(x)-F^-_{\tz}(x)$, 
where
$F^-_{\tz}(x) = \lim_{y\to x, y<x} F_{\tz}(y) \ge 0$.
Let $\Delta_{\tz} = 0$ if $F_{\tz}'(t_{\tz}) = 0$, and otherwise let $\Delta_{\tz} \in (0,1]$ be 
\beq\label{del}
\Delta_{\tz} = 
\frac{1-\alpha-F_{\tz}^-(t_{\tz})}
{F_{\tz}'(t_{\tz}) }.
\eeq
Consider a random variable $U'\sim \mathrm{Unif}[0,1]$ independent of $Z$, and the randomized \OurMethod~prediction set
\begin{align}
\label{Tr}
T_r(\zo) =
\bigl(\{z: \psi(V(z)) < t_{V(z)}\} 
\cup \{z: 
\psi(V(z))=t_{V(z)},\,
U'  < \Delta_{V(z)}\}
\bigr)
\cap\{z:  \o(z)= \zo\}.
\end{align}
\end{definition}
Clearly, $T_r(\zo) \subset T^{\mathrm\OurMethod}(\zo)$.
Our first result, 
proved in \Cref{pfc1},
shows 
that
the randomized prediction set $T_r$ has coverage \emph{exactly} $1-\alpha$, 
and the deterministic prediction set
$T^{\mathrm\OurMethod}$ has at least $1-\alpha$ and at most a bit higher, depending on the\footnote{Our ``jump" function is superficially connected to the jump function used in conformal risk control \citep{angelopoulos2022conformal}. Indeed, our function measures the discontinuities in the cumulative distribution function of $\psi(\trG  \tz),\, G\sim U$ over the orbit, while their jump function measures the discontinuity of their deterministic loss function, without regard to the randomness.}
``jumps" $F'_{\tz}$
in the distribution of $\psi(\trG  \tz),\, G\sim U$; generalizing results from conformal prediction \citep{vovk2005algorithmic,lei2013distribution}.

\begin{theorem}[Coverage guarantee under distributional invariance]
\label{c1} 
For some group $\mG$ with a uniform probability measure $U$,
let the full data $Z\in\mZ$ 
satisfy the distributional invariance property $Z=_d \rG Z$ when $G\sim U$, for some action $\rho$ of the group $\mG$ on $\mZ$. 
Consider $\alpha\in[0,1]$,
a space $\tmZ$, 
a $\mG$-distributionally equivariant function $V: \mZ\to \tmZ$ as per Definition \ref{dedef},  
and a map $\psi: \tilde \mZ \to \R$.
Let the observed data be $\Zo  = \o(Z)$, 
for an observation function $\o:\mZ\to \mO$ and some space $\mO$.
Then the 
\textnormal{\OurMethod} prediction region from \eqref{T}, 
and the randomized prediction region from \eqref{Tr}
have valid coverage, lower bounded by $1-\alpha$, and also---with  $F'$ from Definition \ref{rps}---upper bounded as
\begin{align}\label{lb}
1-\alpha
=
P(Z\in T_r(\Zo ))
\le 
P(Z\in T^{\mathrm\OurMethod}(\Zo ))
\le
1-\alpha+\E[F'_{V(Z)}(t_{V(Z)})].
\end{align}
\end{theorem}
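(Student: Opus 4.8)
The plan is to reduce the whole statement to a one-dimensional probability statement about the conditional distribution of the test statistic given the orbit. First I would condition on the orbit $O_Z = \{\rho(g)Z : g\in\mathcal{G}\}$ of $Z$ under the action $\rho$; by the discussion following Definition~\ref{dedef}, the conditional distribution of $Z$ given $O_Z$ is the uniform distribution over the orbit, i.e. $Z \mid O_Z =_d \rho(G)z_0$ with $G\sim U$ and $z_0$ any fixed representative of the orbit. This is exactly the conditional-ancillarity property that is highlighted in the preliminaries, and it holds for \emph{every} $P\in\mathcal{P}$ satisfying distributional invariance. Since $T^{\mathrm{SymmPI}}$ and $T_r$ only care about $z$ through $\psi(V(z))$, $t_{V(z)}$, and $\o(z)$, and since $\o(z)$ is fixed by the event we are coverage-checking against, I would aim to show the coverage bound holds conditionally on $O_Z$, then integrate out.

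Next I would exploit distributional equivariance to pin down the conditional law of $\psi(V(Z))$ given the orbit. The key chain, already essentially displayed in Section~\ref{sec:predict_region}, is that distributional invariance of $Z$ plus distributional equivariance of $V$ gives $V(Z) =_d V(\rho(G)Z) =_d \tilde\rho(G)V(Z)$. The subtlety is that I need this at the level of the conditional distribution given $O_Z$, not just marginally. Here I would use that $t_{\tz}$ and the randomization threshold $\Delta_{\tz}$ are $\mathcal{G}$-invariant functions of $\tz$ (i.e. $t_{\tilde\rho(g)\tz} = t_{\tz}$ by definition of the quantile over the orbit, since $\tilde\rho(g)$ just relabels the orbit, and similarly $\Delta$ depends only on $F_{\tz}$ which is orbit-invariant), so that on a fixed orbit $t_{V(z)}$ is a \emph{constant} $t_\ast$ and $\Delta_{V(z)}$ is a constant $\Delta_\ast$. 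Then conditionally on $O_Z$, $V(Z) = \tilde\rho(G)\tilde z_0$ for $G\sim U$ and a fixed $\tilde z_0$, so $\psi(V(Z))$ conditionally has exactly the c.d.f.\ $F_{\tilde z_0} = F_\ast$ used to define $t_\ast$ and $\Delta_\ast$.

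Then the bounds become elementary facts about a single c.d.f.\ $F_\ast$ with quantile $t_\ast = q_{1-\alpha}(F_\ast)$. For the deterministic set, conditional coverage is $P(\psi(V(Z))\le t_\ast \mid O_Z) = F_\ast(t_\ast) \ge 1-\alpha$ by definition of the quantile, and $F_\ast(t_\ast) = F_\ast^-(t_\ast) + F'_\ast(t_\ast) \le (1-\alpha) + F'_\ast(t_\ast)$ because $F_\ast^-(t_\ast)\le 1-\alpha$ (the quantile is the smallest point where $F$ reaches $1-\alpha$, so the left limit is still below it). For the randomized set, conditional coverage is $F_\ast^-(t_\ast) + \Delta_\ast F'_\ast(t_\ast)$, which by the definition \eqref{del} of $\Delta_\ast$ equals $F_\ast^-(t_\ast) + (1-\alpha - F_\ast^-(t_\ast)) = 1-\alpha$ when $F'_\ast(t_\ast)>0$, and equals $F_\ast(t_\ast)=1-\alpha$ trivially when $F'_\ast(t_\ast)=0$ (in that case $F_\ast$ is continuous at $t_\ast$ and the quantile is attained with no jump). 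Intersecting with $\{z:\o(z)=\Zo\}$ is automatic since we evaluate coverage at $z=Z$ where $\o(Z)=\Zo$ holds. Finally I would take expectation over $O_Z$: $P(Z\in T^{\mathrm{SymmPI}}(\Zo)) = \E[F_{V(Z)}^-(t_{V(Z)}) + F'_{V(Z)}(t_{V(Z)})] \le 1-\alpha + \E[F'_{V(Z)}(t_{V(Z)})]$, and the randomized version integrates to exactly $1-\alpha$.

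The main obstacle I anticipate is the measure-theoretic bookkeeping around conditioning on orbits: ensuring the orbit map and the "uniform distribution over the orbit" are well defined as a regular conditional distribution on the abstract measurable space $\mathcal{Z}$ (the excerpt flags that sigma-algebras are kept implicit), and verifying that $V$, $\psi$, $t_{(\cdot)}$, $\Delta_{(\cdot)}$, and $\o$ are measurable enough that the conditional statements make sense and the Fubini-type integration over $O_Z$ is justified. A clean way around fussing with regular conditional distributions directly is to avoid explicit conditioning: use distributional equivariance marginally to write, with $G\sim U$ independent of $Z$,
\begin{align*}
P(\psi(V(Z))\le t_{V(Z)}) = P(\psi(\tilde\rho(G)V(Z))\le t_{\tilde\rho(G)V(Z)}) = P(\psi(\tilde\rho(G)V(Z))\le t_{V(Z)}),
\end{align*}
using $\mathcal{G}$-invariance of $t$ in the last step, and then for fixed $V(Z)=\tz$ apply the one-dimensional quantile inequalities to the inner probability over $G$ and take expectation over $\tz\sim V(Z)$. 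This sidesteps orbit conditioning entirely while giving the same bounds; I would present this route as the main argument and mention the orbit picture as intuition.
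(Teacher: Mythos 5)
Your proposal is correct, and the ``marginal'' route you settle on at the end is exactly the paper's proof: use $Z=_d\rho(G)Z$ plus distributional equivariance to get $V(Z)=_d\tilde\rho(G)V(Z)$, invoke $\mathcal{G}$-invariance of $t_{(\cdot)}$ (and $\Delta_{(\cdot)}$), condition on $V(Z)=\tz$ using independence of $G$, and apply the elementary quantile facts $1-\alpha\le F_{\tz}(t_{\tz})\le 1-\alpha+F'_{\tz}(t_{\tz})$ and $F^-_{\tz}(t_{\tz})+\Delta_{\tz}F'_{\tz}(t_{\tz})=1-\alpha$ before integrating over $Z$. The orbit-conditioning picture you describe first is a valid alternative packaging of the same argument (it needs the almost-every-orbit characterization of equivariance, as in the paper's Proposition on distributionally equivariant maps), but the paper, like your main argument, avoids it.
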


There are various conditions under which we can upper bound the slack $\E[F'_{V(Z)}(t_{V(Z)})]$ in the coverage error.
For instance, if 
$F'_{\tz}(x)  \le \tau$ for all $x\in\R$ and $\tz \in \tmZ$, 
then the coverage is at most $1-\alpha+\tau$.
To be more concrete,
consider the set $\mH_{\tz}=\{g \in \mG: \psi(\trg \tz)=\psi(\tz)\}$ 
consisting of the group elements that fix $\psi(\tz)$ under the action $\tir$.
As we show, the size of this set controls the jumps in $F_{\tz}$,
under the algebraic condition that $\mH_{\tz}$ is a \emph{subgroup} of $\mG$.
Recall that a set $\mH\subset \mG$ is a subgroup of $\mG$, if $\mH$ is also a group; this is denoted by $\mH \leq \mG$.

In particular, 
we will see in examples that often there is a set $\Omega \subset \tmZ$
such that 
$P(V(Z)\in \Omega) = 1$ and such that
for
every $\tz',\tz''\in \Omega$,
$\mH_{\tz'}=\mH_{\tz''}$. 
Then 
for $\mH =\{g: \psi(\trg \tz)=\psi(\tz),\, \text{ for all } \tz\in \Omega\}$, we have 
$\mH_{\tz'}=\mH$.
It readily follows that $\mH_{\tz'}=\mH$ is a subgroup of $\mG$. 
Recalling that for a finite set $A$, we let $|A|$ be the number of elements---cardinality---of $A$, we have the following result:

\begin{proposition}[Coverage upper bound]\label{ub}
If $\mG$ is finite, and 
if for all $g\in\mG$ the set   $\mH_{\tz}=\{g: \psi(\trg \tz)=\psi(\tz)\}$
is a subgroup of $\mG$, then 
$    P(Z\in T^{\mathrm\OurMethod}(\Zo ))\le 1-\alpha+\E|\mH_{V(Z)}|/|\mG|.$
In particular, if there is a set $\Omega \subset \tmZ$
such that 
$P(V(Z)\in \Omega) = 1$ and such that
for
every $\tz',\tz''\in \Omega$,
$\mH_{\tz'}=\mH_{\tz''}$,
then 
for $\mH =\{g: \psi(\trg \tz)=\psi(\tz),\, \text{ for all } \tz\in \Omega\}$, we have 
$$    P(Z\in T^{\mathrm\OurMethod}(\Zo ))\le 1-\alpha+|\mH|/|\mG|.$$
\end{proposition}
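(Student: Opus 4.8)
The plan is to build directly on \Cref{c1}, which already supplies the upper bound $P(Z\in T^{\mathrm\OurMethod}(\Zo))\le 1-\alpha+\E[F'_{V(Z)}(t_{V(Z)})]$, and then to control the expected jump $\E[F'_{V(Z)}(t_{V(Z)})]$ by exploiting the algebraic structure of $\psi$ under $\tir$. Since $\mG$ is finite, $U$ is the uniform counting measure, so for every $\tz\in\tmZ$ and $x\in\R$ the jump of $F_{\tz}$ at $x$ is just the normalized cardinality of a level set of $g\mapsto\psi(\tir(g)\tz)$:
\[
F'_{\tz}(x)=P_{G\sim U}\bigl(\psi(\tir(G)\tz)=x\bigr)=\frac{|L_x(\tz)|}{|\mG|},
\qquad
L_x(\tz):=\{g\in\mG:\psi(\tir(g)\tz)=x\}.
\]
Everything then reduces to identifying the sizes $|L_x(\tz)|$. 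I would also record at the outset that both $F_{\tz}$ and $t_{\tz}$ depend on $\tz$ only through its orbit $O_{\tz}$ (since $\tir(Gh)\tz=\tir(G)\tir(h)\tz$ and $Gh=_d G$), so I may freely write $F_O,t_O$ on an orbit $O$.

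The first key step is an elementary \emph{coset identity}: if $L_x(\tz)\ne\emptyset$, pick $g_x\in L_x(\tz)$ and set $\tz'=\tir(g_x)\tz$, so that $\psi(\tz')=x$ and $\tz'$ lies in the orbit of $\tz$. Then $\tir(g)\tz=\tir(g g_x^{-1})\tz'$ for all $g$, so $g\in L_x(\tz)$ iff $\psi(\tir(g g_x^{-1})\tz')=\psi(\tz')$ iff $g g_x^{-1}\in\mH_{\tz'}$; hence $L_x(\tz)=\mH_{\tz'}g_x$ and $|L_x(\tz)|=|\mH_{\tz'}|$. Taking $x=\psi(\tz)$ recovers $L_{\psi(\tz)}(\tz)=\mH_{\tz}$ exactly, so $F'_{\tz}(\psi(\tz))=|\mH_{\tz}|/|\mG|$: this is precisely the assertion that, under the subgroup hypothesis on each $\mH_{\tz}$, the size of $\mH_{\tz}$ measures the jump of $F_{\tz}$ at the attained value $\psi(\tz)$.

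This closes the ``in particular'' case cleanly. Because $\mG$ is finite every orbit is finite, and $P(V(Z)\in\Omega)=1$ forces $O\subseteq\Omega$ for every orbit $O$ carrying positive probability (conditionally on $V(Z)\in O$, distributional invariance makes $V(Z)$ uniform on $O$, so missing any point of $O$ would contradict $P(V(Z)\in\Omega)=1$); hence $\mH_{\tz'}=\mH$ for every $\tz'\in O$, and $\mH=\mH_{\tz'}$ is a subgroup by hypothesis. By the coset identity every nonempty level set $L_x(\tz')$ with $\tz'\in O$ has size exactly $|\mH|$, and since $F_O$ is supported on finitely many points its $(1-\alpha)$-quantile $t_O$ is one of them; therefore $F'_{V(Z)}(t_{V(Z)})=|\mH|/|\mG|$ almost surely, and \Cref{c1} yields $P(Z\in T^{\mathrm\OurMethod}(\Zo))\le 1-\alpha+|\mH|/|\mG|$.

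The step I expect to be the real obstacle is the general bound $1-\alpha+\E|\mH_{V(Z)}|/|\mG|$ when the $\mH_{\tz}$ are not assumed constant along orbits. Conditioning on the orbit $O$ and using that $F_O(x)<1-\alpha$ for $x<t_O$ gives $F_O(t_O)=F^-_O(t_O)+F'_O(t_O)\le (1-\alpha)+|\mH_{\tz^*}|/|\mG|$, where $\tz^*\in O$ is the point with $\psi(\tz^*)=t_O$; taking expectations over $O$ reduces the claim to $\E[|\mH_{\tz^*}|]\le\E[|\mH_{V(Z)}|]$, i.e.\ that the stabilizer-type subgroup sitting at the quantile value is, on average over orbits, no larger than a uniformly random one along the orbit. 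This is immediate when $|\mH_{\tz}|$ is constant along orbits --- the situation in the ``in particular'' case and in the examples of \Cref{sec:ex} --- but it is the one point that genuinely uses structure beyond ``each $\mH_{\tz}$ is a subgroup''. My plan here would be either to invoke such orbit-constancy (or normality) of the $\mH_{\tz}$ wherever it is available, or else to fall back on the coset identity's unconditional consequence $F'_{V(Z)}(t_{V(Z)})\le \max_{\tz'\in O}|\mH_{\tz'}|/|\mG|$, which gives the cruder bound with an orbitwise maximum in place of the average.
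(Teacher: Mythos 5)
Your treatment of the ``in particular'' case is correct, and your coset identity (the level set $L_c(\tz)=\{g:\psi(\tir(g)\tz)=c\}$ equals $\mH_{\tz'}g_c$ for $\tz'=\tir(g_c)\tz$, hence has size $|\mH_{\tz'}|$) is essentially a corrected and sharpened version of the argument the paper gives: once $\mH_{\tz'}=\mH$ on the orbit, every jump of $F_{\tz}$ equals $|\mH|/|\mG|$, and \Cref{c1} yields the bound $1-\alpha+|\mH|/|\mG|$. Your reduction of the full-probability set $\Omega$ to orbit-constancy via uniformity of $V(Z)$ on its orbit is also fine for finite $\mG$.

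The step you flag as the ``real obstacle''---the general bound $1-\alpha+\E|\mH_{V(Z)}|/|\mG|$---is not something you are missing relative to the paper: it is precisely where the paper's own proof is flawed. The paper asserts that every level set of $g\mapsto\psi(\tir(g)\tz)$ is a coset of $\mH_{\tz}$, hence of size $|\mH_{\tz}|$, so that $F'_{\tz}(x)\le|\mH_{\tz}|/|\mG|$ for all $x$; your identity shows the level set at value $c$ is instead a coset of $\mH_{\tz'}$ for an orbit point $\tz'$ attaining $\psi(\tz')=c$, which can be larger than $\mH_{\tz}$. Concretely, let $\mG=\mathbb{Z}_4$ act on $\tmZ=\mZ=\mathbb{Z}_4$ by translation, $V$ the identity, $\o$ constant, $Z$ uniform (the only invariant law), and $\psi(0)=\psi(2)=0$, $\psi(1)=1$, $\psi(3)=2$. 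Every $\mH_{\tz}$ is a subgroup ($\mH_0=\mH_2=\{0,2\}$, $\mH_1=\mH_3=\{0\}$), yet $F'_{1}(0)=1/2>|\mH_1|/|\mG|=1/4$, contradicting the paper's intermediate claim; moreover $t_{\tz}=0$ for all $\tz$ whenever $1-\alpha\le 1/2$, so the coverage is $P(\psi(Z)\le 0)=1/2$, which exceeds $1-\alpha+\E|\mH_{Z}|/|\mG|=(1-\alpha)+3/8$ as soon as $1-\alpha<1/8$. So the first claim of \Cref{ub} fails at this level of generality, while it does hold---exactly as you prove---whenever $|\mH_{\tz}|$ is constant along orbits (in particular under the $\Omega$-hypothesis, which is what the paper actually uses, e.g.\ to recover the $1/(n+1)$ over-coverage bound). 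Your proposed fallbacks (orbit-constancy, or the cruder bound with the orbitwise maximum $\max_{\tz'\in O_{\tz}}|\mH_{\tz'}|/|\mG|$, which the coset identity does deliver) are the right repairs; the honest conclusion is that only those versions, not the stated general bound, are established by either your argument or the paper's.
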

See \Cref{pub} for the proof.
As we will see below,
in many applications of interest,
$\mH_{\tz}$ are indeed subgroups of $\mG$ for all $\tz$, and often $\mH_{\tz}$ 
does not depend on $\tz$.
In particular, in this case
$|\mG|/|\mH|$ is the number of \emph{cosets} of the subgroup $\mH$ in $\mG$. Thus the above general result gives a group-theoretic characterization of the slack in the coverage error.

We also mention that a split, or split-data,  version of \OurMethod---inspired by inductive or split conformal prediction \citep{papadopoulos2002inductive}---is a special case of our method. 
Let the full data be given by $Z = (Z_{\mathrm{tr}}, Z')^\top$, where $Z_{\mathrm{tr}}$ consists of training data, and
$Z'$ consists of calibration and test data.
Suppose that for some group $\mG_0$, 
the distribution of $Z'$ conditional on $Z_{\mathrm{tr}}$ is $\mG_0$-invariant. Then, we can use the methodology described above, applied to $Z'$ and $\mG_0$ instead of $Z$ and $\mG$. 
This procedure has valid coverage even when 
the  $\mG_0$-distributionally equivariant
map $V$ is not fixed as above, 
but is 
learned using $Z_{\mathrm{tr}}$. 
The key advantage compared to full \OurMethod\,  is that we can fit $V$ once on $Z_{\mathrm{tr}}$, and then use it as a fixed predictor on $Z'$; which can improve computational efficiency compared to \OurMethod.
For $V$ to be a useful predictor, it is beneficial if $Z_{\mathrm{tr}}$ and $Z'$ have a similar structure. 
For instance, this could be the case if  $Z_{\mathrm{tr}}$ also satisfies $\mathcal{G}_0$-equivariance.

Finally, we mention that,
while our results only require $V$ to be distributionally equivariant, in practice there are often more known examples of deterministically  equivariant functions, and so we will typically still take $V$ to be deterministically equivariant. 
However, we believe that our theoretical contribution of introducing distributional equivariance is fundamental, because it reflects a broad condition under which distributional invariance is preserved under compositions. Thus it is a crucial notion for predictive inference methods based on symmetry.

\subsubsection{Extension: Distribution Shift}
\label{eds}

Next, we present an extension of our coverage result for the case of distribution shift. 
To present this result, we need to recall some additional notions.
For the subgroup $\mH \leq \mG$
the set 
$g \mH  = \{gh: h\in \mH\}$ is called a (left) \emph{coset} of $\mH$ in $\mG$.
The set of cosets is denoted as $\mG/\mH:=\{g \mH : g\in \mG\}$.
Then $\mG$
is partitioned into cosets $g\mH$, and
we obtain a set
$S$ 
of \emph{representatives} of cosets in $\mG/\mH$ by choosing an element of each coset.

First, we allow for a distribution shift away from distributional invariance, i.e., $Z\neq _d \rho(G)Z$. 
We provide a general coverage bound for this scenario.
For $g \in \mG$, define the map $\ell_{g}:\tmZ\to\R$ such that for all $\tz \in \tmZ$,
$\ell_{g}(\tz)=\psi(\trg \tz)$.
 Let $\mF=\{\ell_{g},\, g\in \mG\}$ be the set of all maps $\ell_{g}$, $g\in \mG$. 
Now,
$\mH =\{g: \ell_g =\ell_e\}$
is clearly a subgroup of $\mG$.
Hence, 
$\mG$
is partitioned into cosets $g\mH$, 
corresponding to distinct values of $\ell_g$.
Let $U(\mG/\mH)$ be the invariant probability measure over the cosets \cite[][Corollary 4, p. 140]{nachbin1976haar}; 
and identify $\mG/\mH$ with a measurably chosen set $S$ 
of representatives.
Let $G'\sim U(\mG/\mH)$, identified with a random variable over $S$.
For instance, for a finite $\mG$, 
we can identify 
$U(\mG/\mH)$ with the uniform distribution over
a set of representatives 
$\{g_1, \ldots, g_{|\mF|}\}$
of the cosets $\mG/\mH$. 
For any $\tz\in \tmZ$, 
define
$\nu(\tz) = \psi(\tz)-t_{\tz}$
and, with $\textrm{TV}$ denoting total variation distance,
\beq\label{D}
\Delta = 
\E_{G'\sim U(\mG/\mH)}
    \textrm{TV}_Z\Big(\nu(V(Z)),\nu(\tilde{\rho}(G')V(Z))\Big).
\eeq
See \Cref{pfthm_shift} for the proof of the following result.

\begin{theorem}[Coverage guarantee under distribution shift]\label{thm_shift}
Under the conditions of Theorem \ref{c1}, 
even if $Z$ 
does not necessarily satisfy distributional invariance, we have  with $\Delta$ from \eqref{D} that 
\begin{align}\label{ds}
-\Delta\le P\Big(Z\in T^{\mathrm\OurMethod}(\Zo )\Big)- (1-\alpha) \le \Delta+  \E[F'_{V(Z)}(t_{V(Z)})].
\end{align}
\end{theorem}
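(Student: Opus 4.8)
The plan is to reduce the event $\{Z\in T^{\mathrm{SymmPI}}(\Zo)\}$ to the scalar event $\{\nu(V(Z))\le 0\}$ with $\nu(\tz)=\psi(\tz)-t_{\tz}$, and then transport the argument underlying Theorem \ref{c1} across the distribution shift by a total variation bound. First I would note that $Z\in T^{\mathrm{SymmPI}}(\Zo)$ holds exactly when $\psi(V(Z))\le t_{V(Z)}$, the constraint $\o(Z)=\Zo$ being automatic, so $p:=P(Z\in T^{\mathrm{SymmPI}}(\Zo))=P_Z(\nu(V(Z))\le 0)$. Introduce the symmetrized coverage $\bar p:=\E_{G\sim U}P_Z(\nu(\trG V(Z))\le 0)$, with $G$ drawn independently of $Z$. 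Since the threshold map is $\mG$-invariant, $t_{\trg\tz}=t_{\tz}$, conditioning on $Z$ gives $\E_G I(\nu(\trG V(Z))\le 0)=\E_G I(\psi(\trG V(Z))\le t_{V(Z)})=F_{V(Z)}(t_{V(Z)})$ by the definition of $F_{\tz}$ as the c.d.f.\ of $\psi(\trG\tz)$, $G\sim U$. The same bracketing used in the proof of Theorem \ref{c1}, namely $F^-_{\tz}(t_{\tz})\le 1-\alpha\le F_{\tz}(t_{\tz})=F^-_{\tz}(t_{\tz})+F'_{\tz}(t_{\tz})$, then yields after taking $\E_Z$ that $1-\alpha\le\bar p\le 1-\alpha+\E[F'_{V(Z)}(t_{V(Z)})]$.

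Next I would bound $|p-\bar p|$. Writing $p-\bar p=\E_G\big[P_Z(\nu(V(Z))\le 0)-P_Z(\nu(\trG V(Z))\le 0)\big]$ and using that total variation distance dominates $|P(X\in A)-P(Y\in A)|$ over Borel sets $A$, the triangle inequality gives $|p-\bar p|\le\E_{G\sim U}\textrm{TV}_Z\big(\nu(V(Z)),\nu(\trG V(Z))\big)$. The remaining point—and what I expect to be the main obstacle—is to replace this average over $\mG$ by the average over cosets that defines $\Delta$ in \eqref{D}. For this I would observe that $\nu(\trg\tz)=\psi(\trg\tz)-t_{\trg\tz}=\ell_g(\tz)-t_{\tz}$ depends on $g$ only through $\ell_g$, hence is constant on the cosets modulo $\mH=\{g:\ell_g=\ell_e\}$ that index the distinct values of $\ell_g$; consequently $g\mapsto\textrm{TV}_Z\big(\nu(V(Z)),\nu(\trg V(Z))\big)$ is coset-constant. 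Pushing $U$ forward along the quotient map $\mG\to\mG/\mH$ yields the invariant measure $U(\mG/\mH)$ \cite[][Corollary 4, p.~140]{nachbin1976haar}, so the $\mG$-average of this coset-constant, measurable function equals its integral over a measurably chosen set $S$ of representatives against $U(\mG/\mH)$, which is precisely $\Delta$. Hence $|p-\bar p|\le\Delta$.

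Finally, combining the two displays gives $p\ge\bar p-\Delta\ge(1-\alpha)-\Delta$ and $p\le\bar p+\Delta\le(1-\alpha)+\Delta+\E[F'_{V(Z)}(t_{V(Z)})]$, which is \eqref{ds}. I expect only the coset-reduction step to require care: verifying that the total variation integrand is coset-constant (which rests entirely on the $\mG$-invariance of $t$) and that the Haar-measure quotient identification is valid for general compact Hausdorff $\mG$—for finite $\mG$ this is elementary counting over a set of representatives. The total variation inequality and the reuse of the Theorem \ref{c1} computation to control $\bar p$ are routine.
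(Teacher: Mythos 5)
Your proposal is correct and takes essentially the same route as the paper's proof: both decompose the coverage into the symmetrized term $P_{G,Z}\bigl(\psi(\trG V(Z))\le t_{V(Z)}\bigr)$, which is bracketed between $1-\alpha$ and $1-\alpha+\E[F'_{V(Z)}(t_{V(Z)})]$ exactly as in Theorem \ref{c1}, plus a deviation term bounded by $\Delta$ using the total-variation inequality together with the fact that the integrand depends on $g$ only through its coset modulo $\mH$, so the Haar average collapses to the average over $U(\mG/\mH)$. The only difference is the order in which you apply the TV bound and the coset reduction, which is immaterial.
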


Theorem \ref{thm_shift} establishes that, 
even if $Z \neq_d \rho(G) Z$, 
we can derive coverage bounds similar to those found in Theorem \ref{c1}, up to a margin of $\Delta$ as given in equation \eqref{D}. This gap reduces to zero when the distributional invariance property holds (i.e., $Z=_d\rho(G)Z$), in which case $\Delta = 0$.

The aforementioned result relies on
symmetry in two ways. 
First, the quantile $t_{\tilde{z}}$ in equation \eqref{t} is chosen for the uniform probability distribution over the group; or equivalently over representatives of cosets.
Second, the function $V$ is required to be distributionally equivariant with respect to $\mathcal{G}$. 
In \Cref{non-symmetric_algorithm}, 
we introduce a novel algorithm
for scenarios where the aforementioned two symmetry properties do not hold. 
We also provide theoretical coverage guarantees for this algorithm.
Notably, our framework recovers the results presented in Theorems 2 and 3 of \cite{barber2023conformal} as a special case, in the context of studying conformal prediction with the group $\mathcal{G}=S_{n+1}$ and the function
$\psi$ with
$\psi(z)=z_{n+1}$ for all $z$. For more details, we refer to \Cref{non-symmetric_algorithm}.

\subsection{Computational Considerations}

In this section, we discuss computational considerations for our prediction regions. 
Given $V$ and $\psi$, we need to compute the set in \eqref{T}, i.e., 
$\obs^{-1}(\zo) \cap \{z: \psi(V(z))\le t_{V(z)}\}$, where $\obs^{-1}(\zo)$ denotes the preimage of $\zo$ under the map $\obs$,
 for a given $\zo$.
Often, the preimage of the observation map can be characterized in a convenient way; in many cases the observation map selects a subset of the coordinates of $z$, and so its preimage includes the set of observations with all possible values of the missing coordinates.

Moreover, we can write the second set of the above intersection as a preimage under $V$
in the form
$V^{-1} (B)$, where $B =  \{\tz \in\tmZ: \psi(\tz) \le t_{\tz}\}$.
A key step to compute $B$ is to compute the quantiles $t_{\tz}$
over the randomness of $G\sim U$, or---following notation from Section \ref{eds}---$G'\sim U(\mG/\mH)$.
When the number of equivalence classes $\mG/\mH$ is not large, we can calculate the quantile by enumeration. However, if the number of equivalence classes is large, a practical approach is to sample from each equivalence class to approximate the quantile 
Specifically, we can define
\beq\label{t2}
\tilde{t}_{\tz}
 :=
Q_{1-\alpha}
\left(\psi(\tz),\psi(\tilde{\rho}(G_1) \tz),\psi(\tilde{\rho}(G_2)  \tz),\cdots,\psi(\tilde{\rho}(G_M)  \tz)\right), 
\eeq
where $G_1,\cdots,G_M$ are sampled independently from $U(\mG/\mH)$.
  We then define the prediction set
  \begin{align*}
    \tilde{T}^{\mathrm\OurMethod}(\zo) := \{z\in \mZ: \psi(V(z))\le \tilde{t}_{V(z)},\, \o(z)= \zo \}.
    \end{align*}
This has the following coverage guarantee, with a proof deferred to \Cref{pfsample_g}.
\begin{proposition}\label{sample_g}
Under the conditions of Theorem \ref{c1}, we have 
$P(Z\in \tilde{T}^{\mathrm\OurMethod}(\Zo))\ge 1-\alpha.$
\end{proposition}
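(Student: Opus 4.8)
The plan is to reduce the claim to the standard ``exchangeable bag'' argument from conformal prediction. Since $\Zo = \o(Z)$ by construction, the constraint $\o(z) = \Zo$ in the definition of $\tilde{T}^{\mathrm\OurMethod}(\Zo)$ holds for $z = Z$ almost surely, so it suffices to show $P\big(\psi(V(Z)) \le \tilde{t}_{V(Z)}\big) \ge 1-\alpha$, where the probability is taken over the joint randomness of $Z$ and the i.i.d.\ auxiliary group elements $G_1,\ldots,G_M$ used to form $\tilde{t}$ in \eqref{t2}. Write $\tilde Z := V(Z)$. I would first record that $\tilde Z$ is itself distributionally invariant: by distributional equivariance of $V$ and distributional invariance of $Z$, for $G\sim U$ we have $\tir(G)V(Z) =_d V(\rG Z) =_d V(Z)$, that is $\tir(G)\tilde Z =_d \tilde Z$; and, as observed after Definition \ref{dedef}, this is equivalent to $\tir(g)\tilde Z =_d \tilde Z$ for \emph{every fixed} $g\in\mG$.

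The heart of the proof is to show that the $M+1$ real random variables $W_0 := \psi(\tilde Z)$ and $W_j := \psi(\tir(G_j)\tilde Z)$, $j\in[M]$, are exchangeable. Granting this, the standard conformal quantile inequality---for exchangeable real random variables $W_0,\ldots,W_M$ one has $P\big(W_0 \le Q_{1-\alpha}(W_0,\ldots,W_M)\big)\ge 1-\alpha$, see e.g.\ \cite{vovk2005algorithmic}---applied to these $W_j$ yields exactly $P\big(\psi(\tilde Z)\le \tilde{t}_{\tilde Z}\big)\ge 1-\alpha$, since $Q_{1-\alpha}(W_0,\ldots,W_M) = \tilde{t}_{\tilde Z}$ by the definition in \eqref{t2}. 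To establish exchangeability I would introduce an auxiliary $G_0$ drawn from the same law as $G_1,\ldots,G_M$ and independent of everything. The tuple $\big(\psi(\tir(G_0)\tilde Z),\psi(\tir(G_1)\tilde Z),\ldots,\psi(\tir(G_M)\tilde Z)\big)$ is exchangeable, being a permutation-symmetric deterministic function of the i.i.d.\ (hence exchangeable) tuple $(G_0,\ldots,G_M)$ evaluated against the independent $\tilde Z$. It then remains to show this tuple has the same joint law as $(W_0,\ldots,W_M)$---i.e.\ that the leading coordinate $\psi(\tir(G_0)\tilde Z)$ may be replaced by $\psi(\tilde Z)$ without changing the joint law.

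For this last step I would condition on $G_0 = g$, set $\tilde W := \tir(g)\tilde Z$, and use $\tir(G_j)\tilde Z = \tir(G_j g^{-1})\tilde W$. Combining (i) the fixed-$g$ invariance $\tir(g)\tilde Z =_d \tilde Z$ from the first paragraph, (ii) the fact that $G_1 g^{-1},\ldots,G_M g^{-1}$ are again i.i.d.\ with the same law---the Haar probability measure of the compact group $\mG$ is invariant under right translation and under inversion---and (iii) the independence of $\tilde Z$ from $(G_1,\ldots,G_M)$, one obtains $\big(\tir(g)\tilde Z,\, G_1 g^{-1},\ldots,G_M g^{-1}\big) =_d \big(\tilde Z, G_1,\ldots,G_M\big)$ for each $g$, hence also after averaging over $G_0 = g$; applying the map $(\tilde w,h_1,\ldots,h_M)\mapsto\big(\psi(\tilde w),\psi(\tir(h_1)\tilde w),\ldots,\psi(\tir(h_M)\tilde w)\big)$ to both sides gives the required equality of laws. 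One further bookkeeping point: $\psi(\tir(g)\tz)$ depends on $g$ only through the value of $\ell_g$, hence only through its coset class in $\mG/\mH$, so drawing $G_1,\ldots,G_M$ from $U(\mG/\mH)$ as in the statement or from the Haar measure on $\mG$ yields the same $\tilde{t}_{\tz}$, and the argument applies in either formulation.

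I expect the main obstacle to be precisely the second half of the exchangeability claim---slotting $\psi(\tilde Z)$, which carries no group element, into an exchangeable tuple. One cannot appeal to distributional equivariance directly, as \eqref{d} is only an equality of \emph{marginal} laws; the argument must route through the distributional invariance of $\tilde Z$ and then use the translation- and inversion-invariance of the Haar measure, being careful with the conditioning on $G_0$. If one wishes to avoid any subtlety in identifying $\mG/\mH$ with a set of representatives, one can run the whole argument with $G_1,\ldots,G_M$ drawn from the Haar measure on $\mG$ and note that the coverage probability---hence the conclusion---is unchanged.
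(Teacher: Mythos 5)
Your proof is correct and is essentially the argument the paper relies on: the paper omits the details and simply invokes the method of Theorem 4.2 of \cite{dobriban2023joint}, which is precisely this augmentation/exchangeability argument---showing that $(\psi(V(Z)),\psi(\tir(G_1)V(Z)),\ldots,\psi(\tir(G_M)V(Z)))$ is exchangeable under distributional invariance of $Z$ and distributional equivariance of $V$, and then applying the standard conformal quantile bound. Your careful treatment of slotting $\psi(V(Z))$ into the exchangeable tuple via an auxiliary $G_0$, using the fixed-$g$ invariance of $V(Z)$ and the right-translation invariance of the Haar probability measure, together with the remark that sampling from $U(\mG/\mH)$ or from the Haar measure on $\mG$ yields the same quantile, fills in exactly the steps the paper leaves implicit.
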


For the remaining problems, i.e., computing $A$ and finding its preimage under $V$, at the most general level, various approaches can be employed to generate suitable approximate solutions. 
At the highest level of generality, similar computational problems arise in standard conformal prediction, and no general computational approaches are known.
Our setting is similar, and thus computational approaches must be designed on a case by case basis. 
In many cases of interest, including all those presented in the paper, the computational problem simplifies and can be solved conveniently.
One approximate method involves systematically examining a grid of candidate values of $\tz$, and retaining those for which $ \psi(\tz) \le t_{\tz}$.
Further, if we can choose $V$ to be an invertible function whose inverse is convenient to compute, then the preimage under $V$ can be convenient to find.
Otherwise, in general, one may need to search over a grid on $\mZ$ (instead of $\tmZ$) to approximate $V^{-1}(A)$.

\subsection{Examples Revisited}\label{sec:ex_re}
We now revisit the examples from \Cref{sec:ex} in more detail.
\subsubsection{Setting (a): Exchangeable data} 
\label{cpex}
\begin{itemize}
    \item {\bf Data, group, and action.}
    Take
$\mZ = \mZ_0^{n+1}$, for some space $\mZ_0$,
and the group $\mG$ as the permutation group $\S_{n+1}$, acting 
on $\mZ$
by permuting the coordinates as 
$\rho(g)Z :=gZ =$ $(Z_{g^{-1}(1)},$ $\ldots,Z_{g^{-1}(n+1)})^\top$ for $Z=(Z_1,\ldots,Z_{n+1})^\top$.
Then {for any function $V:\mZ\to \mZ$ that applied coordinatewise,}
with for all $z\in\mZ$,
the distributional invariance condition reduces to the vector $Z$ having \emph{exchangeable components}.
\item {\bf Transformations.} If we choose $V$ more generally,
we can recover results 
from
conformal prediction \citep{gammerman1998learning,vovk2005algorithmic}.
Let 
$\tmZ = \R^{n+1}$,
and let $\S_{n+1}$ also act
on $\tmZ$
by the permutation action $\tir$.
In an unsupervised case,
we can take the observation function 
$\o(z) = (z_1, \ldots,z_{n})^\top$, for all $z$.
In a supervised case where  $z = ((x_1,y_1), \ldots, (x_{n+1},y_{n+1}))^\top$,  
we can take
the observation function 
$\o(z) = ((x_1,y_1), \ldots, (x_{n},y_{n}), x_{n+1})^\top$, for all $z$.\\
Next, we set 
$V:\mZ\to\tmZ$ as a permutation-equivariant map with respect to the permutation actions.
Here 
$s(z_j):=s(z_j;z):= [V(z)]_j$
are referred to as the \emph{non-conformity scores}. 
Considering the supervised case for concreteness,
we can take
$\psi(\tz)=\tz_{n+1}$ to be the last coordinate.
Since 
$z$ decomposes as 
$z = (\o(z),y_{n+1})$,
a prediction set for $z$ is equivalent to a prediction set for $y_{n+1}$. 
\item {\bf Properties of prediction sets.} Clearly, $T^{\mathrm\OurMethod}$ from \eqref{T} 
reduces to $z$ such that 
$s(z_{n+1})\le Q_{\beta'}(s(z_1),\ldots,s(z_{n}))$, where $\beta' = \lceil (n+1)(1-\alpha) \rceil/n$. 
This is identical 
to a standard conformal prediction set with non-conformity score $s$.
If $Z$ has exchangeable coordinates,
Theorem \ref{c1} recovers the classical conformal coverage lower bound  
$P(s(Z_{n+1})\le Q_{1-\alpha}(s(Z_1),\ldots,s(Z_{n+1}))) \ge 1-\alpha$
from \cite{gammerman1998learning,vovk2005algorithmic}.
Then, note that 
for $g \in \S_{n+1}$,
$(\trg \tz)_{n+1} = \tz_{g^{-1}(n+1)}$.
Hence, 
$\mH_{\tz}=\{g: \tz_{g^{-1}(n+1)} =\tz_{n+1}\}$.
If
all coordinates of $\tZ$ are distinct---which holds with probability one if
$V$ is injective and
$(Z_1,\ldots, Z_{n+1})^\top$ has a continuous distribution---then
$\mH_{\tz}=\mH = \{g: g(n+1) =n+1\}$ is the stabilizer of the $(n+1)$st element, the subgroup of permutations fixing the last coordinate.
In this case, 
$|\mH|/|\mG| = n!/(n+1)! = 1/(n+1)$,
and \Cref{ub} recovers the classical result
that the over-coverage of conformal prediction is at most $1/(n+1)$ \citep{lei2013distribution}.

\end{itemize}

\subsubsection{Setting (b): Network-structured data} 
\label{g}

\begin{itemize}
    \item {\bf Data, group, and action.} We discuss an illustration of our
methods for random variables whose symmetries are captured by a graph. 
We have $Z=(Z_1,\ldots,Z_{n+1})^\top \in  \mZ_0^{n+1}=:\mZ$ as in \Cref{cpex}.
However, now we have an undirected graph with 
adjacency matrix $A \in [0,\infty)^{(n+1)\times(n+1)}$, 
and $\mG$ is the automorphism group $\mG = \mathrm{Aut}(A) \subset \S_{n+1}$
acting on $\mZ$ by the action $\rho$ that permutes the coordinates of $Z$.
The elements of $\mG$ are
permutations $g$ such that, when viewed as linear maps $\R^{n+1}\to \R^{n+1}$, we have
$g A g^\top = A$.
This setting generalizes exchangeability, which can be recovered by taking---for instance---the identity matrix $A = I_{n+1}$.
\begin{figure}[H]
\centering
\includegraphics[height=1.8in,width=4.2in]{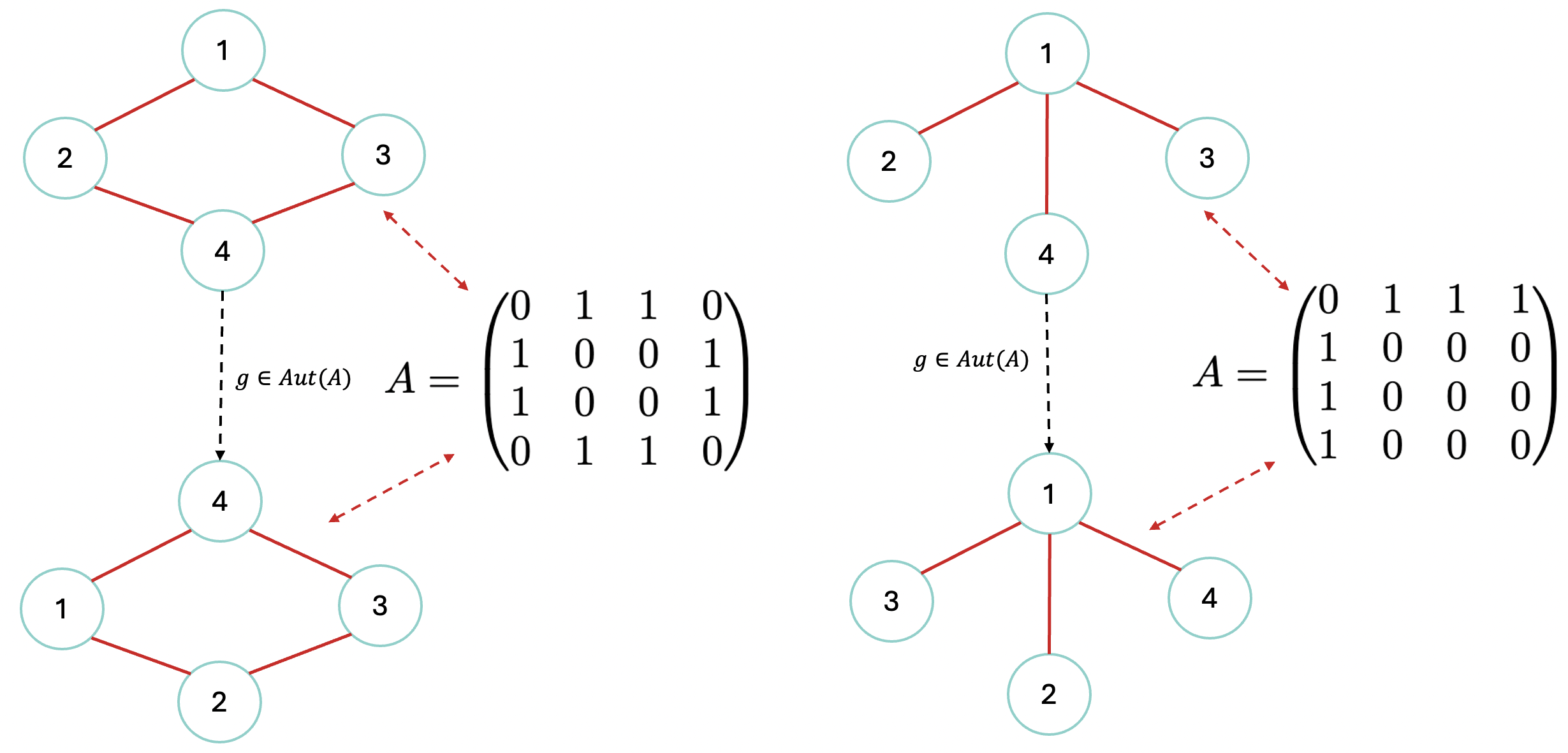}
	\caption{Examples of undirected graphs along with their adjacency matrices. Relabeling the vertices of a graph according to its automorphism group leaves the adjacency matrix unchanged.}
	\label{fig:pred_method_ri2}
\end{figure}
\item {\bf Transformations.} For some space $\tmZ$,
consider $V:\mZ\to\tmZ$, such that for some action $\tilde\rho$ of $\mG$, $V$ is distributionally equivariant. Then, based on Proposition \ref{diem} in the Appendix, 
for all $\tz$ in the image of $V$, 
we must have
the following equality of the sizes of sets  for all $g $:
$|V^{-1}(\tz)|=|V^{-1}(\trg \tz)|$; where $V^{-1}(c)$ denotes the preimage of the element $c\in \tmZ$ under $V$. 
This condition states that the number of elements mapping to any specific $\tz$ is the same as the number mapping to any other element $\trg \tz$ in the orbit of $\tz$. 

In contrast, deterministic equivariance requires that for all $z$ and $g \in \mathrm{Aut}(A)$, $V(g z) = \trg V(z)$. 
In machine learning, many graph neural net (GNN) architectures satisfying deterministic equivariance have been developed. 
A prominent example are message-passing graph neural networks (MPGNNs), see e.g., \cite{gilmer2017neural,xu2018representation}.
Here, 
for some depth $L$,
we define layers $z^0 := z$, and $z^1, \ldots, z^L $ sequentially.
For any $\ell\ge0$ and any $i\in[n+1]$,
the $i$-th coordinate
of $z^{\ell+1}$ is defined
by summing 
the values of a function $\lambda_1$
over the neighborhood  $N(i)$ of node $i$ in the adjacency matrix of the initial graph, and applying another function $\lambda_0$ as:
\begin{align}\label{u}
    z_{i}^{\ell+1}=\lambda_0\Bigg(z_i^{\ell},\sum_{j\in N(i)}\lambda_1(z_i^\ell,z_j^\ell)\Bigg):=F_\ell(z^{\ell})_i,
\end{align}
 The message passing neural network is
$\textrm{MPGNN}_L(z):=F_{L}\circ F_{L-1}\circ \ldots\circ F_1(z)$ for all $z\in \mZ$.
It is well known that any MPGNN is deterministically $\mG$-equivariant for $\tir=\rho$ being the permutation action, namely
$\textrm{MPGNN}_L(g\cdot z)= g\cdot \textrm{MPGNN}_L(z)$;
and hence also distributionally $\mG$-equivariant.
\item {\bf Properties of prediction sets.}
If it is feasible to enumerate the automorphism group, then we can use the prediction set from \eqref{T}. However, in general determining the automorphism of a graph is a hard problem; see e.g., \cite{babai2016graph} for a closely related problem.
We discuss a coarsening approach to partially overcome this in Section \ref{coarse}.

Consider now a simplified setting, where 
$\tmZ = \mZ$,
$V:\mZ\to \mZ$ is the identity transform, 
and  $\rho,\tir$ are the permutation actions. 
We can take the test function
$\psi$ such that 
$\psi(z)= z_{n+1}$ for all $z$,
in which case we are predicting $z_{n+1}$ in a graph. 
Let $\mH_{n+1}$  be the stabilizer subgroup of $n+1$ in $\mG$, i.e., $\mH_{n+1} = \{g\in \mG: g(n+1)=n+1\}$. 
Let $b = |\mG/\mH_{n+1}|$ be the number of equivalence classes of the quotient  $\mG/\mH_{n+1}$ and 
take a collection $g_1, \ldots, g_b \in \mG $ of representatives of the equivalence classes. 
By the orbit-stabilizer theorem \citep[][Proposition 6.8.4]{artin2018algebra}, $B = \{g_1(n+1), \ldots, g_b(n+1)\}\subset[n+1]$ is the orbit of $n+1$ under $\mG$ acting on $[n+1]$.
Then, 
for $\tz\in\tmZ$,
the quantile $t_{\tilde{z}}$ is presented as 
$Q_{1-\alpha}(\{\tilde{z}_j,$ $j\in B\})$. As a special case, when $\mathcal{G}=\S_{n+1}$, the orbit of $n+1$ is all of $[n+1]$.
Then the quantile reduces to the one used in standard conformal prediction.

There are many other choices of the test function $\psi$. 
Suppose for simplicity that $\mZ_0 = \R$.
Then, we can take 
for example
$\psi(x) = (e_{n+1} - 1_{B}/b)^\top x$ for all $x$, which 
measures the difference between the unknown value and the average of its orbit.

\begin{figure}[ht]
	\centering
	\includegraphics[width=1\textwidth]{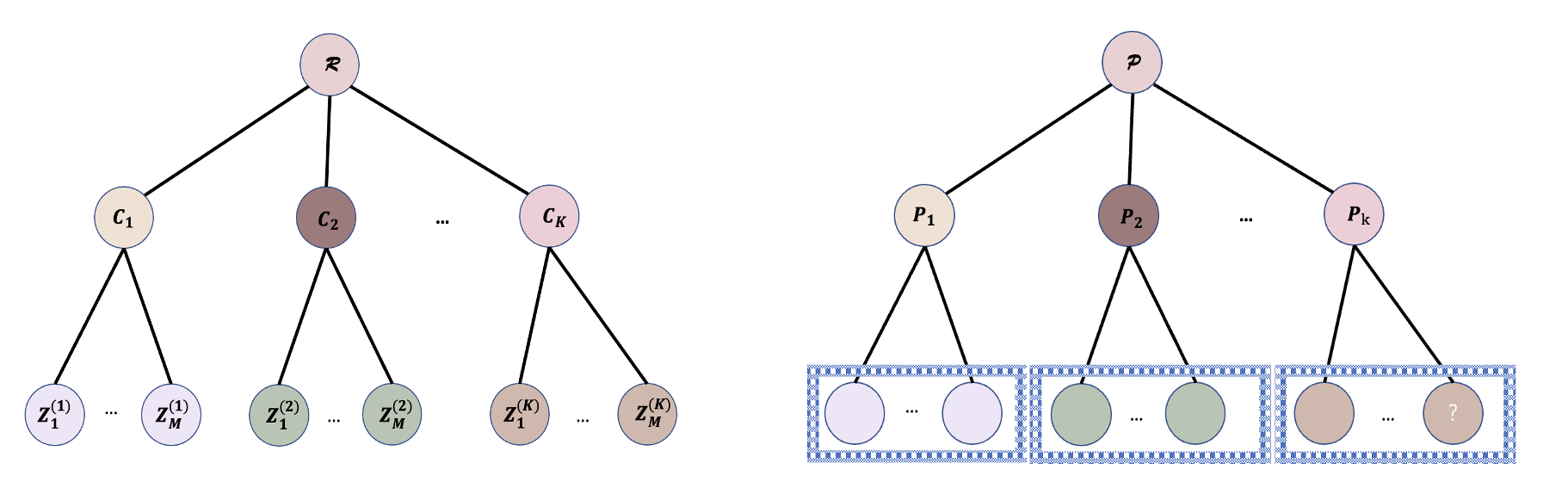}
	 \caption{Left: Example of a depth-two tree-structured graphical model.
       Right: Example of a two-layer hierarchical model.}
           \label{fig:tree2}
\end{figure}
\item  {\bf Tree-structured graphical model example.}
Next, we use a simple tree-structured graphical model example to showcase the above properties; see \Cref{fig:tree2} (left).
The \emph{rooted tree} $\Gamma$ has a \emph{root} with an associated random variable $R$. 
The root has $K\ge 1$ children, which are associated with random variables $C_1\ldots,C_K$ at its first layer; 
these define $K$ \emph{branches}. 
Each of the nodes $C_k$, $k\in[K]$ in the first layer
 has $M\ge 1$ children 
 with associated random variables
 $Z_{1}^{(k)},\ldots,Z_{M}^{(k)}$ in the second layer.
We assume that in the associated graph describing the symmetry, 
each node is connected precisely to its children.
Then, we assume that 
the joint distribution of
the random vector $T=(R,C_1\ldots,C_K,Z_{1}^{(1)},\ldots,Z_{M}^{(K)})^\top \in \mathcal{Z} $ 
associated with 
this depth-two tree
satisfies $g\cdot \Gamma=_d \Gamma$,
where $g$ is any element in the automorphism group $\mathcal{G}$, represented as a subgroup of $\S_{n+1}$.

We can consider the setting 
when $Z_{M}^{(K)}$, the last node of the last branch, is unobserved, and suppose for simplicity that $\mZ_0 = \R$.
We can let $\psi(z)=|e_{1+K+K\cdot M}^\top z|$, with $z\in \mathcal{Z}$. 
The orbit of $e_{1+K+K\cdot M}^\top$ 
is $\{ e_{1+K+i}^\top,i\in [K\cdot M]\}$. 
Therefore, 
the quantiles 
from \eqref{t}
are $t^{(1)}_{z}:=Q_{1-\alpha}( |z_{1}^{(1)}|,\ldots,|z_{M}^{(K)}|)$,
and the prediction set with $1-\alpha$ probability coverage reduces to 
\beqs
T^{\mathrm\OurMethod}(\zo) = \{z: |z_{M}^{(K)}|\le t^{(1)}_{z},\, \o(z)= \zo \} .
\eeqs
We can also aim to predict at a cluster after coarsening the graph, and we describe this in Section \ref{coarse}.
\end{itemize}

\subsubsection{Setting (c): Rotationally invariant data} 
\label{coex}
\begin{itemize}
    \item {\bf Data, group, and action.} 
Consider rotationally invariant data, 
in the same setting as in \Cref{cpex}, 
except letting
$\mZ_0 = \R^p$. 
The group is the direct product
$\mG = \S_{n+1} \times \O(p)$, 
acting on $z= (z_1^\top,\ldots, z_{n+1}^\top)^\top \in (\R^p)^{n+1} $ via the action 
$\rho(\pi,O)z = ((Oz_{\pi^{-1}(1)})^\top,\ldots, (Oz_{\pi^{-1}(n+1)})^\top)^\top$.
For instance, for $p=2$ or $p=3$, we may consider observations of celestial objects (e.g., coordinates of asteroids). 
The system of coordinates used to represent the data can be centered at the Earth, but the rotation of the system is arbitrary.
\item {\bf Transformations.} 
For simplicity, we will only consider $\tmZ = \mZ$
and the identity map $V$.
\item {\bf Properties of prediction sets.}
If we are interested in predicting the position of the $(n+1)$st object based on the positions of the first $n$, 
leveraging the inherent rotational invariance may increase precision. 
For the test function $\psi$,
we may aim to predict 
how close a new object could come to the
the trajectory of another celestial body of interest;  say, the path of a rocket.
For instance, 
consider the locus $z_{n+1,2}=0$ 
and suppose we aim to predict a region 
$|z_{n+1,1}|\ge C$ that contains the next observation at least $1-\alpha$ of the time.
Then we can take
$\psi(z)=-|z_{n+1,1}|$, 
and the prediction set for the $(n+1)$st observation becomes
\beq\label{rotation_example}
T^{\mathrm\OurMethod}(z_{1:n}) = \{z_{n+1}: -|z_{n+1,1}|\le Q_{1-\alpha} \left(-|(Oz_{\pi^{-1}(n+1)})_1|,\, O\sim U(\O(p)), \pi\sim U(\S_{n+1})  \right)\}.
\eeq
This can be further simplified since
for any $v$,
$(Ov)_1 = _d W^\top v/\|W\|_2$, where $W\sim \N(0,I_p)$ and $\|W\|_2$ is the Euclidean norm of $W$. 

For illustration, we present a two-dimensional toy example in Figure \ref{fig:pred_method_ri} (right).
In this case, the prediction set can be interpreted as  ``we are 95\% sure that a new observation
$Z_{n+1}$
will have $|Z_{n+1,1}|$ at least this large".
Such a region may be useful to determine the allowable range of motion of a rocket moving along the vertical axis---there is 95\% chance that the next celestial body is outside of the vertical strip.

\item {\bf Supervised case.}
Consider rotationally invariant data in the supervised case, where a datapoint satisfies $(X,Y)=_d (OX,Y)$. 
We can form the prediction set of those $y_{n+1}$ for which 
\begin{align} \label{rot_prediction}
|y_{n+1}-\hmu(x_{n+1})|\le Q_{1-\alpha} \left(|y_{\pi^{-1}(n+1)}-\hmu(Ox_{\pi^{-1}(n+1)})|,\, O\sim U(\O(p)), \pi\sim U(\S_{n+1})  \right).
\end{align}
For classification, given probabilistic predictor 
$\hat{p}:(y,x)\mapsto\hat{p}(y,x) \in [0,1]$,
we can use the score $-\hat p$, forming the set of $y_{n+1}$ such that 
$-\hat p(y_{n+1},x_{n+1})\le Q_{1-\alpha} (-\hat p(y_{\pi^{-1}(n+1)},Ox_{\pi^{-1}(n+1)}))$,
where 
$O\sim U(\O(p)), \pi\sim U(\S_{n+1})$.

\end{itemize}

\subsubsection{Setting (d): Two-layer Hierarchical Model}\label{sec:twolayer}
\begin{itemize}
    \item {\bf Data, group, and action.} 
We next study data with a two-layer hierarchical structure. 
Such a model can be useful in many applications, such as meta-learning \citep{fisch2021few,park2022pac}, sketching \citep{sesia2022conformal}, 
and clustered data \citep{dunn2022distribution,lee2023distribution}. 

For the first-layer nodes,
we draw 
distributions $P_k\sim \mathcal{P},k\in [K]$
independently from a distribution $\mathcal{P}$, which is associated with the zeroth layer.
These can be viewed as specifying distinct sub-populations from which data is collected.
From the perspective of meta-learning, they can be viewed as distinct but related tasks (e.g., prediction in various environments).
Different from the tree-structured graphical model example from \Cref{g}, 
here the zeroth- and first-layer nodes are
not observed.

The second-layer nodes (or leaves for simplicity) in the $k$-th branch are random variables 
$Z_{i}^{(k)}, i\in [M]$
drawn exchangeably 
from the distribution $P_k,k\in [K]$ \citep{dunn2022distribution,park2022pac,lee2023distribution}. An illustration is presented in the right panel of  \Cref{fig:tree2}. 
Our goal is to construct prediction sets 
for both unsupervised and supervised settings, given as follows: 

\begin{example}[Unsupervised Learning]\label{ul} We let $Z_{i}^{(k)}\in \R$, $i\in [M]$, $k\in [K]$.
\end{example}

\begin{example}[Supervised Learning]\label{sl} 
For some space $\mathcal{Z}_0$,
we let $Z_{i}^{(k)}=(X_i^{(k)},Y_i^{(k)}) \in \mathcal{Z}_0, i\in [M] $, $k\in[K]$, and suppose that 
$ Y_{i}^{(k)}= \mu_{P_k}(X_i^{(k)})+\epsilon_{i}^{(k)},i\in [M]$, where 
$\mu_{P_k}$ are maps that may depend on $P_k$,
and $\epsilon_{i}^{(k)},i\in [M]$ are i.i.d.~zero-mean random variables whose distributions may 
depend on $P_k,k\in [K]$. 
\end{example}

Let us consider the set $[KM] = \{1,\ldots, KM\}$
and 
for each $a\in[KM]$,
where 
$a = (k-1)\cdot M + i$ for a unique $i\in [M]$, $k\in [K]$,
associate with $a$
the random variable $Z_{i}^{(k)}\in \R$.
Thus the random variables  $Z_{i}^{(k)}\in \R$, $i\in [M]$ in the $k$-th branch are associated with the block $b_k=\{(k-1)\cdot M + i, i\in [M]\}$.
We let $\Lambda_{K,M} \subset \S_{[KM]}$ be the group of $KM$-permutations that map each block $b_k$
into some other block $b_{k'}$ in a bijective way.
Then for both the unsupervised and supervised cases, the distribution of the data is $\Lambda_{K,M}$-invariant.
\item {\bf Transformations and properties of prediction sets.}
We aim to build prediction sets for some unknown components of the last branch, 
hoping to improve prediction by 
pooling information both within and across branches. 
Since this setting provides a particularly fruitful example for our methods, we will dedicate 
all of \Cref{sec:twolayer-detail} to explain how we construct prediction sets. 
\end{itemize}

\section{Two-layer Hierarchical Model}\label{sec:twolayer-detail}

We now provide the details of using our methods in the 
two-layer hierarchical structure introduced in \Cref{sec:twolayer}.

\subsection{Methodological Considerations for Unsupervised Learning}\label{sec:meta_hetero}

In this section, we develop our methodology for building prediction sets in the unsupervised case from Example \ref{ul}.

{\bf Transformations.} For all $k\in[K]$, define
$\bar{Z}_k = \sum_{j=1}^M Z_{j}^{(k)}/M$, 
$\hat\sigma_k^2=1$ if $M=1$,
$$\hat\sigma_k^2=
\frac{\sum_{j=1}^M(Z_j^{(k)}-\bar{Z}_k )^2}{M-1}$$ 
otherwise,
and 
$\bar{Z} = \sum_{k=1}^K \bar Z_{k}/K$.
For some constant $c\ge 0$,
and for all $k\in[K]$,
define the events $A_{k,c} = \{|\bar{Z}_k-\bar{Z}|\le c\hat\sigma_k/\sqrt{M}\}$.
These capture the events that the means of the elements within the $k$-th branch are close to the grand mean.
Let $A_{k,c}^{\complement}$ be the complement of $A_{k,c}$,
and $I(A)$ be the indicator function of an event $A$, which equals $I(A)=1$ if $A$ happens, and $I(A)=0$ otherwise.
We also define, for all $k\in[K],i\in[M]$,
\beq\label{usls}
\tilde{Z}_{i}^{(k)}:=
\frac{|Z_{i}^{(k)}-[\bar{Z}I(A_{k,c})+\bar{Z}_kI(A_{k,c}^{\complement})]|}
{\hat\sigma_k},
\eeq
which are the standardized absolute deviations of $Z_{i}^{(k)}$ from the grand mean if the branch mean is close to that, or to branch mean itself otherwise.
Here $c$ is an absolute constant that can be 
set as $c=2$, as an approximate 97.5\% quantile of the standard Gaussian distribution. 
As explained later, it  can potentially also be optimized by minimizing a prediction loss. 
In Section \ref{sec:GNN_unsup} of the Appendix, we explain
how $(\tilde{z}_{i}^{(k)})_{k\in[K],i\in[M]}  =: V(z)$ can be obtained as functions of $z$
via a fixed message-passing graph neural network $V$ on a graph obtained
by constructing proxy statistics for the 
nodes 
in the zeroth- and first-layer nodes (also detailed below for the supervised case).
We construct the prediction region 
 \eqref{T} 
for $Z_{M}^{(K)}$ by using that $\tZ$ is $\Lambda_{K,M}$-distributionally invariant.

{\bf Properties of prediction sets.} To understand our procedure, 
suppose for a moment that 
$P_k,k\in [K]$  have finite expectations $\mu_k,k\in [K]$; but we emphasize that our method does not require this condition.
When some branch means $\bar Z_{k}\approx\mu_k,k\in [K]$ and $\bar{Z}\approx \sum_{j=1}^{K}\mu_j/K$ are very different---i.e., 
on the event $A_{k,c}^{\complement}$---our procedure centers observations within those branches by estimating the within-branch means $\mu_k,k\in [K]$. 
On the other hand, when $\bar Z_{k}\approx \bar{Z}$ holds for all branches $k\in [K]$---i.e., on $A_{k,c}$---we pool all observations
and mimic standard conformal inference.
Therefore, our procedure interpolates
prediction sets built using each individual branch and 
prediction sets built using full standard conformal inference.

We let $J_{K,M} = [M]\times [K]\setminus \{(M,K)\}$ denote the 
indices of the
 fully observed datapoints.
The following result formalizes the coverage guarantee of our result: 

\begin{proposition}\label{meta_unsup}
If the data $Z$ 
follows the two-layer hierarchical model introduced at the beginning of \Cref{sec:twolayer},
and the observation function has values $\obs(z)=(z_i^{(k)})_{(i,k)\in J_{K,M}}$ for all $z$, 
then the prediction set for $Z_M^{(K)}$ 
from  \eqref{T} 
with $\psi(\tz)=\tz_M^{(K)}$ and $t_{\tilde{z}}=Q_{1-\alpha}((\tilde{z}_{i}^{(k)})_{k=1,i=1}^{K,M})$
has coverage at least $1-\alpha$.
Moreover, if $\tZ$ has a continuous distribution, then the coverage is at most $1-\alpha+1/(KM)$.
\end{proposition}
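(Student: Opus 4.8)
The plan is to verify that Proposition~\ref{meta_unsup} is an instance of Theorem~\ref{c1} (for the lower bound) and Proposition~\ref{ub} (for the over-coverage), applied to the group $\mG = \Lambda_{K,M}$ acting on $\mZ = \R^{KM}$ by block-respecting permutations, with the map $V$ producing $\tZ = (\tilde Z_i^{(k)})$, the test function $\psi(\tz) = \tz_M^{(K)}$, and the observation function $\obs(z) = (z_i^{(k)})_{(i,k)\in J_{K,M}}$. First I would check the distributional-invariance hypothesis: the data $Z$ follows the two-layer hierarchical model, so as noted in Setting~(d) the law of $Z$ is $\Lambda_{K,M}$-invariant, i.e.\ $Z =_d \rho(G)Z$ for $G \sim U(\Lambda_{K,M})$. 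Second, and this is the main content, I would verify that $V$ is $\Lambda_{K,M}$-distributionally equivariant with respect to the block-permutation actions $\rho$ on $\mZ$ and $\tir$ on $\tmZ$. Since we actually want to invoke deterministic equivariance (which implies the distributional version by the discussion after Definition~\ref{dedef}), the key observation is that each ingredient of $V$ in~\eqref{usls} is built out of quantities that are equivariant under block permutations: $\bar Z_k$ and $\hat\sigma_k^2$ are permutation-invariant functions of the $k$-th block and hence get permuted along with the blocks; $\bar Z$ is fully invariant; the event $A_{k,c}$ and its indicator transform covariantly with the block index; and within a block the coordinatewise map $z_i^{(k)}\mapsto |z_i^{(k)} - (\cdot)|/\hat\sigma_k$ commutes with permuting the $i$-index. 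Composing these, $V(\rho(g)z) = \tir(g) V(z)$ for all $g \in \Lambda_{K,M}$ and all $z$, so $V$ is deterministically—hence distributionally—equivariant. (As the paper notes in Section~\ref{sec:GNN_unsup}, $V$ can also be realized as a message-passing GNN on an augmented graph, which gives equivariance for free; I would mention this as the cleaner route.)

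Given these two facts, Theorem~\ref{c1} applies verbatim: the prediction set $T^{\mathrm{SymmPI}}(\Zo)$ from~\eqref{T} with the stated $\psi$ and with $t_{\tz} = Q_{1-\alpha}(\psi(\tir(G)\tz),\,G\sim U)$ has coverage at least $1-\alpha$. I would then just unwind the definition of the quantile in this concrete case: the orbit of the coordinate functional $e$ indexing position $(M,K)$ under $\Lambda_{K,M}$ is all of $[KM]$ (any leaf can be sent to any other leaf by a block permutation), so $\psi(\tir(G)\tz)$ ranges uniformly over $\{\tz_i^{(k)} : (i,k)\in[M]\times[K]\}$, giving $t_{\tz} = Q_{1-\alpha}((\tz_i^{(k)})_{k=1,i=1}^{K,M})$, which matches the statement. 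This identifies the claimed prediction set with $T^{\mathrm{SymmPI}}$ and delivers the lower bound.

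For the over-coverage bound I would apply Proposition~\ref{ub} with $\mH_{\tz} = \{g\in\Lambda_{K,M} : \psi(\tir(g)\tz) = \psi(\tz)\}$, the stabilizer of the value sitting in the last leaf. When $\tZ$ has a continuous distribution all $KM$ coordinates of $\tZ$ are almost surely distinct, so on a probability-one set $\Omega$ the only way $\tir(g)\tz$ can fix the last coordinate is for $g$ to fix the leaf $(M,K)$ itself; hence $\mH_{\tz} = \mH$, the (block-permutation) stabilizer of that leaf, for every $\tz\in\Omega$, and $\mH$ is a subgroup. By orbit–stabilizer, $|\mH|/|\Lambda_{K,M}| = 1/|\mathrm{orbit}| = 1/(KM)$ since the orbit of the leaf is all $KM$ leaves. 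Proposition~\ref{ub} then yields $P(Z\in T^{\mathrm{SymmPI}}(\Zo)) \le 1-\alpha + 1/(KM)$, completing the proof.

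The main obstacle is the equivariance verification for $V$: the adaptive centering in~\eqref{usls}, switching between $\bar Z$ and $\bar Z_k$ depending on the data-dependent event $A_{k,c}$, looks non-symmetric at first glance, and the argument must carefully track how the block index is carried along by a permutation $g\in\Lambda_{K,M}$—in particular that if $g$ maps block $b_k$ to block $b_{k'}$, then $\hat\sigma_k$, $\bar Z_k$, and the event $A_{k,c}$ all get relabeled to the $k'$-th versions in exactly the way $\tir(g)$ demands. Everything else (invariance of $Z$, the orbit computation, the orbit–stabilizer count, invoking the two cited results) is routine.
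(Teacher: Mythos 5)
Your proposal is correct and follows essentially the same route as the paper, whose proof of Proposition~\ref{meta_unsup} is a one-line remark that the result ``follows directly'' from the deterministic equivariance of the constructed (GNN-realizable) map $V$ together with the $\Lambda_{K,M}$-distributional invariance of $Z$, i.e.\ exactly the combination of Theorem~\ref{c1}, the transitivity of $\Lambda_{K,M}$ on the $KM$ leaves (which identifies $t_{\tz}$ with the empirical quantile over all coordinates), and the stabilizer count $|\mH|/|\Lambda_{K,M}| = 1/(KM)$ via Proposition~\ref{ub}. Your write-up simply supplies the details the paper omits, including the careful check that $\bar Z_k$, $\hat\sigma_k$, and $A_{k,c}$ relabel covariantly under block permutations.
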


The proof of Proposition \ref{meta_unsup} is deferred to \Cref{pfm}.
We compare our method with alternatives in Section \ref{sec:comp}.

\subsection{Methodology Considerations for Supervised Learning}\label{sup_learning}
In this section, 
we consider the two-layer hierarchical model
in a supervised learning setting. 

{\bf Transformations.} To ease the computational burden, 
we adopt split predictive inference. 
For every branch $k\in [K]$, we set the first $M'$---approximately half---datapoints to be the training sample,
and 
let $Z_{\tr}$ be the training data
gathered from all branches. 
We 
fit $\hat\mu(\cdot;Z_{\tr}):\R\to\R$ 
based on $Z_{\tr}$,
such that $x\mapsto \hat\mu(x;Z_{\tr})$
is an estimator of the regression function in the pooled data.
Further, for all branches, 
we fit\footnote{We first train $\hat\mu(\cdot;Z_{\tr})$ using all training data and train $\tilde{\mu}_k(\cdot;Z_{\tr}),k\in [K]$ to approximate the regression function of the residuals $(X_{k,i},Y_{k,i}-\hat\mu(X_{k,i});Z_{\tr}),i\in [M]$ from the $k$-th tree, $k\in[K]$.
We finally let $\hat\mu_k(\cdot;Z_{\tr})=\hat\mu(\cdot;Z_{\tr})+\tilde{\mu}_k(\cdot;Z_{\tr}).$} $
\hat\mu_k(\cdot;Z_{\tr}),k\in [K]$
based on $Z_{\tr}$,
as estimators of the within-branch regression functions $\mu_{P_k}$ from Example \ref{sl}.

Using $Z_{\tr}$,
we also fit 
pointwise confidence bands $x\mapsto \hat\sigma_k(x;Z_{\tr}),k\in [K]$
by estimating the pointwise  standard error of $x\mapsto\hat\mu_k(x;Z_{\tr}),k\in [K]$ over the randomness of the fitting process. 
Many classical estimators $\hat\mu_k(\cdot;Z_{\tr})$ have explicit expressions for standard error curves $\hat\sigma_k(\cdot;Z_{\tr})$, $k\in [K]$ 
including parametric models, non-parametric kernel regression, splines, etc. 
We emphasize that our method does not require any coverage properties for $\hsigma_k$.

Suppose that there are $M$ remaining datapoints in each branch, and without loss of generality, call them $(X_i^{(k)},Y_i^{(k)})$, $k\in[K]$, $i\in [M]$.
For $k\in[K]$, $i\in [M]$, let
$$\bar Z_i^{(k)} = (Y_i^{(k)}-\hat\mu_k(X_i^{(k)}; Z_{\tr}),
Y_i^{(k)}-\hat\mu(X_i^{(k)}; Z_{\tr}),\hat\sigma_k(X_i^{(k)}; Z_{\tr})).$$
We require that 
$\bar Z = (\bar Z_i^{(k)})_{k\in[K], i\in [M]}$ is  $\Lambda_{K,M}$-distributionally invariant.
This follows if $Z=(Z_{\tr},$ $(Z_i^{(k)})_{k\in[K], i\in [M]})$ is $\Lambda_{K,M+M'}$-distributionally invariant and the map 
$Z\mapsto \bar Z$ is  $\Lambda_{K,M+M'}$-distributionally equivariant.
To ensure this, we use the same algorithm for training $\hat\mu_k(\cdot;Z_{\tr})$, $\hat\sigma_k(\cdot;Z_{\tr})$ in all branches $k\in [K]$, and ensure they are all invariant to the order of the data within branch $k$.

In what follows, we suppress the dependence of  $\hat\mu, \hat\mu_k, \hat\sigma_k$ on the training data.

For all $k\in[K],i\in[M]$, define 
\begin{align}\label{numerator}
{\bar{Z}_i^{'(k)}}=\bigg[Y_{i}^{(k)}-\hat\mu(X_i^{(k)})I\bigg(\bigg|\frac{\hat\mu_k(X_i^{(k)})-\hat\mu(X_i^{(k)})}{\hat\sigma_k(X_i^{(k)})}\bigg|\le c\bigg)-\hat\mu_k(X_i^{(k)})I\bigg(\bigg|\frac{\hat\mu_k(X_i^{(k)})-\hat\mu(X_i^{(k)})}{\hat\sigma_k(X_i^{(k)})}\bigg|>c\bigg)\bigg].
\end{align}
We set $\hat\epsilon_k=1,$ for all $k\in [K]$ when $M=1$
and $\hat\epsilon_{k}=\sqrt{\sum_{i=1}^{M}(\bar{Z}_i^{'(k)})^2/(M-1)}$ otherwise. 

Let
$\tZ=(\tZ_1,\ldots, \tZ_K)^\top$, 
where for all $k\in[K]$,
$\tZ_k=(\tZ_{1}^{(k)},\ldots,\tZ_{M}^{(k)})^\top$
and for all $k\in[K],i\in[M]$,
we define the map $\bar Z\mapsto V(\bar Z) = (V_i^{(k)}(\bar Z))_{i=1,k=1}^{M,K}$
via 
\begin{align}\label{stat_1}
\tZ_i^{(k)}
:= V_i^{(k)}(\bar Z)
:=\frac{\bar{Z}_i^{'(k)}}{\hat\epsilon_{k}}.
\end{align}

We now argue that if the data 
$\bar Z_{i}^{(k)}, i\in [M] $, $k\in[K]$ are
$\Lambda_{K,M}$-distributionally invariant, 
then $\tZ$ also satisfies this property.
To see this, we define a special fixed message-passing graph neural net computing $V$ on 
the two-layer tree from \Cref{fig:tree2} (right), 
which captures the invariances of $\bar Z$,
mapping $\bar Z\mapsto \tZ=V(\bar Z)$.
Since message-passing GNNs are equivariant, it will follow that $\tZ$ is also $\Lambda_{K,M}$-distributionally invariant.
In fact, our MPGNN operates only on the subgraph of the two-layer tree excluding the zeroth-layer node.
The message passing GNN is defined in five steps:

\begin{itemize}
\item  Step 1: We 
initialize
the leaves $z_{i}^{(k,0)},i\in [M],k\in [K]$,
to have three channels: $z_i^{(k,0)}=(z_{i,1}^{(k,0)},z_{i,2}^{(k,0)},z_{i,3}^{(k,0)})^\top$, where
$z_{i,1}^{(k,0)}=y_{i}^{(k)}-\hat\mu_k(x_{i}^{(k)})$, $z_{i,2}^{(k,0)}=y_{i}^{(k)}-\hat\mu(x_{i}^{(k)})$,
$z_{i,3}^{(k,0)}=\hat\sigma_k(x_{i}^{(k)})$. 
We initialize the 
first-layer nodes
as all ones vectors with three channels: 
$p^{(k,0)}=(p_1^{(k,0)},p_2^{(k,0)}$, $p_3^{(k,0)})^\top
$ $=(1,1,1)^\top$, $k\in[K]$.

\item Step 2: 
Let the kernel $\mathbb{K}$
be defined by $\mathbb{K}(x)=I(|x|\le c)$ for all $x$.
We update 
each leaf individually 
as $z_{i}^{(k,1)}=V_1(z_{i}^{(k,0)})=(z_{i,1}^{(k,0)},z_{i,2}^{(k,0)},(z_{i,2}^{(k,0)}-z_{i,1}^{(k,0)})\cdot \mathbb{K}(|(z_{i,2}^{(k,0)}-z_{i,1}^{(k,0)})/z_{i,3}^{(k,0)}|))^\top,i\in[M],k\in [K]$. 
This corresponds to taking the map $\lambda_0$ in \eqref{u} to only depend on its first input. 
The updated third coordinate becomes
\begin{align*}
    &(z_{i,2}^{(k,0)}-z_{i,1}^{(k,0)})\cdot \mathbb{K}(|(z_{i,2}^{(k,0)}-z_{i,1}^{(k,0)})/z_{i,3}^{(k,0)}|)\\
    &\quad=[\hat\mu(x_{i}^{(k)})-\hat\mu_k(x_{i}^{(k)})]\cdot I(|\hat\mu(x_{i}^{(k)})-\hat\mu_k(x_{i}^{(k)})|/\hat\sigma_k(x_i^{(k)})\le c).
\end{align*}
We keep the values of the first-layer nodes unchanged: 
$p^{(k,1)}= p^{(k,0)}$, $k\in[K]$.

\item  Step 3: We update the leaves as $z_{i}^{(k,2)}=V_2(z_{i}^{(k,1)})=(|y_i^{(k)}-\hat\mu_k(x_i^{(k)})-z_{i,3}^{(k,1)}|,z_{i,2}^{(k,1)},z_{i,3}^{(k,1)})^\top$, 
$i\in[M],k\in [K].$
This also corresponds to a map $\lambda_0$ that only depends on its first input. 

\item Step 4: We update the first layer nodes 
as 
$p^{(k,2)}=(p_1^{(k,2)},1,1)^\top,k\in[K]$, where 
$p_{1}^{(k,2)}
=V_{32}(p^{(k,1)},\sum_{i=1}^{M}V_{31}(z_{i}^{(k,2)},p^{(k,1)}))$, and we have  
$V_{31}(x,y)=x^2/(M-1)$ and $V_{32}(x,y)=\sqrt{y}$. 
This corresponds to a standard message passing update in \eqref{u}.
Thus, after the update, we have $p^{(k,2)}=(\hat\epsilon_{k},1,1)^\top$, 
where 
$\hat\epsilon_{k}^2
=
\sum_{i=1}^{M}\Big(z_{i,1}^{(k,2)}\Big)^2/(M-1)$, $\hat\epsilon_{k}\ge 0$
for all  $k\in [K]$. In this step, we fix the values of the leaves.

\item Step 5: We update the leaves $z_{i}^{(k,3)}=V_{41}(z_{i}^{(k,2)},V_{42}(p^{(k,2)},z_{i}^{(k,2)}))$,
where
$V_{42}(p^{(k,2)},z_{i}^{(k,2)}) = p^{(k,2)}$
and 
$V_{41}(z_{i}^{(k,2)},p^{(k,2)})
= (z_{i,1}^{(k,2)}/p_1^{(k,2)}$, $z_{i,2}^{(k,2)}$, $z_{i,3}^{(k,2)})^\top$.
Thus,
$z_{i}^{(k,3)}=(|y_i^{(k)}-\hat\mu_k(x_i^{(k)})-z_{i,3}^{(k,1)}|/p_1^{(k,2)}$, $z_{i,2}^{(k,2)}$, $z_{i,3}^{(k,2)})^\top$, $i\in[M],k\in [K].$
 The first entry, $z_{i,1}^{(k,2)}$, becomes our statistic \eqref{stat_1}.

\end{itemize}
{\bf Properties of prediction sets.} We have the following coverage guarantee.

\begin{proposition}\label{meta_sup}
If the data 
$\bar Z_{i}^{(k)}, i\in [M] $, $k\in[K]$ are
$\Lambda_{K,M}$-distributionally invariant, 
and the observation function has values $\obs(z)=((z_i^{(k)})_{(i,k)\in J_{K,M}},x_{M}^{(K)})$ for all $z$,
then the prediction set for $y_{M}^{(K)}$ 
from  \eqref{T} 
with $\psi(\tz)=\tilde z_{M,1}^{(K)}$ 
and $t_{\tz}=Q_{1-\alpha}((\tilde{z}_{i,1}^{(k)})_{k=1,i=1}^{K,M})$
has coverage at least $1-\alpha$.
Moreover, if $\tilde Z$ has a continuous distribution, then the coverage is at most $1-\alpha+1/(KM)$.

\end{proposition}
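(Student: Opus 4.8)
The plan is to obtain both bounds from Theorem \ref{c1} and Proposition \ref{ub}, once we have verified that the transformation $V$ assembled from the five-step message-passing architecture is $\Lambda_{K,M}$-distributionally equivariant and that the quantile $t_{\tz}$ appearing in the statement is exactly the orbit quantile \eqref{t} for the test function $\psi(\tz)=\tilde z_{M,1}^{(K)}$.

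For the equivariance, I would show that $V$ is in fact \emph{deterministically} $\Lambda_{K,M}$-equivariant, which implies the distributional version by the discussion following \eqref{e}. An element $g\in\Lambda_{K,M}$ acts on the two-layer tree by carrying each leaf block $b_k$ bijectively onto a block $b_{g(k)}$ and, consistently, by relabeling the first-layer node of branch $k$ to that of branch $g(k)$. Steps 1, 2, 3 and 5 update each leaf through a map depending only on that leaf (and, in Step 5, on its parent), and the same per-node maps $V_1,V_2,V_{41},V_{42}$ are used at every node, so these steps commute with the $\Lambda_{K,M}$-action. Step 4 is a standard message-passing update of the form \eqref{u}, in which every first-layer node aggregates---through the identical maps $V_{31},V_{32}$---a symmetric (sum) function of its $M$ children, and such updates are $\Lambda_{K,M}$-equivariant; this is the MPGNN equivariance recalled in \Cref{g}. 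Composing the five equivariant layers yields that $V$ is $\Lambda_{K,M}$-equivariant, and since $\bar Z$ is assumed $\Lambda_{K,M}$-distributionally invariant, the principle of \Cref{sec:predict_region}---that distributional invariance is preserved by distributionally equivariant maps---gives that $\tilde Z=V(\bar Z)$ is $\Lambda_{K,M}$-distributionally invariant.

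Next I would identify the orbit of the test leaf $(M,K)$ under the action of $\Lambda_{K,M}$ on the $KM$ leaf indices: since some $g\in\Lambda_{K,M}$ sends $b_K$ onto any block $b_k$ with within-block position $M$ going to any $i\in[M]$, this orbit is all of $[M]\times[K]$. The stabilizer cosets therefore partition $\Lambda_{K,M}$ into $KM$ equinumerous classes, so for $G\sim U(\Lambda_{K,M})$ the variable $\psi(\tir(G)\tz)$ is uniform over the $KM$ values $(\tilde z_{i,1}^{(k)})_{k=1,i=1}^{K,M}$; hence the quantile \eqref{t} equals $Q_{1-\alpha}\big((\tilde z_{i,1}^{(k)})_{k=1,i=1}^{K,M}\big)$, matching the statement. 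Applying Theorem \ref{c1} with this $V$, this $\psi$, and the stated observation function gives coverage at least $1-\alpha$. For the upper bound, assume $\tilde Z$ has a continuous distribution and let $\Omega$ be the probability-one event that the $KM$ coordinates $\tilde z_{i,1}^{(k)}$ are pairwise distinct; on $\Omega$ the set $\mathcal{H}_{\tz}=\{g:\psi(\tir(g)\tz)=\psi(\tz)\}$ is the stabilizer of $(M,K)$ in $\Lambda_{K,M}$, a subgroup not depending on $\tz\in\Omega$, with $|\mathcal{H}|/|\Lambda_{K,M}|=1/(KM)$ by the orbit--stabilizer theorem. Proposition \ref{ub}---equivalently, the observation that on $\Omega$ every atom of the distribution of $\psi(\tir(G)\tz)$ has mass $1/(KM)$, so the jump term $F'_{\tilde z}(t_{\tilde z})$ in \eqref{lb} equals $1/(KM)$---then yields coverage at most $1-\alpha+1/(KM)$.

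The main obstacle is the equivariance check in the second paragraph: one has to track that each of the five steps respects not just leaf permutations but the coupled relabeling of the first-layer nodes, and in particular that Steps 4 and 5---where information passes between the two layers---use layer-wise identical maps so that the whole composition commutes with every $g\in\Lambda_{K,M}$. The remaining pieces---reading off the orbit, matching the quantile, and the continuity plus orbit--stabilizer computation for the over-coverage---are routine and parallel the unsupervised case of Proposition \ref{meta_unsup}.
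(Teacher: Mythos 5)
Your proposal is correct and follows exactly the route the paper intends: the paper's own (omitted) proof is precisely that the five-step MPGNN is deterministically, hence distributionally, $\Lambda_{K,M}$-equivariant, so $\tilde Z$ inherits distributional invariance from $\bar Z$, and the bounds then follow from Theorem \ref{c1} and Proposition \ref{ub}. Your added details—the transitivity of the $\Lambda_{K,M}$-action on the $KM$ leaf indices giving the stated quantile, and the orbit--stabilizer computation $|\mH|/|\Lambda_{K,M}|=1/(KM)$ under continuity—are exactly the omitted verifications and are carried out correctly.
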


The proof of Theorem \ref{meta_sup} is deferred to \Cref{pfm}.

\noindent\textbf{Remark}: In practice, similar to the unsupervised case, one could choose $c$ as a quantile of the standard normal distribution, or by minimizing a loss. 
For example, we could
compute 
the residual standard errors $\tilde \sigma_{k,\epsilon}^2$
of $\hat\mu_k$ on the training data,
and then 
minimize over $c$  the following empirical loss:
$$ \sum_{k,i=1}^{K,M}
\bigg[Y_{i}^{(k)}-\hat\mu(X_i^{(k)})I\bigg(\bigg|\frac{\hat\mu_k(X_i^{(k)})-\hat\mu(X_i^{(k)})}{\hat\sigma_k(X_i^{(k)})}\bigg|\le c\bigg)-\hat\mu_k(X_i^{(k)})I\bigg(\bigg|\frac{\hat\mu_k(X_i^{(k)})-\hat\mu(X_i^{(k)})}{\hat\sigma_k(X_i^{(k)})}\bigg|>c\bigg)\bigg]^2/{\tilde\sigma_{k,\epsilon}^2}.$$ 
Since the objective function is $\Lambda_{K,M}$-invariant, it is not hard to choose an approximate minimizer in an $\Lambda_{K,M}$-invariant way; by simple one-dimensional optimization. 
By our general theory, Proposition \ref{meta_sup} will still hold for this choice of $c$.

\subsection{Comparison with Other Methods}\label{sec:comp}
In this section, we discuss the performance of several
alternative benchmark methods 
and compare them with our proposed method. 
For simplicity, we assume 
that in the unsupervised case, 
$Z_{i}^{(k)}$ have equal variances for all $i\in [M]$, $k\in [K]$ and
in the supervised case, the same holds for 
$\epsilon_{i}^{(k)}$, $i\in [M]$, $k\in [K]$.
We provide a brief discussion of the scenario where the variances are different at the end of this section.\\

\noindent\textbf{Benchmark 1: Single Tree}. Since the random variables on the leaves are exchangeable within the same branch, one possible way to construct a prediction set is 
to use the $M-1$ observed calibration datapoints together with the $M$-th unobserved test datapoint 
in the last branch to construct a classical conformal prediction set
$$ T_1=\Big\{z_{M}^{(K)}:s(z_{M}^{(K)})\le 
Q_{1-\alpha}(s(z_{1}^{(K)}),s(z_{2}^{(K)}),\ldots, s(z_{M}^{(K)}))\Big\}.$$
Usually, for unsupervised learning, one would 
set $s(z)=|z|$ for all $z$,
and for supervised learning, one would 
set $s(z)=|y-\hat\mu_K(z)|$, where $\hat \mu_K$ is a function  fit
to the training data within the last branch.
Although this method yields a prediction set with valid coverage,
it may also exhibit a degree of conservatism
due to not using information from other branches.
This can result in the prediction set being large (or even including the entire space)
when the sample size within the branch is small.\\ 
\noindent\textbf{Benchmark 2: Split conformal prediction}. 
We compare our method with classical split conformal prediction. 
Denote by $\textrm{aver}(S)$ the average of a finite set $S\subset \R$.
In unsupervised learning, 
one version of the standard conformal prediction set contains $z_{M}^{(K)}$ such that
\begin{align*}
    |z_{M}^{(K)}-\textrm{aver}(z_i^{(k)},i\in[M],k\in [K])|\le 
    Q_{1-\alpha}(|z_{i}^{(k)}-\textrm{aver}(z_i^{(k)},i\in[M],k\in [K])|,i\in[M],k\in [K]).
\end{align*}
In a supervised learning setting, a classical conformal prediction set is given by $y_{M}^{(K)}$ such that
\begin{align*}
    |y_{M}^{(K)}-\hat \mu(x_M^{(K)})|\le Q_{1-\alpha}(|y_{i}^{(k)}-\hat \mu(x_i^{(k)})|,i\in[M],k\in [K]).
\end{align*}
where $\hat\mu$ is a regression function based on the training sample, as described in \Cref{sup_learning}.

Observe that 
in the unsupervised case 
one can view the problem as 
predicting the next observation from the distribution $P_K$. 
Defining $m = \textrm{aver}(z_i^{(k)},i\in[M],k\in [K]),i\in[M],k\in [K]$,
the length of the prediction set in unsupervised learning 
will be $2\cdot Q_{1-\alpha}(|z_{i}^{(k)}-m|)$, which---for large $M$---is close to the difference between the upper and lower $\alpha/2$-quantiles of the mixture distribution $\frac{1}{K}\sum_{k=1}^{K}P_{k}$ (assuming, without loss of generality, that this distribution is symmetric); rather than of $P_k$.

For supervised learning, the length of the prediction set will be $2\cdot Q_{1-\alpha}(|Y_i^{(k)}-\hat\mu(X_i^{(k)})|,i\in [M],k\in [K])=2\cdot Q_{1-\alpha}(|\epsilon_{i}^{(k)}+\mu_{P_k}(X_i^{(k)})-\hat\mu(X_i^{(k)})|,i\in [M],k\in [K])$. 
When $\mu_{P_k},k\in [K]$ 
differ a great deal
for different $P_k,k\in [K]$, 
training $\hat\mu$ by mixing their training datapoints will likely lead to a wider prediction set. \\

\noindent\textbf{Benchmark 3: Subsampling}.  We will also compare 
to a subsampling method proposed in \cite{dunn2022distribution},
which consists of uniformly sampling 
one observation from each of the first $K-1$ branches. 

The unobserved $Z_M^{(K)}$ 
is exchangeable with the sampled random variables, 
and thus a
standard conformal prediction set can be constructed using the sub-sample. 
If $K$ is sufficiently large,
we expect the length of the prediction set to be close 
to that of a set obtained via
standard conformal prediction using the full data.
Indeed, the latter method fits the quantiles of the mixture of the distributions of all branches. 
In addition, 
\cite{dunn2022distribution} introduced a repeated subsampling approach aimed at improving the stability of the prediction set. 
Recalling that we aim to predict the next observation from $P_K$, 
this method provides a valid prediction set, but it again effectively estimates the quantiles of the mixture distribution of all $P_k$s, $k\in [K]$.\\

We further discuss the advantages of our method compared 
with these benchmarks. 
Taking the supervised learning setting as example, 
when the distribution $P_K$ deviates from other distributions, our approach  provides a prediction set tailored to the last distribution $P_K$ that we are interested in---in contrast to benchmarks 2 and 3. 
In addition, we also leverage data
from other branches,
furnishing smaller prediction sets than benchmark 1 
when there are only few observations in the final branch.

When all $\mu_{P_k}, k\in [K]$, are close to each other, 
including confidence bands
enables us to 
achieve similar sized regions
to standard conformal prediction. 
As a result, our methodology
effectively offers the best of conformal prediction within the branch and using the full dataset. 
Thus, it furnishes an 
approach to predictive inference
that performs well under the heterogeneity of the distributions $P_k$, $k\in[K]$.

Finally, we mention that our approach 
can 
also handle the 
scenario where the residual variances are different
across branches.
In such cases, 
even if 
standard conformal prediction (Benchmark 2)
achieves satisfactory coverage probability,
this coverage
may be uneven across the branches due to their different variances. 
For instance, 
a branch with high variance 
requires wider
prediction sets
to prevent under-coverage; which is possible with our approach but not straightforward with standard conformal prediction. 

\subsection{Extension to Random Sample Sizes}

In this section, we study an extension of our methodology 
for settings with imbalanced observations across branches. 
We begin by introducing the probability models studied in this section.

{\bf Data, group, and action.}  We let a random vector $Z=(Z_1^\top, \ldots, Z_K^\top)^\top \in \mZ_0^* : = \cup_{j\ge 0}\mZ_0^j $ be generated from a joint distribution $\mathcal{P}$. 
We assume exchangeability across the $K$ components of $(Z_1^\top, \ldots, Z_K^\top)^\top$. 
 In addition, the dimensions of 
 $Z_i,i\in [K]$, denoted as 
 $N_i,i\in [K]$, are also random variables with $N_i\in \{0,1,\ldots,\}$, for any $i\in [K]$.
 Letting $\vec  N:=(N_1,\ldots,N_K), \vec n:=(n_1,\ldots,n_K)$, 
 for any $\vec n\in \mathbb{N}^K$, 
 we assume that 
  conditional on $\vec N=\vec n$,
 and for all $k\in [K]$,
 the 
coordinates of
 $Z_k=(Z_{1}^{(k)}, \ldots, Z_{n_k}^{(k)})^\top$ are exchangeable across the $n_k$ observations.
The model from \Cref{sec:twolayer} is a special case where $n_k = M$ for all $k\in[K]$.

For any given $\vec n \in \mathbb{N}^K$,
conditional on $\vec N= \vec n$, 
we define $\mathcal{G}_{\vec n}=\S_{n_1} \otimes \S_{n_2} \otimes \ldots \otimes \S_{n_K}$
as the direct product of
the permutation groups
$\S_{n_k}$ of the sets $[n_k]$. 
For any $g_{\vec n}\in \mathcal{G}_{\vec n}$, it holds that $g_{\vec n}=g_{n_1}\otimes g_{n_2}\ldots\otimes g_{n_K}$, where $g_{n_i}\in \S_{n_i},i\in [K]$. 
Choose
for all $k\in[K]$
the permutation actions  
$g_{n_k}\cdot Z_k$.
Then 
we have $g_{\vec n}\cdot Z=(g_{n_1}\cdot Z_1,\ldots, g_{n_K}\cdot Z_K)$. 
We let $U_{\vec n}$ be uniform measure over
$\mathcal{G}_{\vec n}$
and 
$\mZ = \mZ_{\vec n} =: \prod_{k=1}^K\mZ_0^{n_k}$,
$\tmZ =\tmZ_{\vec n} = \prod_{k=1}^K\tmZ_0^{n_k}$.

{\bf Transformations.} We also 
consider a
$\mathcal{G}_{\vec n}$-distributionally equivariant
map $V:\mathcal{Z}\rightarrow \tilde{\mathcal{Z}}$, 
 with respect to the actions 
 for which 
 $\rho(g)=g$ and $\tilde{\rho}(g)=g$ for all $g\in \mG_{\vec n}$,
i.e., 
$V(G_{\vec n}\cdot Z)=_d G_{\vec n}\cdot V(Z),\, \textnormal{ if } G_{\vec n}\sim U_{\vec n}$.

{\bf Properties of prediction sets.} We let $J_{\vec n} = \{(i,k): i\in[n_k], k\in [K]\} \setminus \{(n_K,K)\}$ 
denote the 
indices of the
 fully observed datapoints.
 In the unsupervised 
case,
the observation function has values $\obs(z)=(z_i^{(k)})_{(i,k)\in J_{\vec n}}$ for all $z$,
while in 
the supervised 
case,
it has values $\obs(z)=((z_i^{(k)})_{(i,k)\in J_{\vec n}}, x_{n_K}^{(K)})$ for all $z$.
In the unsupervised case,
we aim to predict $\psi_{\vec n}(\tz)=\tz_{n_K}^{(K)}$ and define, for all $\tz\in \tmZ$,
 \begin{align}\label{quantile_random}
 t_{\tz}=Q_{1-\alpha}\bigg(\frac{1}{K}\sum_{k=1}^{K}\frac{1}{n_k}\sum_{m=1}^{n_k}\delta_{\tz_{m}^{(k)}}\bigg).
 \end{align}
We also define the prediction set
\beq\label{T2}
T^{\mathrm\OurMethod}(\zo,\vec n) = \Big\{z: [V(z)]_{n_K}^{(K)}\le t_{V(z)},\, \o(z)= \zo\Big\}.
\eeq
In the supervised case, 
we define $V$ similarly 
to the definition from \Cref{sup_learning},
and $t$ and $T^{\mathrm\OurMethod}$ as above.
Then we have the following coverage guarantees. 
\begin{theorem}\label{prop_random}
In the setting described above in this section, we consider
 $\alpha\in[0,1]$,
$t_{\tz}$ defined in \eqref{quantile_random} 
and the observation function with $\obs(z)=((z_i^{(k)})_{(i,k)\in J_{K,M}},x_{M}^{(K)})$ for all $z\in\mZ$. 
For the prediction region defined in \eqref{T2}, we have
$P(Z\in T^{\mathrm\OurMethod}(\Zo,\vec N ))\ge 1-\alpha.$
When all $Z_{j}^{(k)},k\in [K],$ $j\in [N_k]$, are continuous random variables, we also have
\begin{align*}
P(Z\in T^{\mathrm\OurMethod}(\Zo,\vec N ))\le 1-\alpha+\E\max_{j\in [K]}\frac{1}{K\cdot N_j}.
\end{align*}
\end{theorem}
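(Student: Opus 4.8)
The plan is to condition on a $\sigma$-field that captures all of the symmetry in the data, and to show that conditionally the test statistic $\psi(V(Z)) = [V(Z)]_{N_K}^{(K)}$ is distributed \emph{exactly} as the cluster-balanced mixture
$\mu(V(Z)) := \tfrac1K\sum_{k=1}^K \tfrac{1}{N_k}\sum_{m=1}^{N_k}\delta_{[V(Z)]_m^{(k)}}$
whose $(1-\alpha)$-quantile is the threshold $t_{V(Z)}$ in \eqref{quantile_random}. Both the lower bound and the upper bound then follow from elementary properties of quantiles of the fixed (given the conditioning) distribution $\mu(V(Z))$. This generalizes the proof of \Cref{meta_unsup}; the genuinely new point is handling the variable block sizes $N_k$. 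Throughout, in the supervised case I would additionally condition on the training sample $Z_{\tr}$: because of the split construction, $V$ is then a fixed map on the calibration/test data and nothing below changes, so I suppress $Z_{\tr}$ in what follows.

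Concretely, write $\tilde Z = V(Z)$, let $\hat F_k$ be the empirical distribution of the $N_k$ values $\{[\tilde Z]_m^{(k)}:m\in[N_k]\}$, so $\mu(\tilde Z)=\tfrac1K\sum_k\hat F_k$ and $t_{\tilde Z}=Q_{1-\alpha}(\mu(\tilde Z))$, and let $\mathcal M$ be the $\sigma$-field generated by the unordered multiset $\{(\hat F_1,N_1),\ldots,(\hat F_K,N_K)\}$ --- equivalently, by all functionals of $(\tilde Z,\vec N)$ that are symmetric both in the cluster index $k$ and in the within-cluster index $m$. Since $\mu(\tilde z)$ and $Q_{1-\alpha}(\mu(\tilde z))$ are such symmetric functionals, $t_{\tilde Z}$ is $\mathcal M$-measurable. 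The core claim is that, conditionally on $\mathcal M$, the test statistic $[\tilde Z]_{N_K}^{(K)}$ has law $\mu(\tilde Z)$ --- and, crucially, $\mu(\tilde Z)$ is $\mathcal M$-measurable, hence a fixed distribution given $\mathcal M$. This I would prove by composing two exchangeabilities: (i) the $K$ pairs $(Z_1,N_1),\ldots,(Z_K,N_K)$ are exchangeable by assumption, and since $V$ is built cluster-by-cluster by a common within-cluster-symmetric rule together with cluster-symmetric aggregates (as in \Cref{sup_learning}), the $K$ analogous objects built from $\tilde Z$ are exchangeable too; hence, given the unordered collection of clusters, the label ``$K$'' picks out a uniformly chosen one of the $K$ clusters; (ii) conditional on $\vec N=\vec n$ the data $Z$ is $\mathcal G_{\vec n}$-invariant (within-cluster exchangeability), so by $\mathcal G_{\vec n}$-distributional equivariance of $V$ the vector $\tilde Z$ is $\mathcal G_{\vec n}$-distributionally invariant, which makes the position-$N_k$ observation uniform (with multiplicity) over that cluster's value-multiset given the latter. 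Composing (i)--(ii): conditionally on $\mathcal M$, $[\tilde Z]_{N_K}^{(K)}$ is obtained by drawing a cluster index uniformly from $[K]$ and then a value uniformly from that cluster's $N_k$ values, which is exactly a draw from $\mu(\tilde Z)$.

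Given the claim, the lower bound is immediate: conditionally on $\mathcal M$ (and noting the constraint $\obs(z)=\zo$ in \eqref{T2} is met by $z=Z$), $P(Z\in T^{\mathrm{\OurMethod}}(\Zo,\vec N)\mid\mathcal M) = P(\psi(\tilde Z)\le t_{\tilde Z}\mid\mathcal M) = \mu(\tilde Z)\bigl((-\infty,Q_{1-\alpha}(\mu(\tilde Z))]\bigr)\ge 1-\alpha$ by definition of the $(1-\alpha)$-quantile; integrating over $\mathcal M$ gives the claim. For the upper bound, assuming all $Z_j^{(k)}$ continuous, the $\sum_k N_k$ coordinates of $\tilde Z$ are a.s.\ distinct (ties lie in a null set for the maps $V$ considered), so $\mu(\tilde Z)$ has atoms of mass $1/(KN_k)$ at the cluster-$k$ values and largest atom of mass $\max_{k\in[K]}1/(KN_k)$; using the elementary fact that for a distribution $\nu$ on $\R$ with largest atom of mass $\tau$ and $X\sim\nu$ one has $P(X\le Q_{1-\alpha}(\nu))=F_\nu(Q_{1-\alpha}(\nu))\le 1-\alpha+\tau$, we obtain $P(\psi(\tilde Z)\le t_{\tilde Z}\mid\mathcal M)\le 1-\alpha+\max_{k\in[K]}1/(KN_k)$, and integrating over $\mathcal M$ yields $P(Z\in T^{\mathrm{\OurMethod}}(\Zo,\vec N))\le 1-\alpha+\E[\max_{j\in[K]}1/(KN_j)]$. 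Setting $N_k\equiv M$ recovers \Cref{meta_unsup}.

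The main obstacle is making exchangeability (i) rigorous with \emph{variable} block sizes: one must set up $\mathcal M$ cleanly as generated by cluster- and within-cluster-symmetric functionals of $(\tilde Z,\vec N)$, and then verify the two-stage ``uniform cluster, then uniform within cluster'' decomposition of the conditional law of the test statistic --- in particular checking that the \emph{explicit} construction of $V$ (not merely its abstract $\mathcal G_{\vec n}$-equivariance) commutes with permutations of whole clusters, which is what transfers the cluster-level exchangeability of $Z$ to $\tilde Z$. The remaining ingredients --- the reduction to conditioning on $Z_{\tr}$ in the supervised case, and the measure-zero no-ties argument used for the upper bound --- are routine.
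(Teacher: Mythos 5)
Your proposal is correct and follows essentially the same route as the paper's proof: a two-stage symmetrization that first uses the conditional $\mathcal{G}_{\vec n}$-invariance to replace the last within-cluster coordinate by the within-cluster average of exceedance indicators, then uses cluster-level exchangeability (the paper phrases this as $h(Z)=_d G'h(Z)$ with $G'\sim\mathrm{Unif}(\S_K)$ rather than conditioning on your orbit $\sigma$-field $\mathcal M$, but it is the same argument) and the definition of the quantile in \eqref{quantile_random}. The only presentational difference is that you spell out the upper bound via the largest-atom bound $\max_k 1/(KN_k)$, which the paper leaves implicit by referring to the argument of Theorem \ref{c1}.
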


The proof of this Proposition is deferred to \Cref{app: prop_random_proof} of the Appendix.

\cite{lee2023distribution}
consider the closely related
problem of constructing a prediction set for 
the \emph{first} observation in
a new branch in the supervised learning regime 
an identical two-layer hierarchical model with random sample sizes.
This problem is distinct from the question of predicting the last unobserved outcome 
considered in \Cref{prop_random}.
However, our general framework includes their problem as a special case.
For simplicity, we will explain this in the unsupervised case.
We let the  observation function be $\obs(z)=((z_i^{(k)})_{k\in [K-1], i\in [n_k]})$. 
We then set
our prediction set in \eqref{T2} 
by taking
 $\tilde{z}_m^{(k)}$ in \eqref{quantile_random} as in \eqref{usls}
with $c=\infty$
and $\hat\sigma_k:=1$.

Then, our prediction region from  \eqref{T2} 
is equivalent to one for $(z_1^{(K)}, \ldots, z_{n_K}^{(K)})$, 
where the number of observations $n_K$ is not known.
Thus, our prediction region includes a union over the unknown values of $n_K\ge 0$.
The
induced prediction region for $z_1^{(K)}$ given by  the union of the 
projection of these prediction regions into their first coordinates is clearly a valid $1-\alpha$-coverage region.
Further, it is immediate that the union is included in the one with $n_K=1$, which becomes 
\beqs
\Bigg\{z_{1}^{(K)}: \tz_{1}^{(K)}\le 
Q_{1-\alpha}\bigg(\frac{1}{K}
\sum_{k=1}^{K-1}\frac{1}{n_k}\sum_{m=1}^{n_k}\delta_{\tz_{m}^{(k)}} + \frac{1}{K} \delta_{\tz_{1}^{(K)}} \bigg)\Bigg\}.
\eeqs
Up to changes of notation (such as our $K$ being their $K+1$), this recovers the HCP method of \cite{lee2023distribution}.

However, HCP does not aim to form predictions in the setting 
when there are multiple observations in the last branch. 
In particular, HCP
can lead to  wider prediction sets when $\mu_{P_k}$ differ a great deal across different $P_k,k\in [K]$, 
since the algorithm does not take the heterogeneity across different branches into account. 
We present a detailed simulation to compare 
the performance of these methods in \Cref{rss}.

\section{Simulations and Data Analysis Examples}

\subsection{Simulation Studies}
In this section, we provide simulations to corroborate the efficacy of our proposed approach, focusing on the two-layer hierarchical model from \Cref{sec:twolayer-detail}. 
Specifically, we conduct simulations under the scenarios of both unsupervised and supervised learning with both non-random and random instances of $\vec N$, respectively. 
We present the simulation results with a fixed $\vec N$ here, and we defer the results with random $\vec N$ to the appendix (\Cref{rss}).

{\bf Unsupervised Learning: Fixed Sample Size.}
We now present the simulation results 
with our proposed method in the context of unsupervised learning. 
We set the number of branches to $K=20$, where each branch has $M=15$ observations; 
but $Z_{M}^{(K)}$ is unobserved and needs to be predicted. 
We let $Z_{i}^{(k)},i\in [M]$ be sampled i.i.d.~from $\N(\mu_k, 0.5)$, with $k \in [K]$, 
and where $\mu_k,k\in [K]$ follow a normal distribution $\N(0, \sigma^2)$.

We consider $\sigma^2\in \{10, 2, 0.5, 0\}$. 
When $\sigma^2=0$, all location parameters are equal, reducing to the special case of full exchangeability. 
To construct prediction sets, we apply the method described in Section \ref{sec:meta_hetero}. 
In all numerical examples in this paper, we set $c= 2$, based on the quantiles of the standard Gaussian distribution. 
The simulation results are presented in Table \ref{tab: fix_N_unsup}.
We provide a summary and conclusions in the next subsection.

   \begin{table}[th]
	\begin{center}
		\def\arraystretch{1.2}
		\setlength\tabcolsep{4pt}
		\begin{tabular}{c||ccccc}
			\toprule
			&Method  & $\sigma^2=10$ & $\sigma^2=2$ & $\sigma^2=0.5$  & $\sigma^2=0$   \\
			\hline
   \multirow{8}{*}{$\alpha=0.05$} & Length: \OurMethod~& 2.050 (0.012) & 2.054 (0.015) & 2.088 (0.023)&  1.996 (0.014)\\ 
   & Coverage: \OurMethod~& 0.959 (0.020)  & 0.948 (0.024)  &0.950 (0.025) & 0.945 (0.025)\\
       & Length: Conformal &40.614 (0.818) & 7.948 (0.150) & 2.748 (0.025) & 1.974 (0.012)\\
           & Coverage: Conformal & 0.974 (0.025)  & 0.957 (0.023) & 0.951 (0.023) & 0.944 (0.025)\\
       & Length: Subsampling &44.254 (0.899) & 9.115 (0.190) & 3.122 (0.070) & 2.208 (0.049)\\
           & Coverage: Subsampling&  0.947 (0.026) & 0.959 (0.021) & 0.947 (0.026) & 0.946 (0.025)\\
           & Length: Single-Tree & Inf & Inf & Inf & Inf \\
           & Coverage: Single-Tree& 1.0 (0.0)  & 1.0 (0.0)  & 1.0 (0.0) & 1.0 (0.0) \\
    \hline
       \multirow{8}{*}{$\alpha=0.15$} & Length: \OurMethod~&1.496 (0.009) & 1.502 (0.010)  & 1.527 (0.010) & 1.465 (0.012) \\ 
   & Coverage: \OurMethod~&0.848 (0.033)  & 0.859 (0.036)  & 0.859 (0.037) & 0.845 (0.030)\\
       & Length: Conformal & 28.767 (0.645)   &5.827 (0.113)  & 2.020 (0.017) & 1.455 (0.007) \\
           & Coverage: Conformal &0.855 (0.035)  &0.852 (0.039) &0.847 (0.033) & 0.845 (0.028)\\
                  & Length: Subsampling &31.064 (0.685) & 6.365 (0.149) & 2.183 (0.047) & 1.539 (0.034)\\
           & Coverage: Subsampling&0.852 (0.035)  &0.849 (0.039)  &0.856 (0.035)  &0.839 (0.034) \\
           & Length: Single-Tree & 1.658 (0.038) &1.658 (0.029) &1.655 (0.032) & 1.649 (0.037) \\
           & Coverage: Single-Tree &0.868 (0.036)  &0.867 (0.029)  &0.874 (0.034)  &0.864 (0.029) \\
    \hline

			\bottomrule
		\end{tabular}
	\end{center}
	\caption{Prediction set length and coverage probability of our method and three benchmarks discussed in \Cref{sec:comp} with fixed $\vec N$ in unsupervised learning. We conduct 40 independent repetitions of the experiment. For each trial, we average prediction lengths and indicator values of coverage over 100 test datapoints (by generating the training, calibration and test data 100 times independently). 
 The reported values are the average length and coverage over these 40 trials, and the associated standard errors (in brackets).} 
	\label{tab: fix_N_unsup}
\end{table}

{\bf Supervised learning: Fixed sample size.}
We next study the simulation performance of our proposed method in a simple supervised learning example. 
Specifically, we let $\theta_k,k\in [20]$ be sampled from $\N(0,\sigma^2)$, 
where we consider $\sigma^2\in \{10, 2, 0.5, 0\}$.
We let $Y_{j}^{(k)},k\in[20],$ $j\in [30]$ be sampled from $Y_{j}^{(k)}=\theta_{k}X_{j}^{(k)}+\epsilon_{j}^{(k)}$,
where
$X_{j}^{(k)}\sim \mathrm{Unif}(-0.5,0.5)$ and $\epsilon_{j}^{(k)}\sim \N(0,0.5^2)$ for all $k\in [20],$ $j\in [30].$
For ease of computation, 
we conduct split conformal prediction where we split half of the data (15 datapoints) 
in each branch to fit $\hat\mu_k,k\in[20]$ via linear regression,
and thus also obtain a confidence band $\hat\sigma_k$ induced by the linear regression estimator. 
In addition, we also train $\hat\mu$ using all training data via linear regression.
We present the performance of our constructed prediction sets and of baseline methods 
in Table \ref{tab:fix_N_sup}.

    \begin{table}[ht]
	\begin{center}
		\def\arraystretch{1.2}
		\setlength\tabcolsep{4pt}
		\begin{tabular}{c||ccccc}
			\toprule
			&Method  & $\sigma^2=10$ & $\sigma^2=2$ & $\sigma^2=0.5$  & $\sigma^2=0$   \\
			\hline
   \multirow{8}{*}{$\alpha=0.05$} & Length: \OurMethod~&  2.048 (0.012) & 2.068 (0.014)  & 2.044 (0.011) &  1.991 (0.013) \\ 
   & Coverage: \OurMethod~&  0.952 (0.024) & 0.952 (0.023)  &0.950 (0.023) & 0.955 (0.019)\\
       & Length: Conformal & 12.365 (0.234) & 3.057 (0.034) & 2.053 (0.011) & 1.973 (0.012) \\
           & Coverage: Conformal & 0.945 (0.023)  & 0.949 (0.022) & 0.952 (0.023) & 0.953 (0.019) \\
       & Length: Subsampling & 14.091 (0.419) & 3.418 (0.099) & 2.239 (0.057) & 2.151 (0.043) \\
           & Coverage: Subsampling& 0.947 (0.023) & 0.953 (0.022)  & 0.954 (0.021)  & 0.953 (0.022)\\
           & Length: Single-Tree & Inf & Inf & Inf & Inf  \\
           & Coverage: Single-Tree& 1.0 (0.0)  &  1.0 (0.0) & 1.0 (0.0) & 1.0 (0.0) \\
    \hline
       \multirow{8}{*}{$\alpha=0.15$} & Length: \OurMethod~& 1.498 (0.009) & 1.452 (0.008)  & 1.495 (0.008) & 1.457 (0.008) \\ 
   & Coverage: \OurMethod~& 0.854 (0.030) & 0.851 (0.039) & 0.848 (0.035) & 0.851 (0.042) \\
       & Length: Conformal & 7.911 (0.155)  & 2.124 (0.022)  & 1.500 (0.008) & 1.445 (0.007) \\
           & Coverage: Conformal & 0.843 (0.035)  &0.856 (0.036) & 0.848 (0.033) & 0.851 (0.042) \\
                  & Length: Subsampling & 8.451 (0.222) & 2.233 (0.053) & 1.551 (0.028) & 1.500 (0.029) \\
           & Coverage: Subsampling& 0.851 (0.033) & 0.846 (0.039)  & 0.844 (0.037)  & 0.852 (0.046) \\
           & Length: Single-Tree & 1.646 (0.041) & 1.662 (0.036) & 1.646 (0.029) & 1.607 (0.039) \\
           & Coverage: Single-Tree& 0.867 (0.032) & 0.869 (0.034) & 0.856 (0.029) & 0.854 (0.037) \\
    \hline

			\bottomrule
		\end{tabular}
	\end{center}
	\caption{Prediction set length and coverage probability of our method and three benchmarks for supervised learning; same protocol as in Table \ref{tab: fix_N_unsup}.} 
 	\label{tab:fix_N_sup}
\end{table}

{\bf Summary of results.}
We now provide insights into the observed results.
In both unsupervised and supervised learning contexts, when $\sigma^2$ is large
and and the distributions corresponding to different branches are more dispersed, 
our prediction sets remain close to having optimal length---e.g., 
for $\alpha=0.05$, 
$2\cdot 1/2\cdot 1.96 = 1.96$,
as determined by the normal quantiles.
However, under these circumstances,
the conformal prediction and subsampling
 baselines
tend to yield significantly wider and less informative prediction sets.
Moreover, when $\alpha=0.05$,
the single-tree baseline 
does not yield an informative prediction set---specifically, it returns the entire real line as the prediction set---due to the limited sample size within the branch.

Even when $\sigma=0$, 
our method
results in prediction sets of comparable length to those generated by standard conformal inference; 
since our method essentially interpolates between within-branch and global distributions.
Furthermore, since we use data from other branches for calibration, the standard error of the length of our prediction set is smaller. 

\subsection{Empirical Data Analysis}
\subsubsection{Two-layer Hierarchical Model}
First, we provide an empirical data analysis example for \OurMethod~in a two-layer hierarchical model by
analyzing a sleep deprivation dataset \citep{belenky2003patterns,balkin2000effects}, where 18 drivers' reaction times after $0,\ldots,9$ nights of three hours of sleep restriction
are recorded.
This dataset has also been investigated in the  related work by \cite{dunn2022distribution} studying two-layer hierarchical models, 
and we follow their approach to define the covariates and responses. 

Specifically, the response variable $Y$ is the sleep-deprived reaction time, the covariate $X_1$ represents the number of days of sleep deprivation, and the covariate $X_2$ denotes the baseline reaction time on day zero, 
with normal sleep. 
For each of the 18 individuals, 
we observe nine triplets $(X_{1,j}^{(k)},X_{2,j}^{(k)},Y_{j}^{(k)}),$ $j\in[9],k\in [18]$. 
In our analysis, 
we model these nine triplets as drawn independently from a distribution $P_{k},k\in [18]$. 
We discuss the modelling asumptions in Section \ref{das}.
 
Next, we discuss the experimental setting. We repeat our experiments over 100 independent trials. For each trial, we randomly split the data into training, calibration, and test 
datasets independently 500 times, as follows:
For each train-calibration-test split, we first randomly select two-thirds of the datapoints
from every branch (i.e., six observations) as training data,
to fit models $\hat\mu_k$ using linear regression, and obtain associated confidence bands $\hat\sigma_{k}$, $k\in [18]$. 
 We then pool all the training data from branches
 to fit a linear model $\hat\mu$.
 Next, we randomly select one datapoint from
the remaining $3\times 18=54$ as a test datapoint, and we use the other 53 datapoints as a calibration set. 

Following the same procedure as in the simulation studies, 
we record the averaged
coverage indicators and lengths of prediction sets over 500 test points.  
The box plots 
 of averaged prediction set lengths and coverage probabilities are presented for 
 $\alpha=0.10$ over the 100 independent trials. 
The results are shown in Figure \ref{fig:length}.
We obtain slightly more variable results compared with the simulation studies, because the variances within branches are large (therefore the data is more noisy), and we also have less calibration data. Additionally, analogous results obtained at a significance level of $\alpha=0.20$ are included in the appendix.

 We also compare the performance of our method on this dataset with the benchmarks from \Cref{sec:comp}, 
 including standard conformal prediction 
 by training only one model $\hat\mu$ that applies to every branch, 
 and subsampling \citep{dunn2022distribution}. 
The prediction coverage probabilities obtained by our method and 
standard conformal prediction
are less conservative than those obtained by subsampling, even though all methods have valid coverage.
We expect that the repeated subsampling method from 
 \cite{dunn2022distribution}
could improve stability, but will still be conservative. 
Moreover, our method leads to tighter intervals than the other methods.
These results reinforce the advantages of our method compared to alternative approaches.

\begin{figure}[ht]
	\centering
	\includegraphics[width=1.00\textwidth]{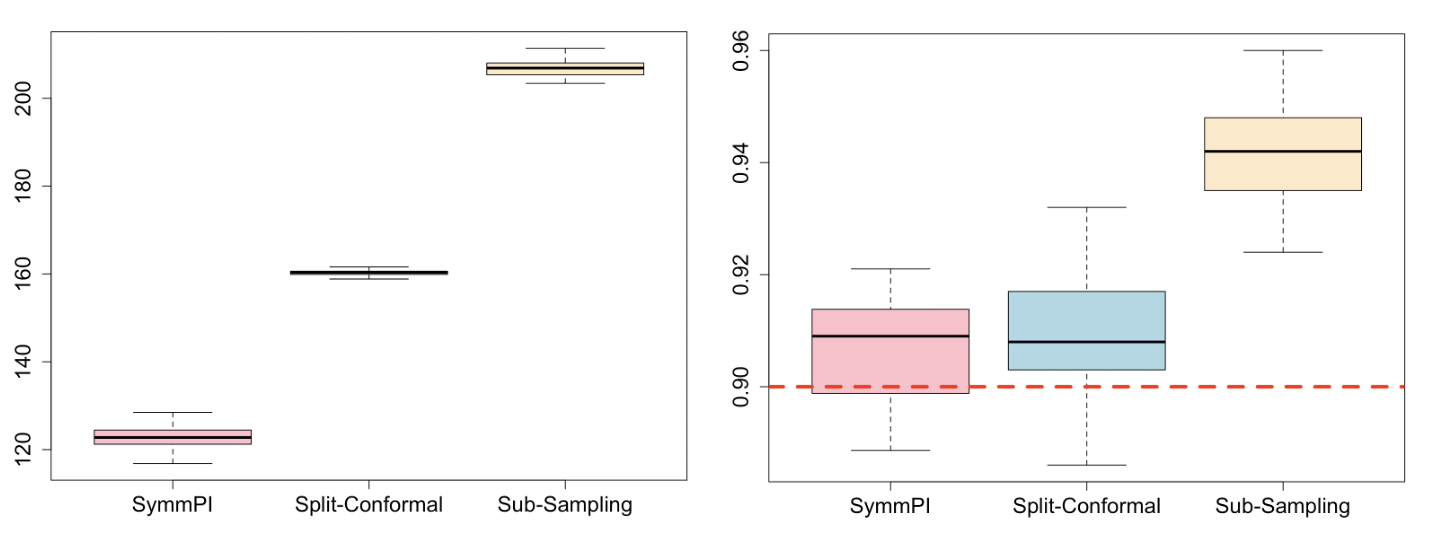}
	 \caption{Empirical data example: Prediction set lengths and coverage probabilities for various methods with level $\alpha=0.10$.  Left: Prediction set lengths; Right: Empirical coverage probabilities.}
           \label{fig:length}
\end{figure}

\subsubsection{Rotationally Invariant Image Classification}
{We present an additional empirical data example for \OurMethod~in a setting of rotationally invariant data. Specifically, we analyze the Multi-type Aircraft of Remote Sensing Images (MTARSI) dataset \citep{wu2020benchmark}, which comprises 9,385 remote sensing images showing top-down views of 20 aircraft types. These images have diverse backgrounds and varied poses. A visualization of some aircraft is shown in Figure \ref{fig_aircraft}.
  \begin{figure}[h]
  \centering
\includegraphics[width=0.65\textwidth]{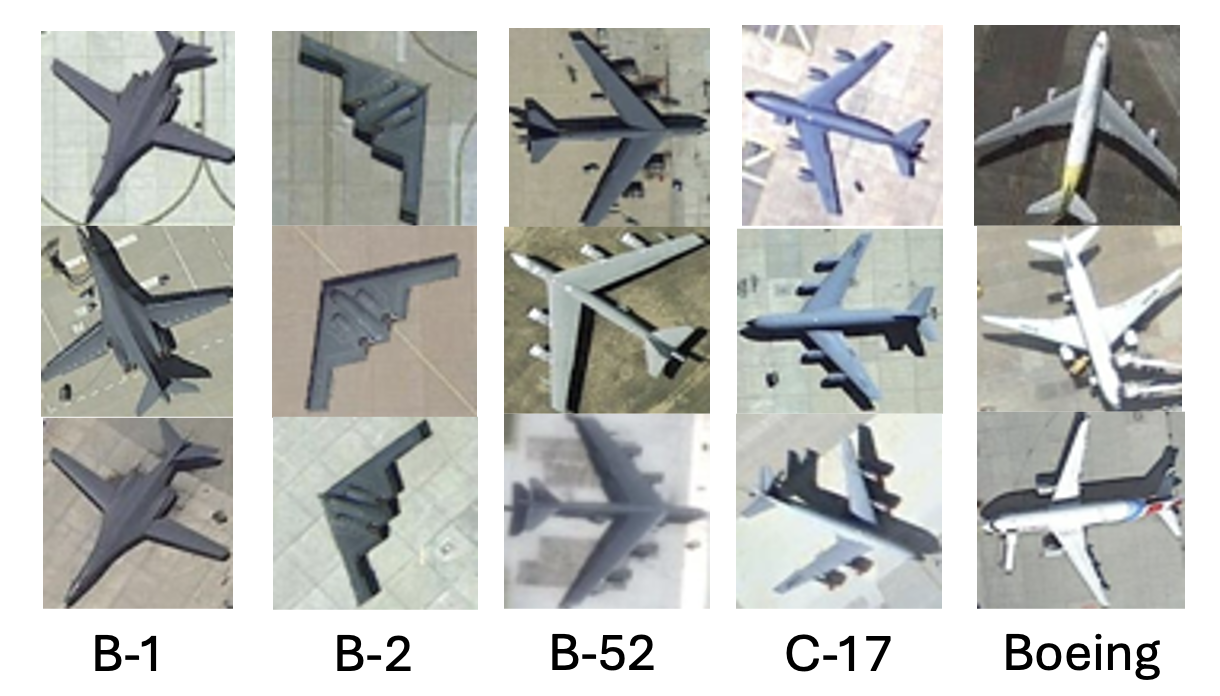}
\caption{\label{fig_aircraft} \small Illustration of aircraft types from the MTARSI dataset.}
\end{figure}}

{We aim to construct prediction sets for the labels of aircraft. To achieve this, we use a pre-trained 
ResNet18 \citep{he2016deep}
deep neural network model, which was trained using 3,000 images as described in \cite{mo2024ric}, serving as our classifier $\hat{p}:(y,x)\mapsto\hat{p}(y,x) \in [0,1]$, where $y$ denotes the label and $x$ represents the input features of the image. The remaining dataset is divided into two subsets: 1,000 images for the calibration data pool and the others for the test data pool.
The datapoints have approximate rotational distributional symmetry about the center of the image.}

{For each experiment, we randomly sample \(N\) images (with $N\in \{18, 36, 50, 80\}$ as shown in Table \ref{tab:data_aircraft}) from the entire calibration data pool as the calibration set. Additionally, we also randomly select 500 unlabeled test images from the test data pool.}

{For each test image, we use the method from \Cref{coex} to construct the prediction set, where we sample $2\times2$ orthogonal rotation matrices \(\mathcal{O} \sim U(O(2))\)---rotating the images around the center---20 times independently for the associated calibration and test images.
We record the averaged length and coverage of the prediction sets for $500$ test images, with $\alpha=0.05$ and $0.1$ respectively.
For comparison, we also construct standard conformal prediction sets without rotating the images. The overall testing procedure is further repeated 100 times independently, and we record the standard errors for both methods. The results are shown in Table \ref{tab:data_aircraft}.}

  \begin{table}[th]
	\begin{center}
		\def\arraystretch{1.2}
		\setlength\tabcolsep{4pt}
		\begin{tabular}{c||ccccc}
			\toprule
			&Method  & $N=18$ & $N=36$ & $N=50$  & $N=80$   \\
			\hline
   \multirow{4}{*}{$\alpha=0.05$} & Length: \OurMethod~& 2.901 (0.411) & 1.943 (0.149) &  1.329 (0.052)  & 1.221 (0.030) \\ 
   & Coverage: \OurMethod~& 0.972 (0.046)   & 0.959 (0.003)   &  0.958 (0.004) & 0.953 (0.003) \\
       & Length: Conformal & 20.00 (0.00)  & 3.004 (0.338)   & 1.951 (0.199)    & 1.260 (0.038) \\
           & Coverage: Conformal & 1.00 (0.00) &  0.977 (0.003)  & 0.964 (0.004)  & 0.954 (0.004) \\
    \hline
       \multirow{4}{*}{$\alpha=0.10$} & Length: \OurMethod~& 1.348 (0.067) & 1.057 (0.032)   & 1.004 (0.021)  &  0.979 (0.009) \\ 
   & Coverage: \OurMethod~& 0.930 (0.006)  & 0.914 (0.006)   &0.907 (0.007)  & 0.904 (0.006)  \\
       & Length: Conformal & 1.600 (0.129)   & 1.166 (0.053)   & 1.032 (0.040)  & 0.985 (0.013) \\
           & Coverage: Conformal & 0.940 (0.008)  & 0.925 (0.007) & 0.910 (0.009) & 0.905 (0.007) \\
    \hline

			\bottomrule
		\end{tabular}
	\end{center}
	\caption{Empirical data example (MTARSI dataset): Prediction set lengths and coverage probabilities for various methods with level $\alpha=0.05$ and $\alpha=0.10$, respectively. For each trial, we construct prediction sets using $N$ calibration images and average the prediction lengths and coverage indicator values over 500 independently sampled test images. The reported values represent the averaged length and coverage over 100 independent trials, with the associated standard errors provided in brackets. } 
	\label{tab:data_aircraft}
\end{table}

{From Table \ref{tab:data_aircraft}, we draw the following conclusions: Even when the size of the calibration sample is small \OurMethod~provides valid prediction sets with reasonable length and coverage. 
In contrast, in this case, standard conformal prediction can yield uninformative prediction sets (encompassing all labels).
Furthermore, even with larger calibration datasets for both methods, our approach consistently achieves more stable prediction intervals (somewhat smaller length and standard error) with comparable coverage probabilities compared to conformal prediction. }

{As a final remark, it is well-documented that classifiers trained on augmented datasets can achieve better classification accuracy compared to those trained on non-augmented datasets, see e.g., \citep{chen2020group} and references therein. 
Thus, if training with data augmentation, we expect that both  \OurMethod~and standard conformal prediction would achieve smaller prediction sets; but \OurMethod~would still have benefits during the calibration phase. 
In this experiment, to demonstrate performance, 
we focus on a pre-trained classifier.}

\section{Conclusion and Discussion}

We have presented a general methodology for predictive inference in arbitrary observation models satisfying distributional invariance.
We have illustrated that our methods have competitive performance in a two layer hierarchical model.
There are a number of intriguing directions for further research.
In some  examples, the data itself might not satisfy distributional invariance, but some transformation of the data---possibly dependent on unknown parameters---might do so.
Can one extend our methods to this setting, possibly leveraging ideas such as joint coverage regions \citep{dobriban2023joint}?
Moreover, is it possible to learn 
equivariant maps to enable improved predictive inference, as opposed to designing them as we did in this paper?
Studying these questions is expected to benefit the broad applicability of rigorous predictive inference methods.

\section*{Acknowledgements}
This work was supported in part by ARO W911NF-20-1-0080, ARO W911NF-23-1-0296,
NSF 2031895,
NSF 2046874, ONR N00014-21-1-2843, and the Sloan Foundation. 
We thank 
Yonghoon Lee,
Xiao Ma,
Matteo Sesia, 
Vladimir Vovk,
Yao Xie,
Sheng Xu, and Yuling Yan
for helpful discussion and feedback on earlier versions of the manuscript. We also thank Hanlin Mo for kindly providing their pretrained models and datasets.

{\small
\setlength{\bibsep}{0.2pt plus 0.3ex}
\bibliographystyle{plainnat-abbrev}
\bibliography{references}
}

\section{Supplementary Material}
\label{supp}
{\bf Additional notation and definitions.}
In the supplementary material we will use the following additional notation and definitions.
If two sets $A,B$ are in a bijection, we write $A\cong B$.
For a group $\mG$ acting on a set $\mZ$ by an action $\rho$, 
the stabilizer of an element $z\in \mZ$ is the set $\{g\in \mG: \rho(g)z = z\}$.
The stabilizer is a subgroup of $\mG$.
For two sets $S,S_0$, a map $V:S\to S_0$, and a subset $S' \subseteq S_0$, we denote by $V^{-1}(S')$ the preimage of $S'$ under $V$.

\subsection{Discussion of Equivariance Properties}
\label{disp}

In this section, we discuss the key properties of
deterministic equivariance,
distributional equivariance, and
distributional invariance, 
which form the building blocks of our theory.
We shed light on  the connections between these properties, and on their connections with various classical topics in statistics and the mathematical sciences.
These discussions are not strictly needed for understanding the description of our methods.

\subsubsection{Deterministic Equivariance}
\label{dispd}

Recall that the 
deterministic $\mG$ equivariance
condition \eqref{e} requires that 
for all  $z\in\mZ$ and all $g\in \mG$,
$V(\rg z)=  \trg V(z)$.
Equivalently, one can require that for all $\mZ$-valued random variables, $V(\rg Z)=  \trg V(Z)$.
Given $z \in \mZ$, and $\tilde z \in \tmZ$,
let $\mH_z,\tilde \mH_{\tilde z}$ be their stabilizers with respect to $\rho,\tir$, respectively.
We have the following result:

\begin{proposition}[Characterizing equivariant maps]\label{chardeteq}
The functions $V$
satisfying
the deterministic equivariance condition  \eqref{e} can be described as follows: partition $\mZ$ into orbits, 
collecting representatives in a set $R$, so that
we 
have the disjoint union
$\mZ = \cup_{z\in R} O_z$.
For each orbit representative $z\in R$, 
choose $V(z)$ such that  $\mH_z$ is a subgroup of $ \tilde\mH_{V(z)}$. 
For all 
$z' \in O_z$, choose any $g\in \mG$ such that
$z' = \rg z$, and define $V(z') = \trg V(z)$.     
\end{proposition}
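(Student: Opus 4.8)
The plan is to establish the two implications underlying the stated characterization: (i) every $V$ produced by the recipe is a well-defined function $\mZ\to\tmZ$ satisfying the deterministic equivariance condition \eqref{e}; and (ii) conversely, every $V$ satisfying \eqref{e} arises from the recipe for a suitable assignment of values on orbit representatives. Throughout, I would use freely the axiom of choice (to select orbit representatives and the group elements appearing in the recipe) and the standard fact, recalled above, that stabilizers $\mH_z$, $\tilde\mH_{\tz}$ are subgroups of $\mG$.

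For (i), the step I expect to require the most care is \emph{well-definedness}: given an orbit representative $z\in R$ and a point $z'\in O_z$, the prescription $V(z')=\trg V(z)$ must be independent of the choice of $g\in\mG$ with $z'=\rg z$. The plan here is: if $\rg z = \rho(g')z = z'$, then $\rho(g^{-1}g')z = z$, so $g^{-1}g'\in\mH_z$; since the recipe requires $\mH_z$ to be a subgroup of $\tilde\mH_{V(z)}$, this gives $\tilde\rho(g^{-1}g')V(z)=V(z)$, hence $\tilde\rho(g')V(z)=\trg V(z)$, as needed. Once $V$ is well-defined, verifying \eqref{e} is a short computation: for $w\in\mZ$ and $h\in\mG$, let $z\in R$ represent $O_w$ and write $w=\rg z$, so $V(w)=\trg V(z)$; then $\rho(h)w=\rho(hg)z\in O_z$, and the defining prescription gives $V(\rho(h)w)=\tilde\rho(hg)V(z)=\tilde\rho(h)\trg V(z)=\tilde\rho(h)V(w)$.

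For (ii), suppose $V$ satisfies \eqref{e}. First I would check that, for each $z\in R$, the value $V(z)$ is an admissible choice for the recipe, i.e., $\mH_z$ is a subgroup of $\tilde\mH_{V(z)}$: for $g\in\mH_z$, \eqref{e} gives $V(z)=V(\rg z)=\trg V(z)$, so $g\in\tilde\mH_{V(z)}$; thus $\mH_z\subseteq\tilde\mH_{V(z)}$, and since both are subgroups of $\mG$, $\mH_z$ is a subgroup of $\tilde\mH_{V(z)}$. Then, for any $z'\in O_z$ and any $g$ with $z'=\rg z$, \eqref{e} yields $V(z')=V(\rg z)=\trg V(z)$, which is exactly the value the recipe would assign. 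Hence $V$ coincides with the map built from the recipe using the values $\{V(z):z\in R\}$, completing the characterization. The only genuinely delicate ingredient is the interplay in (i) between the ambiguity in the choice of $g$ and the subgroup condition on $\mH_z$; everything else is bookkeeping with the orbit partition.
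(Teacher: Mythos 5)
Your proposal is correct and follows essentially the same route as the paper's proof: well-definedness of the recipe via the subgroup condition $\mH_z \leq \tilde\mH_{V(z)}$ (handling the ambiguity $\rg z = \rho(g')z$ through the stabilizer), and conversely showing any equivariant $V$ satisfies $\mH_z \subseteq \tilde\mH_{V(z)}$ and agrees with the recipe on each orbit. If anything, your writeup is slightly more explicit than the paper's in separately verifying the equivariance identity $V(\rho(h)w)=\tilde\rho(h)V(w)$ after establishing well-definedness, which is a welcome addition rather than a deviation.
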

\begin{proof}
First, we show that the procedure from the statement leads to equivariant functions.  
Indeed, if 
 for some $g'\in \mG$,
$\rg z = \rho(g')z$, 
then the definition of $V$ leads to $V(\rg z) = V(\rho(g')z)$, so $\trg V(z) = \tilde\rho(g') V(z)$. This shows that we must have $g(g')^{-1} \in  \tilde\mH_{V(z)}$. 
Now, $\rg z = \rho(g')z$ shows that we have 
$g(g')^{-1} \in \mH_{z}$; and since by assumption 
$\mH_z$ is a subgroup of $ \tilde\mH_{V(z)}$,
the required condition $g(g')^{-1} \in  \tilde\mH_{V(z)}$ follows, showing that the above procedure always leads to equivariant functions.  

Second, we show that all equivariant functions satisfy these conditions.
For any $h\in \mH_z$, we have $V(z) = V(\rh z) = \trh V(z)$. 
Thus, $\mH_z\subset \tilde\mH_{V(z)}$, and since $\mH_z,\tilde\mH_{V(z)}$ are both subgroups of $\mG$, 
it follows that  $\mH_z$ is a subgroup of $ \tilde\mH_{V(z)}$.
Moreover, if 
$z' = \rg z$, by equivariance, we have $V(z') = \trg V(z)$; showing that equivariant functions satisfy these conditions.     
\end{proof}

\subsubsection{Distributional Equivariance}
\label{deqs}

We next characterize distributional equivariance. 
Consider a fixed $z\in \mZ$.
 Recall that
 for $v \in \mZ$,
 $O_v = \{\rg \cdot v: g\in \mG\} $ is the orbit of $v$ under $\mG$.
Denote the distribution of $\rG v$ 
over the orbit $O_v$ with the sigma-algebra generated by the intersection of the sigma-algebra over $\mZ$ with $O_v$
when $G\sim U$
by $\mu_v$.
Similarly, for $v\in \tmZ$, 
define $\tilde  O_v = \{\trg  \cdot v: g\in \mG\} $.
We have the following result:

\begin{proposition}[Characterizing distributionally equivariant maps]\label{diem}
    The functions $V$
satisfying
the distributional equivariance condition  \eqref{e} can be described as follows: 
for $P$-almost every $z$, we have that 
for any measurable set $\tilde S \subset \tO_{V(z)}$
and
any $g\in \mG$,
    \beq\label{dec}
    \mu_{z}(V^{-1}(\tilde S))= \mu_{z}(V^{-1}(\trg  \tilde S)).
    \eeq
\end{proposition}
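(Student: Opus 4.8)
The plan is to unwind the definition of distributional equivariance $V(\rho(G)Z) =_d \tilde\rho(G) V(Z)$ by conditioning on the orbit of $Z$. The key observation is that when $G \sim U$, the random variable $\rho(G) z$ is distributed as $\mu_z$, the uniform distribution over the orbit $O_z$; and this distribution does not depend on $P$. Hence the distribution of $\rho(G) Z$, for $Z \sim P$ and $G \sim U$ independent, can be written as a mixture $\int \mu_z \, dP(z)$ over orbit representatives. The same reasoning applies on the target side: the distribution of $\tilde\rho(G) V(Z)$ is the mixture $\int \tilde\mu_{V(z)} \, dP(z)$. Because distinct orbits are disjoint and measurable, the equality in distribution $V(\rho(G)Z) =_d \tilde\rho(G) V(Z)$ decouples across orbits, i.e., it holds if and only if for $P$-almost every $z$, the pushforward of $\mu_z$ under $V$ equals $\tilde\mu_{V(z)}$ as measures on $\tilde O_{V(z)}$.

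First I would make precise the claim that $V(\rho(G)Z) =_d \tilde\rho(G)V(Z)$ is equivalent to $V_* \mu_z = \tilde\mu_{V(z)}$ for $P$-a.e.\ $z$. One direction is immediate: if the per-orbit equality holds, integrating against $dP(z)$ recovers the equality of the mixtures, hence the equality in distribution. For the converse, I would use that the orbit map $z \mapsto O_z$ induces a measurable partition; testing the distributional equality against sets of the form $V^{-1}(\tilde S) \cap (\text{preimage of a set of orbits})$ isolates the contribution of each orbit, forcing $V_*\mu_z$ and $\tilde\mu_{V(z)}$ to agree $P$-a.e. Here I would lean on standard facts about disintegration of measures over the orbit space (for compact Hausdorff groups acting measurably, the orbit space is well-behaved enough for this; cf.\ the Haar-measure machinery cited in \Cref{sec:prelim}).

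Next I would show that $V_*\mu_z = \tilde\mu_{V(z)}$ on $\tilde O_{V(z)}$ is precisely condition \eqref{dec}. The measure $\tilde\mu_{V(z)}$ is by definition the distribution of $\tilde\rho(G) V(z)$, which is the unique (by the invariance property of Haar measure, $gG \sim U$) measure on $\tilde O_{V(z)}$ that is invariant under the action $\tilde\rho$: for any measurable $\tilde S \subset \tilde O_{V(z)}$ and any $g \in \mG$, $\tilde\mu_{V(z)}(\tilde S) = \tilde\mu_{V(z)}(\tilde\rho(g)\tilde S)$. So $V_*\mu_z = \tilde\mu_{V(z)}$ would in particular force $\mu_z(V^{-1}(\tilde S)) = \mu_z(V^{-1}(\tilde\rho(g)\tilde S))$, which is \eqref{dec}. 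Conversely, \eqref{dec} says $V_*\mu_z$ is $\tilde\rho$-invariant on $\tilde O_{V(z)}$; since the uniform distribution over a single orbit is the unique invariant probability measure there, and $V_*\mu_z$ is a probability measure supported on $\tilde O_{V(z)}$, uniqueness gives $V_*\mu_z = \tilde\mu_{V(z)}$.

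The main obstacle I anticipate is the measure-theoretic bookkeeping around the orbit space: making rigorous the disintegration of $P$ over orbits and the claim that the per-orbit identities can be recovered from the global equality in distribution, without imposing extra regularity. The cleanest route is probably to avoid an explicit disintegration and instead argue directly: fix the function $g \mapsto $ (index of orbit of $\rho(g)z$) is constant, so $\rho(G)Z$ and $Z$ lie in the same orbit almost surely, meaning the orbit is an ancillary statistic (as already noted after Definition \ref{dedef}); conditioning on it reduces the distributional equality to the within-orbit statement. I would also need to be slightly careful that \eqref{dec} is only required for $\tilde S \subset \tilde O_{V(z)}$ — sets reaching outside the orbit carry no mass under either measure, so this restriction is harmless — and that "$P$-almost every $z$" is the right quantifier since $V$ and the orbit structure are only pinned down up to $P$-null sets.
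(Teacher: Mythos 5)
Your overall route is the paper's: reduce the distributional condition \eqref{d} to the per-point pushforward identity $V\#\mu_z=\tilde\mu_{V(z)}$ for $P$-almost every $z$, then translate it into \eqref{dec} via the $\tir$-invariance of $\tilde\mu_{V(z)}$ (invariance of the Haar measure). Your handling of the reduction step is at the same level of rigor as the paper's, which simply ``conditions on $Z=z$''; as you anticipate, for one fixed $P$ the equality of the two mixture laws on $\tmZ$ does not by itself force the per-$z$ identity (mass from different source orbits that $V$ sends into a common target orbit can be reshuffled while preserving the marginal law), and the clean justification is the ``for all $P\in\mP$'' quantifier in Definition \ref{dedef} --- e.g.\ testing with point masses or orbit-conditional laws --- rather than a disintegration of a single $P$. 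Your forward direction (equivariance implies \eqref{dec}) is correct and matches the paper's argument.

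Where you go beyond the paper is the converse, \eqref{dec} $\Rightarrow$ $V\#\mu_z=\tilde\mu_{V(z)}$, via uniqueness of the invariant probability measure on an orbit; the paper only derives \eqref{dec} from the pushforward identity and leaves this direction implicit. There is a genuine gap in your version: you assert that $V\#\mu_z$ is ``a probability measure supported on $\tO_{V(z)}$,'' but \eqref{dec} only tests sets $\tilde S\subset\tO_{V(z)}$, so it constrains $V\#\mu_z$ on that orbit and cannot rule out mass escaping to other orbits; correspondingly, your closing remark that the restriction $\tilde S\subset\tO_{V(z)}$ is ``harmless'' is true under $\tilde\mu_{V(z)}$ but not under $V\#\mu_z$. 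Concretely, let $\mG=\mathbb{Z}_4$ act on $\mZ=\{1,2,3,4\}$ by cyclic shift and on $\tmZ=\{a,b,c\}$ with the generator swapping $a,b$ and fixing $c$, and set $V(1)=a$, $V(2)=b$, $V(3)=V(4)=c$: then \eqref{dec} holds at every $z$, yet $V\#\mu_1$ places mass $1/2$ on $c\notin\tO_a$ and $V$ is not distributionally equivariant for $P=\delta_1$. So the converse requires an additional ingredient pinning down the support, e.g.\ also imposing $\mu_z\bigl(V^{-1}(\tO_{V(z)})\bigr)=1$, after which your uniqueness argument (the orbit of a compact group carries a unique invariant probability measure) goes through.
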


    Informally, this means that the preimages of the elements of the orbit of $V(z)$ under $\mG$ have the same density in the original orbit of $z$.
    Or, ``orbits map to orbits, in the natural measure-preserving way".
    In comparison, as shown in \Cref{dispd}, deterministic $\mG$-equivariance requires a more restrictive specific one-to-one correspondence within each orbit. 
    
\begin{proof}
If 
\eqref{d} holds, we can condition on $Z=z$  to deduce that $V(\rG z)=_d \trG V(z)$ for $P$-almost every $z$.
Now, $V(\rG z)$ is distributed as the pushforward 
$V \# \mu_z$ of $\mu_z$ under $V$, 
while $\trG V(z) \sim \tilde \mu_{V(z)}$, 
so \eqref{d} holds iff
    $$V \# \mu_z = \tilde \mu_{V(z)}.$$
    
    By the definition of a pushforward measure, for any measurable set $\tilde S \subset \tO_{\tilde z}$, 
    $V \# \mu_z(\tilde S) = \mu_{z}(V^{-1}(\tilde S)).$
    Hence, the above means 
    $\tilde \mu_{V(z)}(\tilde S)= \mu_{z}(V^{-1}(\tilde S)).$
    Since for $G\sim U$, and any $g\in \mG$,
    $\trg\trG V(z) = _d \trG V(z)$,
    it follows that
    $\tilde \mu_{V(z)}(\trg \tilde S)= \tilde \mu_{V(z)}(\tilde S)$, and thus the above implies
    that for any $g\in \mG$, \eqref{dec} holds.      
\end{proof}

For a finite group $\mG$, 
we obtain the following simpler result.
Given $z \in \mZ$, and $\tilde z \in \tmZ$,
let $\mH_z,\tilde \mH_{\tilde z}$ be their stabilizers with respect to $\rho,\tir$, respectively.

\begin{corollary}[Equivariance in finite groups]\label{ef}
If $\mG$ is finite, then
given $z \in \mZ$ and $\tilde z \in \tmZ$,
and the actions $\rho,\tir$ of $\mG$ on $\mZ,\tmZ$,
the existence of an 
equivariant map $V:O_z\to \tO_{\tz}$ such that $V(z) = \tz$ is characterized as follows:
    \benum
    \item Deterministic equivariance is equivalent to $\mH_z$ being a subgroup of $ \tilde\mH_{\tz}$, 
     $\mH_z \leq \tilde \mH_{\tz}$.
     \item Distributional equivariance is equivalent to 
    the size of the group $\mH_z$ dividing the size of the group $\tilde \mH_{\tz}$, i.e., $|\mH_z| \mid |\tilde \mH_{\tz}| $.
     
    \eenum
\end{corollary}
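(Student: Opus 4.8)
The plan is to derive both parts by specializing the already-established characterizations of equivariant maps — Proposition \ref{chardeteq} for the deterministic case, and Proposition \ref{diem} together with the pushforward identity in its proof for the distributional case — to a single orbit $O_z$ and to a finite group $\mG$. The one preliminary fact I would set up first is an explicit description of the law $\mu_z$ of $\rho(G)z$ for $G\sim U$: since $\mG$ is finite, for $w\in O_z$ the set $\{g:\rho(g)z=w\}$ is a left coset of the stabilizer $\mH_z$, hence has exactly $|\mH_z|$ elements, so $\mu_z$ is uniform on $O_z$ with $\mu_z(\{w\})=|\mH_z|/|\mG|$ for every $w\in O_z$; likewise $\tilde\mu_{\tz}(\{\tilde w\})=|\tilde\mH_{\tz}|/|\mG|$ for every $\tilde w\in\tO_{\tz}$. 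I would also record the orbit--stabilizer identities $|O_z|=|\mG|/|\mH_z|$ and $|\tO_{\tz}|=|\mG|/|\tilde\mH_{\tz}|$.

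For claim (1), I would replay, restricted to the orbit $O_z$, the two directions of the proof of Proposition \ref{chardeteq}. If $V:O_z\to\tO_{\tz}$ is deterministically equivariant with $V(z)=\tz$, then for any $h\in\mH_z$ we get $\tz=V(z)=V(\rh z)=\trh V(z)=\trh \tz$, so $h\in\tilde\mH_{\tz}$; since $\mH_z,\tilde\mH_{\tz}$ are both subgroups of $\mG$ this yields $\mH_z\leq\tilde\mH_{\tz}$. Conversely, given $\mH_z\leq\tilde\mH_{\tz}$, I would set $V(\rho(g)z):=\tir(g)\tz$ and check that this is well defined (if $\rho(g)z=\rho(g')z$ then $g^{-1}g'\in\mH_z\subseteq\tilde\mH_{\tz}$, so $\tir(g)\tz=\tir(g')\tz$), that $V(z)=\tz$, that $V$ maps $O_z$ onto $\{\tir(g)\tz:g\in\mG\}=\tO_{\tz}$, and that it is equivariant directly from the definition.

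For claim (2), I would use the identity $V\#\mu_z=\tilde\mu_{V(z)}$ from the proof of Proposition \ref{diem}, so that a map $V:O_z\to\tO_{\tz}$ with $V(z)=\tz$ is distributionally equivariant iff $V\#\mu_z=\tilde\mu_{\tz}$. Plugging in the point masses above, $(V\#\mu_z)(\{\tilde w\})=|V^{-1}(\tilde w)|\cdot|\mH_z|/|\mG|$, so the condition is equivalent to $|V^{-1}(\tilde w)|=|\tilde\mH_{\tz}|/|\mH_z|$ for every $\tilde w\in\tO_{\tz}$. If such a $V$ exists, taking $\tilde w=\tz$ shows this common fiber size is a positive integer, giving $|\mH_z|\mid|\tilde\mH_{\tz}|$. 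Conversely, if $|\mH_z|\mid|\tilde\mH_{\tz}|$, set $m:=|\tilde\mH_{\tz}|/|\mH_z|$; then $|O_z|=|\mG|/|\mH_z|=m\,|\mG|/|\tilde\mH_{\tz}|=m\,|\tO_{\tz}|$, so I can partition $O_z$ into $|\tO_{\tz}|$ blocks of size $m$ with one block containing $z$, and define $V$ to be constant on blocks via any bijection from the blocks onto $\tO_{\tz}$ that sends $z$'s block to $\tz$; every fiber then has size $m$, so $V\#\mu_z=\tilde\mu_{\tz}$ and $V$ is distributionally equivariant.

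I expect the only genuinely delicate point to be pinning down that ``distributional equivariance of a map defined on the single orbit $O_z$'' is exactly the pushforward condition $V\#\mu_z=\tilde\mu_{\tz}$; for this I would lean on the computation inside the proof of Proposition \ref{diem} rather than redoing it, noting that transitivity of $\tir$ on $\tO_{\tz}$ plus finiteness of $\mG$ forces the group-invariant measure $V\#\mu_z$ to be uniform, hence to coincide with $\tilde\mu_{\tz}$. Everything else is coset counting, the orbit--stabilizer theorem, and the elementary fact that a finite set splits into equal blocks of size $m$ exactly when $m$ divides its cardinality, which here is precisely the divisibility $|\mH_z|\mid|\tilde\mH_{\tz}|$.
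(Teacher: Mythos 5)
Your proof is correct and follows essentially the same route as the paper: part (1) by replaying (the single-orbit restriction of) Proposition \ref{chardeteq}, and part (2) by identifying the uniform orbit measures with point masses $|\mH_z|/|\mG|$ and $|\tilde\mH_{\tz}|/|\mG|$, counting fibers, and invoking the orbit--stabilizer theorem. The only difference is that you explicitly construct the map in the converse direction of part (2) via the equal-size block partition of $O_z$, a step the paper's proof leaves implicit; this is a minor but welcome addition.
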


Clearly, the condition for equivariant maps is stricter. 

\begin{proof}
The first part follows from Proposition \ref{chardeteq}.
For the second part,
since $\{\tz\} \in \tO_{\tz}$
is a measurable subset of $\tO_{\tz}$,
and since
for $S\subset O_v$ and
$G$ distributed uniformly on the finite group $\mG$,
we have
$\mu_v(S) = P(\rho(G)v\in S) = |S|/|O_v|$,
the condition \eqref{dec} 
characterizing distributional equivariance becomes that  $r(\tz):=|V^{-1}(\trg  \tilde z)| \in \mathbb{N}_{>0}$ does not depend on $g $. 
Now, from the orbit-stabilizer theorem \citep[][Proposition 6.8.4]{artin2018algebra}, 
we have that $\mG/\mH_{z} \cong O_z$, $\mG/\tilde \mH_{z} \cong \tO_{\tz}$.
Thus, $|\mG|/|\mH_{z}| = r(\tz) |\mG|/|\tilde \mH_{\tz}|$, and hence $|\tilde \mH_{\tz}| = r(\tz) |\mH_{z}|$, and so the size of the group $\mH_{z}$ divides the size of the group $\tilde \mH_{\tz}$. 
\end{proof}

As an example
for a finite group $\mG$,
consider any 
sets of representatives $\mathcal{R}, \tilde{\mathcal{R}}$ of the orbits of the action of $\rho$ on $\mZ$, and $\tir$ on $\tmZ$, 
an arbitary injective map $r:\mathcal{R}\to \tilde{\mathcal{R}}$,
and any 
collection of one-to-one maps $V_z:O_z\to \tilde O_{r(z)}$.
Then $V$ defined as $V(z') = V_z(z')$ when $z'\in O_z$ is distributionally equivariant.
For concreteness,
let
$\mZ =\mZ' = \{0,1,\ldots, j\}$
and 
$\mG = \mathbb{Z}_{j+1} =  (\{0,1,\ldots, j\},+)$
acting via addition 
$g\cdot z = g+z $ modulo $j+1$.
Then 
deterministic equivariance requires
$V(g+z) = g+V(z)$  modulo $j+1$, for all $g,z$.
This means
$V(z+g) = z+V(g)$, so $g+V(z) = z+V(g)$, so
with $a = V(0)-0$, we have
$V(z) = z  + a$ for all $z$.
In contrast, distributional equivariance requires that
$V(G+z) =_d G+V(z)$  modulo $j+1$, when $G\sim U$, and for all $z$.
It is clear that any function $V:\mZ\to \mZ'$ satisfies this.

For the special case of the symmetric group where $\mG = \S_n$, and for the permutation actions $\rho, \tir$ acting on $\mZ = \mZ_0^n$, \cite{dean1990linear} have provided a sufficient condition for a transform $V$ to preserve equivariance. 
Their condition states that for any 
$z\in \mZ_0^n$,
and any 
$g'\in \S_n$,
there is $g\in \S_n$ such that 
$g'V(z) = V(gz)$.
Our result recovers theirs in this special case.
Indeed, our condition Corollary \ref{diem}
in this case
states that the preimage of the orbit 
$\{gV(z):g\in S_n\}$ 
under $V$
equals the orbit
$\{gz:g\in S_n\}$, which matches their condition.

\subsubsection{Distributional Invariance}
\label{dip}

The distributional invariance $Z =_d \rG Z$, when $G\sim U$, is equivalent to the distribution of $Z|\{Z\in O_z\}$ being $\mu_z$, for $P$-almost every $z\in \mZ$.
    Thus, in this case, the orbits (or measurably chosen representatives), 
    are a sufficient statistic for the  distribution of $Z$. 
    Thus distributional invariance is an example of conditional ancillarity.

    The key advantage of distributional invariance is that it is preserved under compositions.
    In contrast, this is not always convenient for conditionally ancillarity. Suppose $Z\sim P$, $p\in \mP$, is conditionally ancillary given $A$, where $A:\mZ\to \mA$ is a map, 
    i.e., 
    for almost every $a$ we have that 
    $Z|A=a$ has the same distributions for all $P\in \mP$.
    Now, for a map $V:\mZ\to\tmZ$, we ask when $V(Z)$ is conditionally ancillary given $A$. In general, this is not ensured, because $V$ may map different preimages $A^{-1}(a)$, $a\in \mA$ to the same value, and hence the resulting conditional distribution may mix the distributions of  $Z|A=a$ with the---possibly $P$-dependent---distributions of $Z$ and $A$.
    
    In contrast, a key advantage of group invariance is that it does not have this restriction. 
    Different orbits may map into the same orbit, and the resulting distribution is still uniform.
    In the end, this enables the development of broader classes of architectures that preserve invariance and finally power our methods.
 
\subsection{Distribution Shift with Non-symmetric Algorithm}\label{non-symmetric_algorithm}

This section is devoted to studying predictive inference in cases where a non-symmetric algorithm is employed, 
even if we have a distribution shift and so $Z \neq_d \rho(G) Z$. We present a novel algorithm along with theoretical coverage guarantees, wherein possibly distinct weights are assigned to various members within a representative set $S$, and the function $V$ may not necessarily be distributionally equivariant.

For a given group $\mathcal{G}$ and a function $\psi$, we consider the induced functions $\mathcal{F}=\{\ell_g \mid g\in \mathcal{G}\}$, where each $\ell_g$ 
is defined as $\ell_g(\tilde{z})= \psi(\tilde{\rho}(g)\tilde{z})$ for all $\tilde{z}$. {Recall that we defined in the main text that $\mH=\{g\in S_{n+1}: \psi(\tilde{\rho}(g)\tilde{z})=\psi (\tilde{z}) \textnormal{ for all } \tilde{z}\in \tmZ\}$ and the cosets are $\mathcal{G}/\mathcal{H}$.}
Here we focus on a simplified scenario where the set $\mathcal{F}$ is finite and can be represented as $\mathcal{F}=\{\psi_1,\ldots,\psi_{|\mathcal{F}|}\}$. 
In this case, 
in our non-symmetric algorithm,
we sample the cosets (or, equivalently, a set of representatives denoted as $S=\{g_1,\ldots,g_{|\mathcal{F}|}\}$) 
from a distribution $\Gamma_{S}$ 
with probabilities given by $(w_1,\ldots,w_{|\mathcal{F}|})$. 
The procedure can also be extended to continuous groups using an approach similar to that discussed earlier. 

The weights in $\Gamma_{S}$ can be arbitrary but 
are 
typically 
chosen, aiming to minimize the coverage gap due to the distribution shift. 
For example, when considering the group $G=S_{n+1}$ and the function $\psi(z)=z_{n+1}$ for all $z$, 
{we will see that the representatives are in a one-to-one correspondence with the coordinates $z_1, \ldots, z_{n+1}$, and the reduction in coverage can be related to the total variation distance between $Z$ and $Z$ with coordinates $j$ and $n+1$ swapped, for $j=1,\ldots,n+1$.}
In this case, the selection of weights is guided by the 
characteristics of the data, such as time series analysis and change point detection \citep{barber2023conformal}. 
A common strategy is to allocate greater weights to elements that are closer in time to the unobserved {$(n+1)$-st data point}. 

Next, we formally define our prediction set when we do not assign equal weights to the representatives of cosets or we do not have a distributionally equivariant map $V$.
We first sample a representative $g$ according to the probability measure $\Gamma_{S}$ and we construct our prediction set as follows:
\begin{align}\label{non-asym-alg}
T^{\mathrm{non-sym}}(\zo)=\Bigg\{z:\psi(\tilde{\rho}(g)V(\rho^{-1}(g)z))\le Q_{1-\alpha}\bigg(\sum_{j=1}^{|\mathcal{F}|}w_j\delta_{\psi_j(V(\rho^{-1}(g)z))}\bigg),\obs(z)=\zo\Bigg\}.
\end{align}

We next provide a coverage property for this prediction set. We let $$ \Delta^w:=\sum_{i=1}^{|\mathcal{F}|}w_i 
 \mathrm{TV}\big(\nu_i(V(\rho^{-1}(g_i)Z)),\nu_i(V(Z))\big),$$
 where $\nu_i(x)=\psi_i(x)-Q_{1-\alpha}(\sum_{j=1}^{|\mathcal{F}|}w_j\delta_{\psi_j(x)})$ for all $x$.
In addition, we let $F^w_{\tz}$ be the c.d.f.~of the random variable $\psi(\rho(g)  \tz),\, g\sim \Gamma_S$.
Furthermore, we let $F^{w'}_{\tz}$ be the probability it places on individual points, i.e., for $x\in \R$, 
$F^{w'}_{\tz}(x) = F^w_{\tz}(x)-F^{w-}_{\tz}(x)$, 
where
$F^{w-}_{\tz}(x) = \lim_{y\to x, y<x} F^w_{\tz}(y) \ge 0$. We then define $t^w_{\tilde{z}}:=Q_{1-\alpha}(\sum_{j=1}^{|\mathcal{F}|}w_j\delta_{\psi_j(\tilde{z})}).$ for all $\tz$.
See \Cref{pfthm_asy} for the proof of the result below.
\begin{theorem}\label{thm_asy}
Regardless of the weights 
$(w_1,\ldots,w_{|\mathcal{F}|})$, 
and regardless of whether $Z$ is distributionally invariant and $V$ is distributionally equivariant, 
the prediction set from \eqref{non-asym-alg} satisfies the coverage bound
 \begin{align*}
-\Delta^w\le P\Big(Z\in T^{\mathrm{non-sym}}(\Zo )\Big)-(1-\alpha) \le \E[F^{w'}_{V(Z)}(t^w_{V(Z)})]+\Delta^w.
\end{align*}
\end{theorem}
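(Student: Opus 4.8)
The plan is to first strip away the extra randomness coming from the sampled coset representative $g\sim\Gamma_S$ (drawn independently of $Z$) by conditioning on its value, and then to run an argument parallel to the proofs of \Cref{c1} and \Cref{thm_shift}, absorbing the lack of distributional invariance into a total-variation correction. Recall from the main text that $\mathcal{H}=\{g:\ell_g=\ell_e\}$ is a subgroup and that $\mathcal{G}$ partitions into cosets corresponding to distinct values of $\ell_g$, so we may label the representative set as $S=\{g_1,\dots,g_{|\mathcal{F}|}\}$ with $\ell_{g_i}=\psi_i$, whence $\psi(\tilde\rho(g_i)\tz)=\ell_{g_i}(\tz)=\psi_i(\tz)$; also $Q_{1-\alpha}\big(\sum_j w_j\delta_{\psi_j(\tz)}\big)=t^w_{\tz}$. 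Since $\obs(Z)=\Zo$ holds by construction, on the event $\{g=g_i\}$ (of probability $w_i$) the membership $Z\in T^{\mathrm{non-sym}}(\Zo)$ is exactly the event $\nu_i\big(V(\rho^{-1}(g_i)Z)\big)\le 0$, where $\nu_i(x)=\psi_i(x)-t^w_x$. By the law of total probability,
\[
P\big(Z\in T^{\mathrm{non-sym}}(\Zo)\big)=\sum_{i=1}^{|\mathcal{F}|}w_i\,P\big(\nu_i(V(\rho^{-1}(g_i)Z))\le 0\big).
\]

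Second, I would replace $V(\rho^{-1}(g_i)Z)$ by $V(Z)$ in each summand. Using the standard fact that $|P(X\in A)-P(Y\in A)|\le\mathrm{TV}(X,Y)$ for real random variables $X,Y$ and Borel $A$, applied with $A=(-\infty,0]$ and summed with weights $w_i$,
\[
\Big|\,P\big(Z\in T^{\mathrm{non-sym}}(\Zo)\big)-\sum_{i=1}^{|\mathcal{F}|}w_i\,P\big(\nu_i(V(Z))\le 0\big)\,\Big|\le\sum_{i=1}^{|\mathcal{F}|}w_i\,\mathrm{TV}\big(\nu_i(V(\rho^{-1}(g_i)Z)),\nu_i(V(Z))\big)=\Delta^w.
\]

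Third, I would identify the ``oracle'' middle term. Conditioning on $V(Z)=\tz$, the values $\psi_i(\tz)$ are deterministic, and since $F^w_{\tz}$ is by definition the c.d.f.\ of $\sum_i w_i\delta_{\psi_i(\tz)}$ we have $\sum_i w_i I(\psi_i(\tz)\le t^w_{\tz})=F^w_{\tz}(t^w_{\tz})$; taking expectation over $\tz=V(Z)$ gives $\sum_i w_i P(\nu_i(V(Z))\le 0)=\E[F^w_{V(Z)}(t^w_{V(Z)})]$. The characterization $t^w_{\tz}=\inf\{x:F^w_{\tz}(x)\ge 1-\alpha\}$ forces $F^{w}_{\tz}(t^w_{\tz})\ge 1-\alpha$ and $F^{w-}_{\tz}(t^w_{\tz})\le 1-\alpha$, so with the jump decomposition $F^{w}_{\tz}(t^w_{\tz})=F^{w-}_{\tz}(t^w_{\tz})+F^{w'}_{\tz}(t^w_{\tz})$ we obtain, pointwise in $\tz$, $1-\alpha\le F^{w}_{\tz}(t^w_{\tz})\le 1-\alpha+F^{w'}_{\tz}(t^w_{\tz})$. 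Taking expectations, $1-\alpha\le\E[F^w_{V(Z)}(t^w_{V(Z)})]\le 1-\alpha+\E[F^{w'}_{V(Z)}(t^w_{V(Z)})]$. Combining this with the previous display yields the claimed two-sided bound.

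I expect the main obstacle to be careful bookkeeping rather than a genuine difficulty: one must track that the empirical quantile inside \eqref{non-asym-alg} is evaluated at the \emph{transformed} argument $V(\rho^{-1}(g)z)$, not at $V(z)$ --- this is precisely why the representative $g$ must be drawn \emph{before} forming the set, and why the correction is the law-mismatch $\Delta^w$ between $\nu_i(V(\rho^{-1}(g_i)Z))$ and $\nu_i(V(Z))$ rather than a simpler quantity. One must also verify that $g\mapsto\ell_g$ is constant on cosets of $\mathcal{H}$, so that sampling over representatives is well defined and the labeling $\ell_{g_i}=\psi_i$ is consistent. Once these points are settled, the argument is a reweighted version of the proof of \Cref{c1} together with the total-variation swap used in the proof of \Cref{thm_shift}, with no new ingredients; in particular, setting $w_i\equiv 1/|\mathcal{F}|$ and recalling $U(\mathcal{G}/\mathcal{H})$ recovers \Cref{thm_shift}, and further specializing to $\mathcal{G}=\S_{n+1}$, $\psi(z)=z_{n+1}$, $V=\mathrm{id}$ recovers the conformal results of \cite{barber2023conformal}.
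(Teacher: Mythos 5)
Your proposal is correct and follows essentially the same route as the paper's proof: condition on the sampled representative $g_i$ (law of total probability with weights $w_i$), swap $V(\rho^{-1}(g_i)Z)$ for $V(Z)$ at total-variation cost $\Delta^w$ via the indicators $I(\nu_i(\cdot)\le 0)$, and then control the remaining term by the quantile property of $t^w_{\tz}$, exactly as in the paper (which invokes the Theorem \ref{c1} argument for the upper bound where you make the identity $\sum_i w_i I(\psi_i(\tz)\le t^w_{\tz})=F^w_{\tz}(t^w_{\tz})$ and the pointwise bounds explicit). No substantive differences.
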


When $Z$ is distributionally invariant over $\mathcal{G}$, the coverage lower bound reduces to $1-\alpha$.   If this distributional invariance property does not hold, we see that the prediction set now relies on the set of representatives $S$ that we choose. Therefore, in order to minimize the coverage gap, we suggest choosing each $g_i$ from each coset that minimizes the difference between $Z$ and $\rho^{-1}(g_i)Z$. For the case in \cite{barber2023conformal}, the suggested $g_i\in \S_{n+1}$ are permuting the $i$th and the $(n+1)$st entry of $Z$.
While it would be desirable to compare the coverage of the symmetric and non-symmetric algorithms, in general, this does not seem straightforward.
Therefore, we leave this to future work.

Our coverage conclusion in \Cref{thm_asy} reduces to the conclusion of non-exchangeable conformal prediction in Theorem
2 and 3 presented in \cite{barber2023conformal} 
when we let $\mathcal{G}=\S_{n+1}$ and $\psi(\tilde{z})=e_{n+1}^\top \tilde{z}$ and train a non-symmetric non-conformity score in terms of the input order of the data. We present the relevant details as an example, illustrated below:
\begin{example}[Non-exchangeable conformal prediction]
To recover the results of 
\cite{barber2023conformal}, we consider $Z=(Z_1,\ldots,Z_{n+1})^\top$, where we have $Z_i=(X_i,Y_i)$ with $X_i$ being the covariate, $Y_i$ being the response.

Let $0\le w_1\le w_2\le \ldots \le w_n\le w_{n+1}  = 1$, and $\tw_i=w_i/(\sum_{j=1}^{n+1} w_j)$, $i\in [n+1]$. 
Denote by $g_j$ the transposition exchanging $j$ and $n+1$, keeping other indices fixed, 
so that $g_jz:=z^{(j)}=(z_1,\ldots$,  
$z_{j-1},z_{n+1},z_{j+1},\ldots,z_j)^\top$, where we exchange $z_{n+1} \leftrightarrow z_{j}$.
{Letting $\mH=\{g\in S_{n+1}: e_{n+1}^{\top}(g\cdot\tilde{z})=e_{n+1}^\top \tilde{z} \textnormal{ for all } \tilde{z}\in \tmZ\}$ be the set of permutations fixing the last coordinate,}
note that $g_j$, $j\in [n+1]$ is a set of representatives of $\mG/\mH$. 

We construct the prediction set in the same way as \eqref{non-asym-alg} by letting $\psi(x)=e_{n+1}^\top x$, $\mathcal{G}=S_{n+1}$, and taking $Z_i  = (X_i,Y_i)$ for all $i\in[n+1]$, and $V(z) = (R(z_1),\ldots, R(z_{n+1}))^\top$, where $R(z_i) = |y_i- \hmu^z(x_i)|$,  for all $i\in[n+1]$, 
with $\hmu^z:\mZ\to \R$ being non-symmetric in the sense that $\hmu^{gz}\neq  \hmu^z$  for some $g\in \S_{n+1}$.
Then, 
since $\textnormal{TV}\Big(\nu_j(V(Z)),\nu_j(V(g_j^{-1}Z))\Big) \le \textnormal{TV}\Big(V(Z),V(g_j^{-1}Z)\Big)$ for all $i\in [n+1]$,
Theorem \ref{thm_asy}
implies that the coverage probability is lower bounded as 
     \begin{align*}
    P_Z\Big(Z_{n+1} \in T^{\tw,n+1}(Z_{1:n})\Big)\ge 1-\alpha-\sum_{j=1}^{n}\tw_j\textnormal{TV}\Big(V(g_j^{-1}Z),V(Z)\Big).
\end{align*}
When $V$ is injective and
$(Z_1,\ldots, Z_{n+1})$ has a continuous distribution, Theorem \ref{thm_shift} implies that 
 \begin{align*}
    P_Z\Big(Z_{n+1} \in T^{\tw,n+1}(Z_{1:n})\Big)\le 1-\alpha+\tw_{n+1}+\sum_{j=1}^{n}\tw_j\textnormal{TV}\Big(V(g_j^{-1}Z),V(Z)\Big).
\end{align*}

Then, these results match with the conclusions in Theorems 2b \& 3b of \cite{barber2023conformal}.
\end{example}

Next, we discuss the example of a two-layer tree.

\begin{example}[Non-exchangeable conformal prediction for two-layer trees]
Continuing the example from \Cref{g}, we consider predicting $Z_{M}^{(K)}$ in the final branch of the two-layer tree depicted in Figure \ref{fig:tree2}, left panel. 
In the following, we will adopt the notations used there. 
However, we allow that $g\cdot \Gamma \neq _d \Gamma$: for instance, branch growth can be time-ordered and
nodes within each branch $\Gamma_k,k\in [K]$ can have stronger dependence 
compared to nodes across different branches. In such a scenario, it is appropriate to assign greater weights, denoted as $w_i^{(k)},i\in [M],k\in [K]$, to the leaves of the last branch relative to those in earlier branches.

For example, we can pick $0\le w_{1}^{(1)}= \cdots= w_{1}^{(K)}\le\cdots \le w_M^{(1)}= \cdots =w_{M}^{(K)}$ and let $\tilde{w}_i^{(k)}=w_i^{(k)}/(\sum_{j=1,k=1}^{M,K}w_j^{(k)}).$ We let $g_s^{(q)},s\in [K],q\in [M]$ be the set of representatives of $\mathcal{G}/\mathcal{H},$ and $g_s^{(q)}z:=z^{(s,q)}=(R,\Gamma_1,\cdots,\Gamma_{q-1},\Gamma_K^{s},\Gamma_{q+1},\cdots,\Gamma_{K-1},\Gamma_q),$ where $$\Gamma_K^{s}=(C_K,Z_{1}^{(K)},\cdots,Z_{s-1}^{(K)},Z_{M}^{(K)},Z_{s+1}^{(K)},\cdots,Z_{s}^{(K)}).$$ 
Here we first exchange the $q$-th branch with the $K$-th branch and then permute the last entry of the current $q$-th branch (the original $K$-th branch) with its $s$-th entry.  
We keep $V(z)=|z|$ for all $z\in \mZ$, 
as we did in \Cref{g}. Therefore,
Theorem \ref{thm_asy}
implies that the coverage probability is lower bounded as 
     \begin{align*}
    P_Z\Big(Z_{M}^{(K)} \in T^{\tw}(Z_{i}^{(k)},i\in [M],k\in [K] \setminus(M,K))\Big)\ge 1-\alpha-\sum_{i=1,k=1}^{M,K\setminus(M,K)}\tw_i^{(k)}\textnormal{TV}\Big(|(g_i^{(k)})^{-1}Z|,|Z|\Big).
\end{align*}
When 
$Z=(R,\Gamma_1,\cdots,\Gamma_K)$ has a continuous distribution, Theorem \ref{thm_asy} implies that 
 \begin{align*}
    &P_Z\Big(Z_{M}^{(K)} \in T^{\tw}(Z_{i}^{(k)},i\in [M],k\in [K] \setminus(M,K))\Big)\\
    &\le 1-\alpha+\tw_{M}^{(K)}+\sum_{i=1,k=1}^{M,K\setminus(M,K)}\tw_i^{(k)}\textnormal{TV}\Big(|\big(g_i^{(k)}\big)^{-1}Z|,|Z|\Big).
\end{align*}

\end{example}

 \subsection{Coarsening Approach for Predictive Inference on Graphs}
 \label{coarse}
From a practical perspective, if the graph is large, 
we can coarsen it. 
For instance, we can use a hierarchical graph clustering method to cluster nodes. 
We keep edges between the clustered nodes with associated multiplicities; thus we obtain a graph with multiple edges allowed between nodes.
The automorphism group of the new graph can then be viewed as a subgroup of the original graph that fixes the vertices within each clustered node.

We can apply our method to the clustered graph 
by considering the values of clusters for which node observations are missing as also missing.
This construction leads to prediction sets at the cluster-level, which we can view as prediction sets for any symmetric function of the node values, such as for their sum. 
If there is only one missing observation per cluster, then we can back out prediction sets for the individual nodes; 
e.g., if using the sum to aggregate, by subtracting the sum of the labeled nodes in the cluster. 
If there are several nodes with missing values in a cluster, then we obtain a prediction set for their sum. 

For the tree-structured graphical model from
\Cref{g}
we can predict at a cluster after coarsening the graph as follows.
We can 
define the clusters as the branches
 $C_k, Z_{1}^{(k)},\ldots,Z_{M}^{(k)}$, for $k\in [K]$.
We can re-write the data as $\Gamma=(R,\Gamma_1^\top,\ldots,\Gamma_K^\top)^\top$, where we treat every branch $\Gamma_k:=(C_K,Z_1^{(K)},\ldots,Z_{M}^{(K)})^\top, k\in [K]$ as one component. 
Our goal is to predict the sum of values of the last branch.
We let $\psi(z)=|(0, 0_{(K-1)\cdot (M+1)}^\top,1_{M+1}^\top)^\top z|$, so the quantile is $t^{(2)}_{z}:=Q_{1-\alpha}(|1^\top \gamma_1|,\ldots,|1^\top \gamma_K|)$,
where $\gamma_k$ are the realized values of $\Gamma_k$, $k\in[K]$,
and the prediction set is given by
\beqs
T^{\mathrm\OurMethod}(\zo) = \{z: |1^\top \gamma_K|\le t^{(2)}_{\tilde{z}},\, \o(z)= \zo \} .
\eeqs

\subsection{Graph Neural Network Construction for Unsupervised Learning}
Here we give two 
graph neural network constructions 
for unsupervised learning: 
a simpler one that is easier to understand,
and a more sophisticated one that has better performance by interpolating with standard conformal prediction.

\subsubsection{GNN for Unsupervised Learning}\label{sec:GNN_unsup}
We design a fixed GNN architecture
where the neighborhood structure of the graph is as in \Cref{fig:tree2}:
\benum
\item
 We let the sample means over the second-layer nodes be proxy statistics, 
setting 
$$\tilde P^{(k,0)} = {M}^{-1}\sum_{i=1}^{M}Z_i^{(k)},\, k=1,\ldots, K.$$
More generally, any permutation invariant functions over $Z_i^{(k)},i\in [M]$ are a valid choice.
We also set the proxy $\tilde{P}^0$ for the root note to be zero, and let $Z_{i}^{(k,0)} =Z_{i}^{(k)}$.
\item We apply a modified two-layer MPGNN. In the first layer, we have the following.
\benum
\item
We update the leaves as $Z_{i}^{(k,1)}=|Z_i^{(k,0)}-\tilde{P}^{(k,0)}|$.

\item We update the proxy variables as
$\tilde P^{(k,1)}=
\sum_{i\in \N(k)}\bigl(\tilde P^{(k,0)}- Z_i^{(k,0)}\bigr)^2/(M-1),$
which corresponds to \eqref{u} with $V_0(x)=y$, $V_{11}(x,y)=(x-y)^2/(M-1)$ and $V_{12}(\tilde{P}^{(k,0)},\tilde{P}^0)=0$. 
\eenum
In the second layer of the MPGNN, we update the leaves as follows:
$$Z_{i}^{(k,2)}=
\frac{Z_{i}^{(k,1)}}
{\sqrt{\tilde{P}^{(k,1)}}}
=
\frac{\left|Z_i^{(k)}-{\sum_{i=1}^M Z_{i}^{(k)}}/{M}\right|}
{\sqrt{\frac{1}{M-1}\sum_{j\in \N(k)}\biggl(Z_j^{(k)}-\frac{\sum_{j=1}^M Z_{j}^{(k)}}{M}\biggr)^2}}.$$
This corresponds to \eqref{u} with $V_1(x,y)=x/\sqrt{y},V_0(x,y)=y$,

\item We construct prediction sets 
for $Z_{M}^{K}$
by following the method in \Cref{sec:predict_region}.

\eenum

\subsubsection{Interpolation with Conformal Inference}

Here, for every variable we have two channels. 
\benum
\item Step 0: (Initialization) 
\benum
\item 
We let 
$\bar{Z}=\frac{1}{KM}\sum_{k=1}^K\sum_{i=1}^M Z_i^{(k,0)}$
and $\hat\sigma$ the sample standard deviation of the full sample.
We set 
the zeroth-layer node  $\tilde{P}^{0}$ 
to be\footnote{Alternatively, we can achieve the same effect by initializing zeroth- and first-layer nodes as two-dimensional zero vectors first, and updating them through message passing from the leaves step by step (passing the the sample mean and standard errors).}
$(\bar{Z},\hat\sigma)^\top$. 
\item  For first-layer nodes $\tilde{P}^{(k,0)}$, we initialize them as $(\frac{1}{M}\sum_{i=1}^M Z_i^{(k,0)},\hat\sigma_k)^\top$, where $\hat\sigma_k$ is the standard error of the $\{Z_i^{(k,0)}\}_{i=1}^M.$
\item 
For leaves, we augment all features $Z_i^{(k,0)}$ with a constant $1$; as the second channel is not used and only kept so that all nodes have two features. We denote $X_i^{(k,0)}=(Z_i^{(k,0)},1)^\top.$

\eenum

\item  Step 1: (Update) 

We update $\tilde{P}^{(k,0)}$ to $\tilde{P}^{(k,1)}$ for all $k\in [K]$ via $\tilde{P}^{(k,0)}$ and $\tilde{P}^{0}$ as \begin{align*}
\tilde{P}^{(k,1)}=V^{(1)}(\tilde{P}^{(k,0)},\tilde{P}^{0})&=\bigg(\bigg|\frac{\tilde{P}^{(k,0)}(1)-\tilde{P}^{0}(1)}{\tilde{P}^{(k,0)}(2)/\sqrt{M}}\bigg|,\tilde{P}^{(k,0)}(2)\bigg)^\top
=\bigg(\bigg|\frac{\bar{Z}_k-\bar{Z}}{\hat\sigma_k/\sqrt{M}}\bigg|,\hat\sigma_k\bigg)^\top.
\end{align*}
We keep other variables---the zeroth layer node  $\tilde{P}^{(1)}$
and leaves $X_i^{(k,1)},\forall k\in[K],i\in [M]$---unchanged:
$\tilde{P}^{(1)}=\tilde{P}^{(0)}$ and $X_i^{(k,1)}=X_i^{(k,0)},\forall k\in[K],i\in [M]$.
\item  Step 2: (Update) 

Again, we only update $\tilde{P}^{(k,1)}$, via
\begin{align*} \tilde{P}^{(k,2)}=V^{(2)}_0\bigg(\tilde{P}^{(k,1)},\sum_{i=1}^{M}V^{(2)}_{11}(\tilde{P}^{(k,1)},X_i^{(k,1)})+V^{(2)}_{12}(\tilde{P}^{(k,1)},\tilde{P}^{(1)})\bigg).
\end{align*}
Here we take $V^{(2)}_{11}(\tilde{P}^{(k,1)},X_i^{(k,1)})=Z_{i}^{(k,1)}\mathbb{K}(\tilde{P}^{(k,1)}(1))/M$, 
for some kernel $\mathbb{K}$. 
Therefore, $\sum_{i=1}^{M}V^{(2)}_{11}(\tilde{P}^{(k,1)},X_i^{(k,1)})$ $=\bar{Z}_k\mathbb{K}(\tilde{P}^{(k,1)}(1)).$
We set 
$\mathbb{K}(x)=I(|x|\le c)$ with some constant $c$, so that 
\begin{align*}
    \sum_{i=1}^{M}V_{11}^{(2)}(\tilde{P}^{(k,1)},X_i^{(k,1)})=\bar{Z}_kI\bigg(\bigg|\frac{\bar{Z}_k-\bar{Z}}{\hat\sigma_k/\sqrt{M}}\bigg|\le c\bigg).
\end{align*}
For $V_{12}^{(2)}$ we let $V_{12}^{(2)}(\tilde{P}^{(k,1)},\tilde{P}^{(1)})=\bar{Z}(1-\mathbb{K}(\tilde{P}^{(k,1)}(1)))=\bar{Z}I\bigg(\bigg|\frac{\bar{Z}_k-\bar{Z}}{\hat\sigma_k/\sqrt{M}}\bigg|>c\bigg)$
Therefore, 
\begin{align}\label{het}
\tilde{P}^{(k,2)}=\Bigg(\bar{Z}_kI\bigg(\bigg|\frac{\bar{Z}_k-\bar{Z}}{\hat\sigma_k/\sqrt{M}}\bigg|\le c\bigg)+\bar{Z}I\bigg(\bigg|\frac{\bar{Z}_k-\bar{Z}}{\hat\sigma_k/\sqrt{M}}\bigg|>c\bigg),\hat\sigma_k\Bigg).
\end{align}
We keep $X_i^{(k,2)}=X_i^{(k,1)}$ and $\tilde{P}^{(2)}=\tilde{P}^{(1)}.$

\item Step 3: (Update) 

We finally update all $X_i^{(k,3)}$, for all $i\in[M],k\in [K]$. We let 
\begin{align*}
X_i^{(k,3)}&=V_0^{(3)}(X_i^{(k,2)},V_1^{(3)}(X_i^{(k,2)},\tilde{P}^{(k,2)}))\\
&=\Bigg(\Bigg|Z_{i}^{(k)}-\left[\bar{Z}I\left(\frac{|\bar{Z}_k-\bar{Z}|}{\hat\sigma_k/\sqrt{M}}\le c\right)
+\bar{Z}_kI\left(\frac{|\bar{Z}_k-\bar{Z}|}{\hat\sigma_k/\sqrt{M}}> c\right)\right]
\bigg|/\hat\sigma_k,1\Bigg)^\top.
\end{align*}
We take $v=(v_1;\ldots;v_{n+1})^\top\in \mathbb{R}^{2\cdot(n+1)}$ and let $v_{n+1}=(1,0)^\top.$
\eenum

\subsection{Simulation with Random Sample Sizes}
\label{rss}
In this simulation, we let the number of branches be $K=20$. For every branch, we let $N_k\sim \mathrm{Unif}(\{10,20\}), k\in [K]$ independently in the setting of unsupervised learning. 
The other settings are the same as for the fixed sample size regime presented in the main text. 
The final statistics are:
$$\Bigg|Z_{i}^{(k)}-\left[\bar{Z}I\left(\frac{|\bar{Z}_k-\bar{Z}|}{\hat\sigma_k/\sqrt{M}}\le c\right)
+\bar{Z}_kI\left(\frac{|\bar{Z}_k-\bar{Z}|}{\hat\sigma_k/\sqrt{M}}> c\right)\right]
\bigg|/\hat\sigma_k,$$ 
for $i\in [M],k\in [K]$, where $\bar{Z}=\frac{1}{K}\sum_{k=1}^{K}\frac{1}{N_k}\sum_{i=1}^{N_k}Z_{i}^{(k)}$, for unsupervised learning.

For supervised learning, we also consider 
a setting very similar to the fixed sample size regime presented in the main text. 
The only difference is that we sample $N_k\sim \mathrm{Unif}(\{20,40\}),k\in [K]$ independently for each branch and use the first half of all branches as the training data. 
The output statistics are the same as \eqref{stat_1} for supervised learning except that the datapoints used for training $\hat\mu_k,k\in [K]$ and $\hat\mu$ are of random sizes.

In addition to the methods used in the main text, we also compare with hierarchical conformal prediction (HCP) \citep{lee2023distribution}.
The results are presented in Tables \ref{tab:random_N_unsup} and \ref{tab: random_N_sup}, respectively.
The conclusions are identical to those from the experiments from the main text.
Here HCP performs well, similarly to our method.
However, of course our method is more general and applicable to any symmetry group, not just to the hierarchical setting, 
as described in the main text.

    \begin{table}[t]
	\begin{center}
		\def\arraystretch{1.2}
		\setlength\tabcolsep{4pt}
		\begin{tabular}{c||ccccc}
			\toprule
			&Estimator  & $\sigma^2=10$ & $\sigma^2=2$ & $\sigma^2=0.5$  & $\sigma^2=0$   \\
			\hline
   \multirow{10}{*}{$\alpha=0.05$} & Length: \OurMethod~& 2.076 (0.012) & 2.097 (0.016)  & 2.138 (0.224) & 2.019 (0.012) \\ 
   & Coverage: \OurMethod~& 0.957 (0.022)  & 0.950 (0.023) &0.949 (0.023) & 0.952 (0.022) \\
          & Length: HCP & 41.076 (0.865) & 8.064 (0.142) & 2.774 (0.030) & 1.996 (0.010) \\
           & Coverage: HCP & 0.948 (0.027)  & 0.955 (0.017)  &0.945 (0.022) & 0.952 (0.024) \\
       & Length: Conformal & 38.222 (0.716) & 7.859 (0.138) & 2.740 (0.031)& 1.977 (0.009) \\
           & Coverage: Conformal & 0.945 (0.025) & 0.950 (0.019)  & 0.943 (0.023) & 0.947 (0.025)  \\
                                        & Length: Subsampling & 44.377 (0.958) & 9.149 (0.168) & 3.129 (0.067) & 2.220 (0.048) \\
           & Coverage: Subsampling& 0.949 (0.026) &0.954 (0.017)  & 0.949 (0.021)  & 0.951 (0.025) \\
           & Length: Single-Tree & Inf & Inf & Inf & Inf  \\
           & Coverage: Single-Tree & 1.0 (0.0)  & 1.0 (0.0)  & 1.0 (0.0) & 1.0 (0.0) \\
    \hline
       \multirow{10}{*}{$\alpha=0.15$} & Length: \OurMethod~& 1.516 (0.010) & 1.524 (0.014)   & 1.547 (0.014) & 1.470 (0.012)\\ 
   & Coverage: \OurMethod~& 0.844 (0.033)  & 0.842 (0.038) &0.850 (0.033) & 0.857 (0.039) \\
          & Length: HCP &28.771 (0.612) &5.835 (0.127) &2.030 (0.017)  & 1.451 (0.008) \\
           & Coverage: HCP &0.849 (0.039)  & 0.837 (0.030) &0.848 (0.032) & 0.853 (0.040)\\
       & Length: Conformal &28.006 (0.668)   & 5.798 (0.125) & 2.016 (0.016) & 1.450 (0.007)\\
           & Coverage: Conformal &0.846 (0.041)  & 0.837 (0.033) & 0.843 (0.030) & 0.853 (0.038)\\
                             & Length: Subsampling & 30.821 (0.670) &6.347 (0.148) & 2.172 (0.042) & 1.549 (0.028)\\
           & Coverage: Subsampling&0.848 (0.039)  &0.838 (0.031)   & 0.848 (0.034) & 0.863 (0.035)\\
           & Length: Single-Tree &1.758 (0.052) &1.770 (0.053) & 1.749 (0.056)  & 1.738 (0.047)\\
           & Coverage: Single-Tree&0.872 (0.031)  & 0.877 (0.031)  & 0.874 (0.031) & 0.878 (0.031) \\
    \hline

			\bottomrule
		\end{tabular}
	\end{center}
	\caption{Prediction set length and coverage probability of our method and three benchmarks for unsupervised learning with a random sample size; same protocol as in Table \ref{tab: fix_N_unsup}.} 
	\label{tab:random_N_unsup}
\end{table}

    \begin{table}[t]
	\begin{center}
		\def\arraystretch{1.2}
		\setlength\tabcolsep{4pt}
		\begin{tabular}{c||ccccc}
			\toprule
			&Estimator  & $\sigma^2=10$ & $\sigma^2=2$ & $\sigma^2=0.5$  & $\sigma^2=0$   \\
			\hline
   \multirow{10}{*}{$\alpha=0.05$} & Length: \OurMethod~& 2.080 (0.021) & 2.105 (0.024)  & 2.067 (0.012) & 2.013 (0.016) \\ 
   & Coverage: \OurMethod~& 0.955 (0.021) &  0.942 (0.024) & 0.950 (0.020) & 0.951 (0.019) \\
          & Length: HCP & 12.618 (0.023)  & 3.098 (0.037) & 2.071 (0.013)  &  1.993 (0.015) \\
           & Coverage: HCP & 0.953 (0.023)  & 0.952 (0.024) & 0.948 (0.023) & 0.950 (0.019) \\
       & Length: Conformal & 12.353 (0.248)  & 3.051 (0.038) & 2.051 (0.012) & 1.976 (0.014) \\
           & Coverage: Conformal & 0.948 (0.024)   & 0.948 (0.025) & 0.945 (0.023) & 0.948 (0.020) \\
                                        & Length: Subsampling & 12.821 (0.618) & 3.152 (0.091) & 2.069 (0.058) & 1.990 (0.051) \\
           & Coverage: Subsampling& 0.923 (0.025)  & 0.919 (0.032) & 0.927 (0.028)  & 0.932 (0.022) \\
           & Length: Single-Tree & Inf & Inf & Inf & Inf  \\
           & Coverage: Single-Tree& 1.0 (0.0)  &  1.0 (0.0) & 1.0 (0.0) & 1.0 (0.0) \\
    \hline
       \multirow{10}{*}{$\alpha=0.15$} & Length: \OurMethod~& 1.512 (0.008) & 1.531 (0.009)  &1.507 (0.008) & 1.463 (0.008) \\ 
   & Coverage: \OurMethod~&0.854 (0.037)  & 0.847 (0.038)  & 0.857 (0.033) & 0.852 (0.034) \\
          & Length: HCP & 7.986 (0.153) &  2.143 (0.024) & 1.510 (0.008) & 1.451 (0.008)  \\
           & Coverage: HCP & 0.855 (0.032) & 0.857 (0.031)  & 0.858 (0.034) & 0.854 (0.036) \\
       & Length: Conformal &7.890 (0.163)   & 2.125 (0.025) & 1.500 (0.008) &  1.444 (0.008) \\
           & Coverage: Conformal & 0.848 (0.033) & 0.853 (0.033) & 0.855 (0.033) & 0.851 (0.037) \\
                             & Length: Subsampling & 8.414 (0.293) &2.229 (0.054) &  1.565 (0.035) & 1.479 (0.032) \\
           & Coverage: Subsampling& 0.841 (0.037)  & 0.843 (0.035)  & 0.846 (0.038) & 0.836 (0.035)\\
           & Length: Single-Tree & 1.732 (0.046) & 1.745 (0.051) & 1.727 (0.043) & 1.689 (0.050) \\
           & Coverage: Single-Tree& 0.876 (0.031)  &  0.878 (0.026) & 0.881 (0.028) & 0.867 (0.038)\\
    \hline

			\bottomrule
		\end{tabular}
	\end{center}
	\caption{Prediction set length and coverage probability of our method and three benchmarks for supervised learning with a random sample size; same protocol as in Table \ref{tab: fix_N_unsup}.} 
	\label{tab: random_N_sup}
\end{table}

\subsection{Additional Information about Empirical Data Example}

\subsubsection{Discusson of Modelling Assumptions}
\label{das}

Finally, we discuss modelling assumptions, and in particular the applicability 
of 
exchangeability assumptions to the empirical dataset. 
The covariate $X_1$ is the number of days lacking sleep. 
Within a branch, this covariate has a time trend, taking values $0,1,\ldots,9$.
Thus, its values for one datapoint are not exchangeable with the values for other data points within the same branch. 
However, even if the covariates are not exchangeable, the  values  $\epsilon_k=(\epsilon_{1}^{(k)},\ldots,\epsilon_{9}^{(k)})^\top$ 
of the random noise
can be assumed exchangeable. 
Therefore,
when we fit $\hat\mu_k,k\in [18]$ accurately, 
the residuals (or non-conformity scores) are ``nearly" exchangeable. 
This is expected to be a sufficient assumption for analyzing this dataset. 
In fact, when $\epsilon_i^{(k)}$ is independent of $X_i^{(k)}$, 
some distributions of $\epsilon_i^{(k)}$ (such as the Gaussian distribution) 
admit an upper bound $TV(Y_i^{(k)}-\hat\mu_k(X_i^{(k)}), Y_i^{(k)}-\mu_k^*(X_i^{(k)}))\le C \mathbb{E}_{X_i^{(k)}}[|\hat\mu_k(X_i^{(k)})-\mu_k^*(X_i^{(k)}))|]$. 
Hence, if we estimate $\mu_k^*$ accurately, then the empirical residuals $Y_i^{(k)}-\hat\mu_k(X_i^{(k)})$ will be approximately exchangeable.

We remark that \cite{dunn2022distribution} also make a similar exchangeability assumption. 
However, they train a common $\hat\mu$ 
for
 all branches. 
For a given $k\in [18]$, the residuals $|Y_i^{(k)}-\hat\mu(X_i^{(k)})|,i\in [9]$ may not necessarily be exchangeable, since $\hat\mu$ could be very different from $\mu_k^*$ and thus these residuals can be strongly affected by $X_i^{(k)}$.

\subsubsection{Additional Empirical Data Plots}
In this section, we present the additional plots on the length of prediction sets and coverage probability via various methods with level $\alpha=0.20$ in the following figure \ref{fig:cov}.
\begin{figure}[ht]
	\centering
	\includegraphics[width=1.00\textwidth]{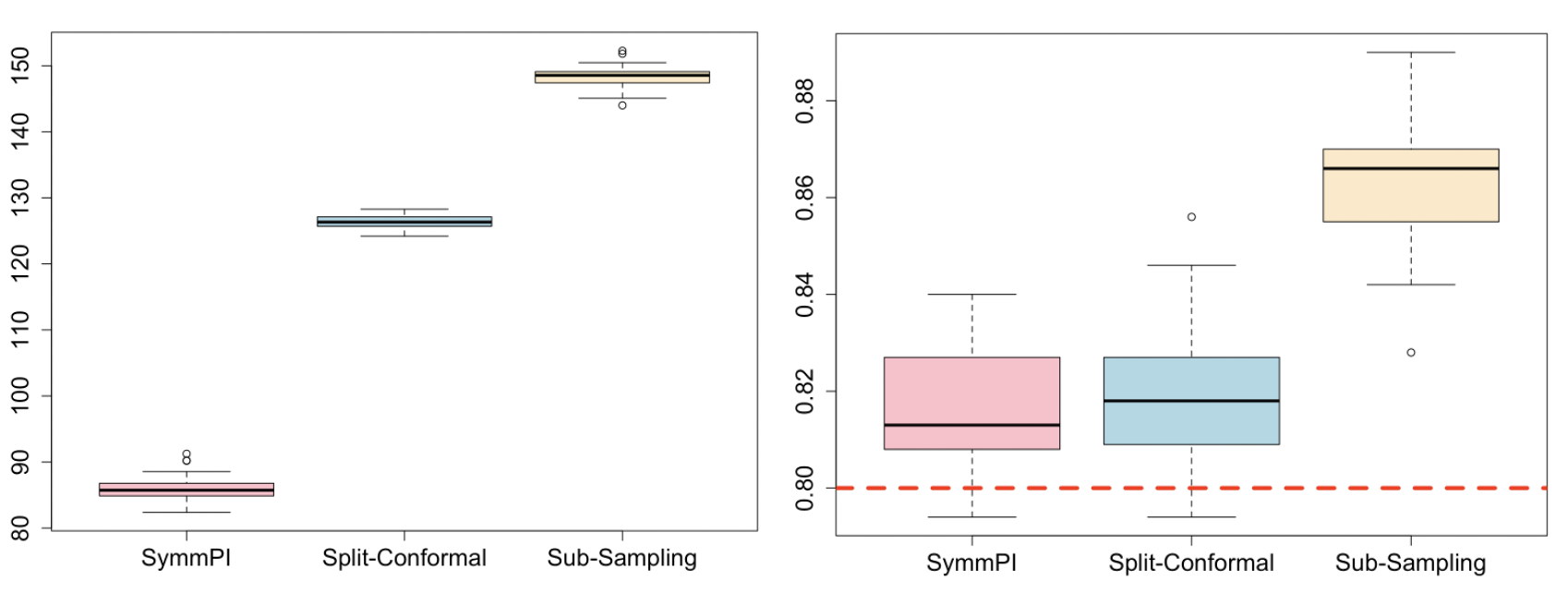}
	 \caption{Prediction set lengths and coverage probabilities for various methods with level $\alpha=0.20$.  Left: Prediction set lengths; Right: Empirical coverage probabilities.}
           \label{fig:cov}
\end{figure}


\section{Proofs}
\subsection{Proof of Theorem \ref{c1}}
\label{pfc1}

\begin{proof}
Since $\o(z)= \zo$ by definition,
\begin{align}\label{dp}
P(Z\in T_r(\Zo ))
&=
P(
\psi(V(Z)) < t_{V(Z)}
\textnormal{ or }
\psi(V(Z))=t_{V(Z)},\,
U'  < \delta_{V(Z)}).
\end{align}
Now, for all $\tz \in \tmZ$, by the definitions of $t$ from \eqref{t}
and $\delta$ from \eqref{del},
we have
\begin{align*}
P_G (\psi(\trG  \tz) < t_{\tz}
\textnormal{ or }
\psi(\trG\tz)=t_{\tz},\,
U'  < \delta_{\tz}) 
= F^-_{\tz}(t_{\tz})  + F'_{\tz}(t_{\tz}) \delta_{\tz} 
= 1-\alpha.
\end{align*}
Hence, letting $\tz = V(z)$, we have
for all $z\in \mZ$ that
$ P_G (\psi(\trG  V(z)) < t_{V(z)}
\textnormal{ or }
\psi(\trG V(z))=t_{V(z)},\,
U'  < \delta_{V(z)}) = 1-\alpha$.
Therefore, 
using that
$Z$ is $\mG$-distributionally-invariant
and
$t$ is $\mG$-invariant,
\eqref{dp} equals
\begin{align*}
& P_{G,Z}(
\psi(\trG V(Z)) < t_{\trG V(Z)}
\textnormal{ or }
\psi(\trG V(Z))=t_{\trG V(Z)},\,
U'  < \delta_{\trG V(Z)})\\
&= P_{G,Z}(
\psi(\trG V(Z)) < t_{V(Z)}
\textnormal{ or }
\psi(\trG V(Z))=t_{V(Z)},\,
U'  < \delta_{V(Z)})\\
&=
\E_{Z}[P_{G}
(\psi(\trG V(Z)) < t_{V(Z)}
\textnormal{ or }
\psi(\trG V(Z))=t_{V(Z)},\,
U'  < \delta_{V(Z)}))]
= 1-\alpha.
\end{align*}
This proves the first relation in \eqref{lb}.
The second relation follows since $T_r(\zo) \subset T^{\mathrm\OurMethod}(\zo)$.
Next, for $\tz \in \tmZ$,
by the definitions of $t$ from \eqref{t}
and of $F'$,
$$ 
P_G (\psi(\trG  \tz) \le t_{\tz}) 
= F_{\tz}(t_{\tz}) 
\le 
1-
\alpha+ F'_{\tz}(t_{\tz}).$$
Hence, 
as above
\begin{align}
P(Z\in T^{\mathrm\OurMethod}(\Zo ))
&=
P(\psi( V(Z))\le t_{V(Z)},\, \o(Z)= \zo)
=P(\psi( V(Z))\le t_{V(Z)})\label{defp}\\
&=
\E_{Z}[P_{G}(\psi(\trG  V(Z)) \le t_{V(Z)})]
\le
1-\alpha+ \E_{Z}[F'_{V(Z)}(t_{V(Z)})],\nonumber
\end{align}
proving the third relation in \eqref{lb}.

\end{proof}

\subsection{Proof of Proposition \ref{ub}}
\label{pub}

Let $\mF_{\tz}=\{\psi(\trg \tz),\, g\in \mG\}$ be the set of values of $\psi(\trg \tz)$, $g\in \mG$.
If
$\mH_{\tz}$ is a subgroup of $\mG$, 
clearly for any $c \in \mF_{\tz}$, there is a unique coset $\mC = \mH_{\tz} \cdot g$, for some $g \in \mG$, such that for all $g' \in \mC$, $\psi(\trgp\tz)=c$.
Thus, the size of each coset is $|\mH_{\tz}|$.
Thus, $F'_{\tz}(x) \le |\mH_{\tz}|/|\mG|$ for all $x\in \R$, 
which implies that 
$    P(Z\in T^{\mathrm\OurMethod}(\Zo ))\le 1-\alpha+\E|\mH_{V(Z)}|/|\mG|.$
Moreover, if $\mH_{\tz'}$ 
does not depend on $\tz'$, 
then, $\mH$ is clearly a group.
In this case, 
$F'_{V(Z)}(x) \le |\mH|/|\mG|$ for all $x\in \R$,
almost surely for $Z\sim P$;
and hence
$    P(Z\in T^{\mathrm\OurMethod}(\Zo ))\le 1-\alpha+|\mH|/|\mG|.$

\subsection{Proof of Proposition \ref{sample_g}}
\label{pfsample_g}
The proof of Proposition \ref{sample_g} is derived using the same method as outlined in Theorem 4.2 of \cite{dobriban2023joint}. Therefore, we omit the details.

\subsection{Proof of Theorem \ref{thm_shift}}
\label{pfthm_shift}

\begin{proof}
By \eqref{defp},
we have
\begin{align*}
&P_Z \Big(Z\in T^{\mathrm\OurMethod}(\Zo )\Big) 
=\bigg\{P_Z \Big(\psi( V(Z)) \le t_{V(Z)}\Big)-P_{G,Z}\Big(\psi(\trG  V(Z))  \le t_{V(Z)}\Big)\bigg\}\\
&+P_{G,Z}\Big(\psi(\trG  V(Z))  \le t_{V(Z)}\Big)
=\mathbf{(i)}+\mathbf{(ii)}.
\end{align*}

For the first term, by definition,
\begin{align*}
|\mathbf{(i)}|&= \Bigg|\int_{G}\int_{Z} 
\bigg[I\Big( \psi(V(Z))\le t_{V(Z)} \Big)- I\Big( \psi(\tilde{\rho}(G)V(Z))\le t_{V(Z)} \Big) \bigg]
dP(Z) dP(G)\Bigg|
\\& =\Bigg|\int_{Z} 
\E_{G'\sim U(\mG/\mH)}
\bigg[I\Big( \psi(V(Z))\le t_{V(Z)} \Big)- I\Big( \psi(\tilde{\rho}(G')V(Z))\le t_{V(Z)} \Big)\bigg] dP(Z)\Bigg| \le \Delta.
\end{align*}
In the last line, we have used that 
$I\Big( \psi(V(z))\le t_{V(z)} \Big) = I(\nu(\tz)\le 0)$
and
$I\Big( \psi(\tilde{\rho}(G')V(z))\le t_{V(z)} \Big) = I(\nu(\tilde{\rho}(G')\tz)\le 0)$; which also uses that $t$ is $\mG$-invariant.

For the second term, by the definition of $t_{\tz}$ from \eqref{t}, we have $\mathbf{(ii)}\le \alpha.$
Combining the upper bounds on $\mathbf{(i)}$ and $\mathbf{(ii)}$, we obtain the lower bound in \eqref{ds}.
Next, 
following the same proof procedure as in Theorem \ref{c1}, for term $\mathbf{(ii)}$, we obtain that
\begin{align*}
    P_{G,Z}\Big(\psi(\trG  V(Z))  \le t_{V(Z)}\Big)\le 1-\alpha+\E[F'_{V(Z)}(t_{V(Z)})].
\end{align*}
Combining this with the upper bound of $\mathbf{(i)}$, 
the upper bound in \eqref{ds} follows.
\end{proof}

\subsection{Proof of Propositions \ref{meta_unsup} and \ref{meta_sup}}
\label{pfm}
The proof of Propositions \ref{meta_unsup} and \ref{meta_sup} follows directly from our construction of the deterministic equivariant massage-passing graph neural network and the distributional invariance of the input variable $Z$. Therefore, we omit the details.

\subsection{Proof of Theorem \ref{prop_random}}\label{app: prop_random_proof}
\begin{proof}
Due to the data-generating process and the definition of $t_{\tz}$ in \eqref{quantile_random}, 
for the given $\vec n=(n_1,\ldots,n_K)^\top $, 
it holds that $t_{\tz}=t_{g_{\vec n}\tz}$, for any 
$g_{\vec n} \in \mG_{\vec n}=\S_{n_1}\otimes\S_{n_2}\otimes \ldots \otimes \S_{n_K}$.
According the definition of $\mathcal{G}_{\vec n}$, and $t_{(\cdot)}$, 
recalling $\psi_{\vec n}(\tz)=\tz_{n_K}^{(K)}$ for all $\tz\in \tmZ$,
it holds 
conditionally on  $\vec n$
that 
\begin{align*}
&P_{Z} \Big(\psi_{\vec n}(V(Z)) >  t_{V(Z)}\Big) 
= P_{G_{\vec n},Z}\Big(\psi_{\vec n}(G_{\vec n} \cdot V(Z))  >  t_{V(Z)}\Big)\\
&\qquad
=\E_{Z}\E_{G_{\vec n}}\Big[I\Big(\psi_{\vec n}(G_{\vec n}\cdot  V(Z))>t_{V(Z)}\Big)\Big]
=
\E_{Z}\bigg[\frac{1}{n_K}\sum_{j=1}^{n_K}I([V(Z)]_{j}^{(K)} > t_{V(Z)})\bigg].
\end{align*}

We next define $\tilde{\psi}(x)=e_{K}^\top x$, 
where $e_K\in \mathbb{R}^K$ with the $K$-th entry being unity and others being zero, 
for all $x\in \mathbb{R}^K$. 
Furthermore, for the given $n$, we let $h:\mathcal{Z}^*\rightarrow \in \mathbb{R}^K$ by defining its $m$-th output entry as $[h(Z_1,\ldots,Z_K)]_m=\frac{1}{n_m}\sum_{j=1}^{n_m}I(V(Z_{j}^{(m)})>t_{V(Z)}), m\in [K]$.

Next, we take the randomness of $Z,\vec N$ into consideration. 
By the exchangeability of $(Z_1,\ldots,Z_K)^\top$ and the definition of $t_{(\cdot)}$, 
it holds that $h(Z_1,\ldots, Z_K)$ $=_d G' h(Z_1,\ldots,Z_K)$ for $G'\sim \mathrm{Unif}(\S_K)$. 
Therefore, according to the definition of $\tilde{\psi}$ and $h(\cdot)$, it holds that 
\begin{align*}
\E_{Z}\bigg[\frac{1}{N_K}\sum_{j=1}^{N_K}I([V(Z)]_{j}^{(K)} > t_{V(Z)})\bigg]&=\E_{Z}\bigg[\tilde{\psi}(h(Z))\bigg]
=\E_{Z,G'}\bigg[\tilde{\psi}(G'\cdot h(Z))\bigg]
\\&= \E_{Z}\bigg[\frac{1}{K}\sum_{k=1}^{K}\frac{1}{N_k}\sum_{j=1}^{N_k}I([V(Z)]_{j}^{(k)} >  t_{V(Z)})\bigg]
\le \alpha.
\end{align*}

Therefore, we obtain
\begin{align*}
P\Big(Z\in T^{\mathrm\OurMethod}(\Zo,\vec N)\Big)
&=
P_{N,Z}(\psi_{\vec N}(V(Z))\le t_{V(Z)},\, \o(Z)= \Zo)\ge 1- \alpha.
\end{align*}
The upper bound can be proved in a similar way as the corresponding upper bound in the proof of Theorem \ref{c1}, and we omit the details. 
\end{proof}

\subsection{Proof of Theorem \ref{thm_asy}}
\label{pfthm_asy}

By the definition of the prediction set in \eqref{non-asym-alg}, we have 
\begin{align*}
&P\bigg(\psi(\tir(G)V(\rho^{-1}(G)Z)> Q_{1-\alpha}\Big(\sum_{j=1}^{|\mathcal{F}|}w_j\delta_{\psi_j(V(\rho^{-1}(G)Z))}\Big) \bigg)\\&\qquad=\sum_{i=1}^{|\mathcal{F}|}w_i P\bigg(\psi(\tilde{\rho}(G)V(\rho^{-1}(G)Z)> Q_{1-\alpha}\Big(\sum_{j=1}^{|\mathcal{F}|}w_j\delta_{\psi_j(V(\rho^{-1}(G)Z))}\Big)\Big{|}G=g_i\bigg).
\end{align*}
Since $G$ is independent of $Z$, we obtain
\begin{align*}
&\sum_{i=1}^{|\mathcal{F}|}w_i P\bigg(\psi(\rho(G)V(\rho^{-1}(G)Z)> Q_{1-\alpha}\Big(\sum_{j=1}^{|\mathcal{F}|}w_j\delta_{\psi_j(V(\rho^{-1}(G)Z))}\Big)\Big{|}G=g_i\bigg)\\&\qquad =\sum_{i=1}^{|\mathcal{F}|}w_i P\bigg(\psi_i(V(\rho^{-1}(g_i)Z)> Q_{1-\alpha}\Big(\sum_{j=1}^{|\mathcal{F}|}w_j\delta_{\psi_j(V(\rho^{-1}(g_i)Z))}\Big)\bigg)
\\ &\qquad\le \sum_{i=1}^{|\mathcal{F}|}w_i P\bigg(\psi_i(V(Z))> Q_{1-\alpha}\Big(\sum_{j=1}^{|\mathcal{F}|}w_j\delta_{\psi_j(V(Z))}\Big)\bigg)+\sum_{i=1}^{|\mathcal{F}|}w_i 
 \mathrm{TV}\big(\nu_i(V(\rho^{-1}(g_i)Z)),\nu_i(V(Z))\big),
\end{align*}
where $\nu_i(x)=\psi_i(x)-Q_{1-\alpha}(\sum_{j=1}^{|\mathcal{F}|}w_j\delta_{\psi_j(x)}).$ 
The inequality above holds since we can write $$P\bigg(\psi_i(V(Z))> Q_{1-\alpha}\Big(\sum_{j=1}^{|\mathcal{F}|}w_j\delta_{\psi_j(V(Z))}\Big)\bigg)=I\big(\nu_i(V(Z))>0\big)$$ and $$I\bigg(\psi_i(V(\rho^{-1}(g_i)Z)> Q_{1-\alpha}\Big(\sum_{j=1}^{|\mathcal{F}|}w_j\delta_{\psi_j(V(\rho^{-1}(g_i)Z))}\Big)\bigg)=I\big(\nu_i(V(\rho^{-1}(g_i)Z))>0\big).$$

Subsequently, we prove upper and lower bounds for term $$\sum_{i=1}^{|\mathcal{F}|}w_i P\bigg(\psi_i(V(Z))> Q_{1-\alpha}\Big(\sum_{j=1}^{|\mathcal{F}|}w_j\delta_{\psi_j(V(Z))}\Big)\bigg).$$
For the upper bound, this term is at most $\alpha$ by definition.
For the lower bound,
following an argument similar to the proof of
Theorem \ref{c1}, we obtain
$$\sum_{i=1}^{|\mathcal{F}|}w_i P\bigg(\psi_i(V(Z))\le  Q_{1-\alpha}\Big(\sum_{j=1}^{|\mathcal{F}|}w_j\delta_{\psi_j(V(Z))}\Big)\bigg)\le 1-\alpha+\E_Z[F^{w'}_{V(Z)}(t^w_{V(Z)})].$$
We then conclude the proof.

\end{document}